\DeclareMathOperator{\tlU}{\ensuremath\mathbf{U}}
\DeclareMathOperator{\tlX}{\ensuremath\mathbf{X}}
\DeclareMathOperator{\tlR}{\ensuremath\mathbf{R}}
\DeclareMathOperator{\tlF}{\ensuremath\mathbf{F}}
\DeclareMathOperator{\tlG}{\ensuremath\mathbf{G}}
\DeclareMathOperator{\tlWeakX}{\ensuremath\mathbf{\overline{X}}}
\newdefinition{definition}{Definition}
\newdefinition{example}{Example}
\newtheoremstyle{bfnote}%
	{}{}%
	{\itshape}{}%
	{\bfseries}{.}%
	{ }%
	{\thmname{#1}\thmnumber{ #2}\thmnote{ (#3)}}
\theoremstyle{bfnote}
\newtheorem{corollary}{Corollary}
\newtheorem{lemma}{Lemma}
\newtheorem{theorem}{Theorem}
\newcommand{\AP}{\mathcal{AP}} 
\newcommand{\sat}{\models}
\newcommand{\ltlf}{LTL$_f$\xspace}
\newcommand{\den}[1]{\llbracket #1 \rrbracket}
\newcommand{\nats}{\mathbb{N}}
\newcommand{\logictrue}{\mathit{true}}
\newcommand{\logicfalse}{\mathit{false}}
\newcommand{\defeq}{\triangleq}
\begin{document}

\begin{frontmatter}
\title{A Tableau Construction for Finite Linear-Time Temporal Logic\tnoteref{t1}}
\tnotetext[t1]{
	Work supported by US National Science Foundation Grant CNS-1446365 and US Office of Naval Research Grant N000141712622.  
}
\author{Samuel Huang}
  \ead{srhuang@cs.umd.edu}
\author{Rance Cleaveland\corref{cor1}}
  \ead{rance@cs.umd.edu}
\address{Department of Computer Science, University of Maryland, College Park, Maryland 20740, USA}
\cortext[cor1]{Corresponding author}

\begin{abstract}
  This paper describes a method for converting formulas in finite propositional linear-time temporal logic (Finite LTL) into finite-state automata whose languages are the models of the given formula.  Finite LTL differs from traditional LTL in that formulas are interpreted with respect to finite, rather than infinite, sequences of states; this fact means that traditional finite-state automata, rather than $\omega$-automata such as those developed by B\"uchi and others, suffice for recognizing models of such formulas.  The approach considered is based on well-known tableau-construction techniques developed for LTL, which we adapt here for the setting of Finite LTL.  The resulting automata may be used as a basis for model checking, satisfiability testing, and model synthesis.
\end{abstract}

\end{frontmatter}

\section{Introduction}

Since its introduction into Computer Science by Amir Pnueli in a landmark paper~\cite{pnueli:sfcs77},
propositional linear temporal logic, or LTL, has played a prominent role as a specification formalism for discrete systems.  LTL includes constructs for describing how a system's state may change over time; this fact, coupled with its simplicity and the decidability properties it enjoys for both model checking and satisfiability checking, have made it an appealing framework for research into system verification and analysis.  The logic has also served as a springboard for the study of other temporal logics in computing, including ones that incorporate branching time~\cite{emerson1986sometimes}, fixpoints~\cite{kozen1983results}, real time~\cite{alur1993model} and probability~\cite{aziz1995usually}.

Traditional LTL formulas are interpreted with respect to infinite sequences of states, where each state assigns a truth value to the atomic propositions appearing in the formulas.  Such infinite sequences are intended to be viewed as runs of a system, with each state representing a snapshot of the system as it executes.  Traditional LTL has thus been used for defining properties of systems, such as servers and operating systems, that should not terminate and therefore whose executions are naturally viewed as infinite.  For systems whose executions are finite, alterations must be made, either in the approach for system modeling or in the variant or fragment of LTL used, to accommodate such behavior.  In the case of modeling, for example, a common practice is to extend finite executions ending in a terminal state with an infinite suffix of that state.  In the case of robotic-path planning, where LTL specifications are increasingly being used to describe temporal constraints that plans, which are finite-length, should satisfy,
fragments of LTL with a natural finite-sequence interpretation, such as the Generalized Reactivity(1) subset, are used for expressing these constraints~\cite{kress2009temporal}.

This semantic mismatch between traditional LTL and systems whose executions are finite has also led to the study of \emph{finite} variants of LTL, whose formulas are explicitly interpreted with respect to finite, rather than infinite, sequences of states.  Besides the examples cited above, domains as varied as general path planning~\cite{gerevini2009deterministic}, business-process specification~\cite{maggi2011runtime} and automated run-time monitoring~\cite{bauer2010comparing} have used such variants of LTL to precisely specify the desired behavior of systems.  This interest is motivation for the study of decision procedures, including model checking (does a system satisfy a formula?) and synthesis / satisfiability checking (is a formula satisfiable, and if so, can a satisfying sequence be computed automatically?), for such logics.

The purpose of this paper is to develop a construction for generating finite-state automata from a specific finite variant of LTL, which we call \emph{Finite LTL}, that accept exactly the models, or ``satisfying state sequences,'' of a given formula.  Such automata have natural applications in model checking, satisfiability checking, and synthesis of sequences satisfying a given Finite LTL formula.  Our approach relies on adapting the so-called 
\emph{tableau} construction of LTL~\cite{wolper1985tableau} to the setting of Finite LTL.  Tableau-based constructions for traditional LTL (e.g. \cite{courveur:fm1999}) have played a pivotal role in practical techniques for both model checking and satisfiability checking.   These constructions generally work by identifying states in the automaton under construction with certain sets of subformulas of the formula for which the automaton is being built.  In addition to the advantages that working with automata confers to the study of decision procedures of the logics in question, the tight linkage between formulas and states also enables certain practical advantages, including formula debugging, as the automaton allows one to interactively ``execute'' the formula.  It also enables other analyses to be undertaken, such as LTL query checking~\cite{chockler:hsvt2011,huang:avocs2017,huang2020arxiv-qc}, that can benefit from the connection between formulas and automata states.
To the best of our knowledge, no direct tableau construction has been given in the literature for a version of LTL interpreted over finite sequences, although another automaton construction has been defined for a related logic, \ltlf~\cite{degiacomo:ijcai2013}.  That technique relies on translation of \ltlf formulas into another logic, Linear Dynamic Logic interpreted over finite sequences, from which alternating finite automata~\cite{ashok1981alternation} are extracted that may then be converted into traditional finite automata.
  
Our construction exploits specific features of Finite LTL to simplify the traditional tableau constructions found for classical LTL.  In particular, it relies on semantics-preserving syntactic formula transformations to define the formulas associated with states, and the acceptance condition for the resulting automaton can be computed purely on the basis of the syntax of the formulas associated with each state.

The remainder of the paper is organized as follows. Section 2 reviews existing literature related to finite semantics of LTL and provides contrasting points and motivation for our own work.  Section 3 introduces the syntax and semantics of Finite LTL, discusses some nuances of the language, and shows that our logic is strictly more expressive than \ltlf, the finite version of LTL in~\cite{degiacomo:ijcai2013}.  Section 4 defines several syntactic normal forms into which Finite LTL formulas may be translated; these are used in the automaton construction presented in Section 5, which also briefly compares our approach to the construction in~\cite{degiacomo:ijcai2013}.  Section 6 describes our implementation of the tableau construction and the results of an empirical case study conducted using an existing benchmark from the LTL literature~\cite{duret.14.ijccbs}.  Finally, Section 7 concludes the paper.

\section{Background}
This section reviews existing work on the use of finite versions of LTL.  There have been a number of different LTL variants intended to bridge the gap between finite-length real-world data sequences and LTL \emph{vis \`a vis} infinite sequences in the formal-verification community.  Some have arisen with an intended application in mind (i.e. planning/robotics), while others are based on more foundational concerns.  As such, different works have yielded several variants of so-called ``finite LTL.'' We review some of these below.

De Giacomo and Vardi \cite{degiacomo:ijcai2013} provide a detailed complexity analysis of their finite LTL logic \ltlf.  They also devise a logic inspired by Propositional Dynamic Logic, Linear Dynamic Logic (LDL) over finite traces, into which \ltlf can be translated in linear time.  They show for both logics that determining satisfiability of a formula is PSPACE-complete and give a construction for converting LDL into alternating finite automata~\cite{ashok1981alternation}.  In another paper, De Giacomo et al.~\cite{degiacomo:aaai2014} focus on the interplay between finite and infinite LTL.  They particularly address some risks of directly transferring approaches from the infinite to the finite case.  They also formalize when an \ltlf formula is \emph{insensitive to infiniteness}, which holds when there exists an traditional LTL formula that is satisfied an infinite sequence exactly when the sequence is an  appropriately extended version of a finite sequence satisfying the \ltlf formula. It is also shown that this property is decidable for \ltlf formulas.

Li et al.~\cite{li:arxiv2014} apply transition systems to \ltlf satisfiability.  They present a recursive construction for a normal form of an \ltlf formula.  However, there are minor technical issues in their treatment of duality.  Ro{\c{s}}u \cite{rosu:rv2016} poses a sound and complete proof system for his version of finite-trace temporal logic.

Fionda and Greco \cite{fionda:aaai2016} investigate the complexity of satisfiability for restricted fragments of LTL over finite semantics.  They also provide an implementation of a finite LTL reasoner thatutilizes their complexity analysis to identify when satisfiability for a specified formula is computable in polynomial time, and performs the computation in the case; otherwise they convert the input formula to a SAT representation and invoke a SAT solver.  Notably, their maximal fragment does not include the dual operator of \emph{Next} or any binary operator within a temporal modality such as \emph{Until}, and only permits negation to be applied to atomic propositions.

A recent paper by Li et al.~\cite{li:cav2019} addresses Mission-Time LTL (MLTL), an LTL-based logic with time intervals supported for the \emph{Until} and \emph{Release} operators.  They develop a satisfiability checking tool for MLTL by first using a novel transformation from MLTL to LTL/LTL$_{f}$ and then turning this resulting formula into a SAT instance.  The authors also observe that there is a need for more solvers of these related languages.  This observation is an additional motivation for the work in this paper.  

In this paper, we have adopted a version of Finite LTL that is similar to that as used in~\cite{degiacomo:aaai2014,degiacomo:ijcai2013}.  Our choice is based firstly on a desire to appeal to the mature and well-studied similarities this representation has to the case of standard (infinite) LTL, and secondly to facilitate supplementing the logic to support the task of query checking~\cite{chan:cav2000} for finite sets of finite data streams.

\section{Finite LTL}

This section introduces the specific syntax and semantics of Finite LTL and compares its expressiveness with \ltlf~\cite{degiacomo:ijcai2013}. In what follows, fix a (nonempty) finite set $\AP$ of atomic propositions.

\subsection{Syntax of Finite LTL}

\begin{definition}[Finite LTL Syntax]\label{def:ltl-syntax}
The set of Finite LTL formulas is defined by the following grammar, where $a \in \AP$.
$$
\phi ::=  a \mid \lnot \phi \mid \phi_{1} \land \phi_{2} \mid \tlX \phi \mid \phi_{1} \tlU \phi_{2}
$$
We call the operators $\lnot$ and $\land$ \emph{propositional} and $\tlX$ and $\tlU$ \emph{modal}.
We use $\Phi^{\AP}$ to refer to the set of all Finite LTL formulas and $\Gamma^{\AP} \subsetneq \Phi^{\AP}$ for the set of all \emph{propositional} Finite LTL formulas, i.e.\/ those containing no modal operators.  We often write $\Phi$ and $\Gamma$ instead of $\Phi^{\AP}$ and $\Gamma^{\AP}$ when $\AP$ is clear from context.
\end{definition}

Finite LTL formulas may be constructed from atomic propositions using the traditional propositional operators $\lnot$ and $\land$, as well as the modalities of ``next'' ($\tlX$) and ``until'' ($\tlU$).
We also use the following derived notations:
\begin{align*}
\logicfalse	
&\defeq a \land \lnot a \\
\logictrue 	
&\defeq \logicfalse\\
\phi_1 \lor \phi_2
&\defeq \lnot ((\lnot \phi_1) \land (\lnot \phi_2))\\
\phi_1 \tlR \phi_2
&\defeq \lnot ((\lnot \phi_1) \tlU (\lnot \phi_1))\\
\tlWeakX \phi
&\defeq \lnot \tlX (\lnot \phi)\\
\tlF \phi
&\defeq \logictrue \tlU \phi\\
\tlG \phi
&\defeq \lnot \tlF (\lnot \phi)
\end{align*}

\noindent
The constants $\logicfalse$ and $\logictrue$, and the operators $\land$ and $\lor$, and $\tlU$ and $\tlR$, are duals in the usual logical sense, with $\tlR$ sometimes referred to as the ``release'' operator.  We introduce $\tlWeakX$ (``weak next'') as the dual for $\tlX$.  That this operator is needed is due to the semantic interpretation of Finite LTL with respect to finite sequences, which means that, in contrast to regular LTL, $\tlX$ is not its own dual.  This point is elaborated on later.  Finally, the duals $\tlF$ and $\tlG$ capture the usual notions of ``eventually'' and ``always'', respectively.

\subsection{Semantics of Finite LTL}

The semantics of Finite LTL is formalized as relation $\pi \sat \phi$, where $\pi \in (2^{\AP})^*$ is a finite sequence whose elements are subsets of $\AP$.  Such a subset $A \subseteq \AP$ represents a state $\sigma_A \in \AP \rightarrow \{0,1\}$, or assignment of truth values to atomic propositions, in the usual fashion:  $\sigma_A(a) = 1$ if $a \in A$, and $\sigma_A(a) = 0$ if $a \not\in A$.  We first introduce some notation on finite sequences.

\begin{definition}[Finite-Sequence Notation]
Let $X$ be a set, with $X^*$ the set of finite sequences of elements of $X$.  Also assume that $\pi \in X^*$ has form $x_0\ldots x_{n-1}$ for some $n \in \nats = \{0, 1, \ldots\}$. We define the following.
\begin{enumerate}
\item $\varepsilon \in X^*$ is the \emph{empty sequence}.
\item $|\pi| = n$ is the \emph{length} of $\pi$.  Note that $| \varepsilon | = 0$.
\item For $i \in \nats$, $\pi_i = x_i \in X$ provided $i < |\pi| = n$, and is undefined otherwise.
\item For $i \in \nats$, the \emph{suffix}, $\pi(i)$, of $\pi$ beginning at $i$ is taken to be $\pi(i) = x_i \ldots x_{n-1} \in X^*$, provided $i \leq |\pi| = n$ and is undefined otherwise.  Note that $\pi(0) = \pi$ and that $\pi(|\pi|) = \varepsilon$.
\item If $x \in X$ and $\pi \in X^*$ then $x \cdot \pi \in X^*$ is the sequence such that $(x \cdot \pi)_0 = x$ and $(x \cdot \pi)(1) = \pi$.  We often omit the $\cdot$ and write $x\pi$ rather than $x \cdot \pi$.
\end{enumerate}
\end{definition}

\begin{definition}[Finite LTL Semantics]\label{finite-ltl-semantics}
Let $\phi$ be a Finite LTL formula, and let $\pi \in (2^\AP)^*$.  Then the satisfaction relation, $\pi \sat \phi$, for Finite LTL is defined inductively on the structure of $\phi$ as follows.
\begin{itemize}
\item $\pi \sat a$ iff $|\pi| \geq 1$ and $a \in \pi_0$
\item $\pi \sat \lnot \phi$ iff $\pi \not\sat \phi$
\item $\pi \sat \phi_1 \land \phi_2$ iff $\pi \sat \phi_1$ and $\pi \sat \phi_2$
\item $\pi \sat \tlX \phi$ iff $|\pi| \geq 1$ and $\pi(1) \sat \phi$
\item $\pi \sat \phi_1 \tlU \phi_2$ iff $\exists j \colon 0 \leq j \leq |\pi| \colon \pi(j) \sat \phi_2$ and $\forall k \colon 0 \leq k < j \colon \pi(k) \sat \phi_1$
\end{itemize}
We write $\den{\phi}$ for the set $\{ \pi \mid \pi \sat \phi \}$.  We also say that $\phi_1$ and $\phi_2$ are \emph{logically equivalent}, notation $\phi_1 \equiv \phi_2$, if $\den{\phi_1} = \den{\phi_2}$.
\end{definition}

Intuitively, $\pi$ can be seen as an execution sequence of a system, with $\pi_0$, if it exists, taken to be the current state and $\pi_i$ for $i > 0$, if it exists, referring to the state $i$ time steps in the future.  In this interpretation $\varepsilon$ can be seen as representing an execution of a parameterized system whose initial state has not yet been configured.  Then $\pi \sat \phi$ holds if the sequence $\pi$ satisfies $\phi$.  Formula $a \in \AP$ can only be satisfied by non-empty $\pi$, as the presence or absence of $a$ in the first state $\pi_0$ contained in $\pi$ is used to determine whether $a$ is true ($a \in \pi_0$) or not ($a \not\in \pi_0$).  Negation  and conjunction are defined as usual.  A sequence $\pi$ satisfies $\tlX$ iff it is non-empty (and thus has a notion of ``next'') and the suffix of $\pi$ beginning after $\pi_0$ satisfies $\phi$.  Finally, $\tlU$ captures a notion of \emph{until}:  $\pi$ satisfies $\phi_1 \tlU \phi_2$ when it has a suffix satisfying $\phi_2$ and every suffix of $\pi$ that strictly includes this suffix satisfies $\phi_1$.

\subsection{Properties of Finite LTL}

Despite the close similarity of Finite LTL and LTL, the former nevertheless possesses certain semantic subtleties that we address in this section.  Many of these aspects of the logic have to do with properties of $\varepsilon$, the empty sequence, as a potential model of formulas.  (Indeed, other finite versions of LTL explicitly exclude non-empty sequences as possible models.)  The inclusion of $\varepsilon$ as a possible model for formulas simplifies the tableau construction given later in this paper, however; indeed, the definition of the acceptance condition that we give demands it.  Accordingly, this section also shows how the possibly counter-intuitive features of Finite LTL can be addressed with proper encodings.

\paragraph{$\tlX$ is not self-dual} 
In traditional LTL the $\tlX$ operator is \emph{self-dual}.  That is,  for any $\phi$, $\tlX \phi$ and $\lnot \tlX (\lnot \phi)$ are logically equivalent.  This fact simplifies the treatment of notions such as positive normal form, since no new operator needs to be introduced for the dual of $\tlX$.

In Finite LTL $\tlX$ does not have this property.  To see why, consider the formula $\tlX \logictrue$.  If $\tlX$ were self-dual then we should have  that $\tlX \logictrue \equiv \lnot(\tlX \lnot \logictrue)$, i.e.\/ that $\den{\tlX \logictrue} = \den{\lnot \tlX (\lnot \logictrue)}$.  However this fact does not hold.  Consider $\den{\tlX \logictrue}$.  Based on the semantics of Finite LTL$, \pi \sat \tlX \logictrue$ iff $|\pi| \geq 1$ and $\pi(1) \sat \logictrue$.  Since any sequence satisfies $\logictrue$, it therefore follows that
$$
\den{\tlX \logictrue} = \{ \pi \in (2^\AP)^* \mid |\pi| \geq 1 \};
$$
note that $\varepsilon \not\in \den{\tlX \logictrue}$.  Now consider $\den{\lnot \tlX \lnot \logictrue}$.  From the semantics of Finite LTL one can see that $\den{\lnot \logictrue} = \emptyset = \den{\tlX \lnot \logictrue}$.  It then follows that
$$
\den{\lnot \tlX \lnot \logictrue} = (2^\AP)^*,
$$
and thus $\varepsilon \in \den{\lnot \tlX \lnot \logictrue}$.  Consequently, $\den{\tlX \logictrue} \neq \den{\lnot \tlX \lnot\logictrue}$, and $\tlX$ is not self-dual.

The existence of duals for logical operators is used extensively in the tableau construction, so for this reason we have introduced $\tlWeakX$ as the dual for $\tlX$.  Using the semantics of $\tlX$ and $\lnot$ it can be seen that $\pi \sat \tlWeakX \phi$ iff either $|\pi| = 0$ or $\pi(1) \sat \phi$.  Indeed, $\den{\tlWeakX \phi} = \den{\tlX \phi} \cup \{ \varepsilon \}$; we sometimes refer to $\tlWeakX$ as the \emph{weak next} operator for this reason.

\paragraph{Including / excluding $\varepsilon$}
The discussion about $\tlX$ and $\tlWeakX$ above leads to the following lemma.

\begin{lemma}[Empty-sequence Formula Satisfaction]\label{lem:empty-sat}
Let $\pi \in (2^\AP)^*$.
\begin{enumerate}
\item $\pi \neq \varepsilon$ iff $\pi \sat \tlX \logictrue$.
\item $\pi = \varepsilon$ iff $\pi \sat \tlWeakX \logicfalse$.
\end{enumerate}
\end{lemma}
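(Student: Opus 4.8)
The plan is to reduce both statements to direct unfoldings of the Finite LTL semantics (Definition~\ref{finite-ltl-semantics}), after first pinning down the denotations of the derived constants $\logictrue$ and $\logicfalse$. So the first step is to observe that $\den{\logicfalse} = \emptyset$ and $\den{\logictrue} = (2^\AP)^*$. Since $\logicfalse \defeq a \land \lnot a$ for the fixed $a \in \AP$, the clauses for $\land$ and $\lnot$ give $\pi \sat \logicfalse$ iff $\pi \sat a$ and $\pi \not\sat a$, which holds for no $\pi$; and since $\logictrue \defeq \lnot \logicfalse$, the clause for $\lnot$ then yields $\pi \sat \logictrue$ for every $\pi$, including $\varepsilon$.

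For part~1, I would simply unfold $\pi \sat \tlX \logictrue$ using the semantics of $\tlX$: this holds iff $|\pi| \geq 1$ and $\pi(1) \sat \logictrue$. By the observation above the second conjunct is vacuously true, so $\pi \sat \tlX \logictrue$ iff $|\pi| \geq 1$, which by the length conventions on finite sequences (where $|\varepsilon| = 0$ and $|\pi| \geq 1$ for all nonempty $\pi$) is exactly the condition $\pi \neq \varepsilon$.

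For part~2, I would unfold the abbreviation $\tlWeakX \logicfalse \defeq \lnot \tlX (\lnot \logicfalse)$. Using $\den{\lnot \logicfalse} = \den{\logictrue} = (2^\AP)^*$ together with the clause for $\lnot$, we get $\pi \sat \tlWeakX \logicfalse$ iff $\pi \not\sat \tlX \logictrue$, which by part~1 holds iff it is not the case that $\pi \neq \varepsilon$, i.e.\ iff $\pi = \varepsilon$. Alternatively, one may invoke the identity $\den{\tlWeakX \phi} = \den{\tlX \phi} \cup \{\varepsilon\}$ already noted in the text, instantiated at $\phi = \logicfalse$, together with $\den{\tlX \logicfalse} = \emptyset$ (immediate from $\den{\logicfalse} = \emptyset$ and the $\tlX$ clause).

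There is essentially no hard step here; the argument is a routine unfolding. The only point requiring a little care is not to take $\den{\logictrue} = (2^\AP)^*$ and $\den{\logicfalse} = \emptyset$ for granted, since in Finite LTL these constants are derived rather than primitive, and one must check that $\varepsilon$ behaves as expected — which it does, because the $\land$ and $\lnot$ clauses impose no length requirement on $\pi$.
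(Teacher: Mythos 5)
Your proof is correct, and it is just a detailed spelling-out of what the paper dismisses as ``immediate from the semantics of $\tlX$ and $\tlWeakX$'': unfolding the derived constants $\logictrue$, $\logicfalse$ and the semantic clauses is exactly the intended argument. The extra care you take with $\den{\logictrue} = (2^\AP)^*$ and $\den{\logicfalse} = \emptyset$ (including at $\varepsilon$) is sound and matches the paper's intended reading of the abbreviations.
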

\begin{proof}
Immediate from the semantics of $\tlX$, $\tlWeakX$.
\end{proof}

\noindent
This lemma suggests a way for including / excluding $\varepsilon$ as a model of a formula. 

\begin{corollary}\label{cor:strengthen-weaken}
Let $\pi \in (2^\AP)^*$ and $\phi \in \Phi^\AP$.  Then the following hold.
\begin{enumerate}
\item $\pi \sat \phi \land \tlX \logictrue$ iff $\pi \sat \phi$ and $\pi \neq \varepsilon$; and
\item $\pi \sat \phi \lor \tlWeakX \logicfalse$ iff $\pi \sat \phi$ or $\pi = \varepsilon$.
\end{enumerate}
\end{corollary}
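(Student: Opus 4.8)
The plan is to derive both parts directly from Lemma~\ref{lem:empty-sat} together with the satisfaction clauses for the propositional connectives, so the whole argument is essentially bookkeeping over the semantics in \defref{finite-ltl-semantics}.

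For part~(1), I would first unfold the clause for conjunction: $\pi \sat \phi \land \tlX \logictrue$ holds exactly when $\pi \sat \phi$ and $\pi \sat \tlX \logictrue$. Then I would invoke Lemma~\ref{lem:empty-sat}(1), which says $\pi \sat \tlX \logictrue$ iff $\pi \neq \varepsilon$, and substitute. This immediately yields $\pi \sat \phi \land \tlX \logictrue$ iff $\pi \sat \phi$ and $\pi \neq \varepsilon$.

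For part~(2), the one wrinkle is that $\lor$ is a derived operator, $\phi_1 \lor \phi_2 \defeq \lnot((\lnot \phi_1) \land (\lnot \phi_2))$, so I would first record the routine fact that $\pi \sat \phi_1 \lor \phi_2$ iff $\pi \sat \phi_1$ or $\pi \sat \phi_2$, obtained by unfolding the clauses for $\lnot$ and $\land$ and applying De Morgan at the meta-level. With that in hand, $\pi \sat \phi \lor \tlWeakX \logicfalse$ iff $\pi \sat \phi$ or $\pi \sat \tlWeakX \logicfalse$, and then Lemma~\ref{lem:empty-sat}(2) (which gives $\pi \sat \tlWeakX \logicfalse$ iff $\pi = \varepsilon$) finishes the equivalence.

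I do not expect any genuine obstacle here; the only point requiring the slightest care is handling the derived connective $\lor$ rather than treating it as primitive, and making sure the meta-logical negation/disjunction manipulations line up with the object-level clauses. Everything else is a direct appeal to the preceding lemma.
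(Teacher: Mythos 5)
Your proof is correct and matches the paper's intent: the corollary is presented as an immediate consequence of Lemma~\ref{lem:empty-sat} combined with the semantic clauses for $\land$ and (derived) $\lor$, which is exactly your argument. Your extra care in unfolding the derived $\lor$ is fine but does not constitute a different approach.
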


\paragraph{Literals and $\varepsilon$}
A \emph{literal} is a formula that has form either $a$ or $\lnot a$ for $a \in \AP$.  A positive-normal-form result for a logic asserts that any formula can be converted into an equivalent one in which all negations appear only as literals.

The semantics of Finite LTL dictates that for $\pi \sat a$ to hold, where $a \in \AP$, $\pi$ must be non-empty.  Specifically, the semantics requires that $|\pi| \geq 1$ and $a \in \pi_0$.  Based on the semantics of $\lnot$, it therefore follows that $\pi \sat \lnot a$ iff \emph{either} $|\pi| = 0$ \emph{or} $a \not\in \pi_0$.  It follows that $\varepsilon \sat \lnot a$ for any $a \in \AP$.

This may seem objectionable at first glance, since $\lnot a$ can be seen as asserting that $a$ is false ``now'' (i.e. in the current state), and $\varepsilon$ has no current state.  Given the semantics of $\lnot$ in Finite LTL, however, the conclusion is unavoidable.  However, using Corollary~\ref{cor:strengthen-weaken} we can give a formula that captures what might be the desired meaning of $\lnot a$, namely, that a satisfying sequence must be non-empty.  Consider $(\lnot a) \land \tlX \logictrue$.  From the corollary, it follows that $\pi \sat (\lnot a) \land \tlX \logictrue$ iff $\pi$ is non-empty and $a \not\in \pi_0$.

The relationship between $\varepsilon$ and literals also influences the semantics of formulas involving temporal operators.  For example, consider $\tlF \lnot a$ for literal $\lnot a$, which intuitively asserts that $a$ is eventually false.  More formally, based on the definition of $\tlF$ in terms of $\tlU$ and the semantics of $\tlU$, it can be seen that $\pi \sat \tlF \lnot a$ iff there exists $i$ such that $0 \leq i \leq |\pi|$ and $\pi(i) \sat \lnot a$.  Since for any $\pi$, $\pi(|\pi|) = \varepsilon$, it therefore follows that $\pi(|\pi|) \sat \lnot a$ for any $\pi$, and thus that every $\pi$ satisfies $\pi \sat \tlF \lnot a$.  This can be seen as offending intuition.  However, Corollary~\ref{cor:strengthen-weaken} again offers a helpful encoding.  Consider the formula $\tlF((\lnot a) \land \tlX \logictrue)$.  It can be seen that $\pi \sat \tlF((\lnot a) \land \tlX \logictrue)$ iff there is an $i$ such that $0 \leq i < |\pi|$ and $\pi(i) \sat \lnot a$, meaning that there must exist an $i$ such that $a \not\in \pi_i$.

A similar observation highlights a subtlety in the formula $\tlG a$ when $a \in \AP$.  It can be seen that $\pi \sat \tlG a$ iff for all $i$ such that $0 \leq i \leq |\pi|$, $\pi(i) \sat a$.  Since $\pi(|\pi|) = \varepsilon$ and $\varepsilon \not\sat a$, it therefore follows that $\tlG a$ is unsatisfiable.  This also seems objectionable, although Corollary~\ref{cor:strengthen-weaken} again offers a workaround.  Consider $\tlG (a \lor \tlWeakX \logicfalse)$.  In this case $\pi(|\pi|) \sat a \lor \tlWeakX \logicfalse$, and for all $i$ such that $0 \leq i < |\pi|$, $\pi(i) \sat a$ iff $a \in \pi_i$.  This formula captures the intuition that for $\pi$ to satisfy $a$, $a$ must be satisfied in every subset of $\AP$ in $\pi$.

\paragraph{Propositional formulas}
We close this discussion of the properties of Finite LTL with a study of the semantics of propositional formulas (i.e.\/ those not involving any propositional operators) in Finite LTL.  Later in this paper we rely extensively on traditional identities of propositional formulas, including De Morgan's Laws and distributivity, and the associated normal forms --- positive normal form and disjunctive normal form in particular --- that they enable.  In what follows we show that for the set $\Gamma^\AP$ of propositional formulas in Finite LTL, logical equivalence coincides with traditional propositional equivalence.  The only subtlety in establishing this claim has to do with the fact that in Finite LTL, $\varepsilon$ is allowed as a potential model.

We begin by recalling the traditional semantics of propositional formulas.

\begin{definition}[Semantics for Finite LTL Propositional Subset]\label{def:prop-sem}
Given a (finite, non-empty) set $\AP$ of atomic propositions, the \emph{propositional semantics} of formulas in  $\Gamma^\AP$ is given as a relation $\sat_p \;\subseteq 2^\AP \times \Gamma^\AP$ defined as follows.
\begin{enumerate}
\item $A \sat_p a$, where $a \in \AP$, iff $a \in A$.
\item $A \sat_p \lnot\gamma$ iff $A \not\sat_p \gamma$.
\item $A \sat_p \gamma_1 \land \gamma_2$ iff $A \sat_p \gamma_1$ and $A \sat_p \gamma_2$.
\end{enumerate}
We write $\den{\gamma}_p$ for $\{A \subseteq \AP \mid A \sat_p \gamma\}$ and $\gamma_1 \equiv_p \gamma_2$ when $\den{\gamma_1}_p = \den{\gamma_2}_p$.
\end{definition}

\noindent
Our goal is to show that for any $\gamma_1, \gamma_2 \in \Gamma^\AP$, $\gamma_1 \equiv \gamma_2$ iff $\gamma_1 \equiv_p \gamma_2$:  in other words, logical equivalence of propositional formulas in Finite LTL coincides exactly with traditional propositional logical equivalence.  In traditional LTL, this fact follows immediately from the fact that for infinite sequence $\pi$, $\pi \sat \gamma$ iff $\pi_0 \sat_p \gamma$.  In the setting of Finite LTL we have a similar result for non-empty $\pi$, but care must be taken with $\varepsilon$.

\begin{lemma}[Non-empty Sequence Propositional Satisfaction]\label{lem:non-empty-semantics}
Let $\pi \in (2^\AP)^*$ be such that $|\pi| > 0$, and let $\gamma \in \Gamma^\AP$.  Then $\pi \sat \gamma$ iff $\pi_0 \sat_p \gamma$.
\end{lemma}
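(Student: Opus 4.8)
The plan is to prove the biconditional by structural induction on the propositional formula $\gamma \in \Gamma^\AP$. Since $\Gamma^\AP$ is generated by the restricted grammar $\gamma ::= a \mid \lnot \gamma \mid \gamma_1 \land \gamma_2$ (no modal operators), there is exactly one base case and two inductive cases. In each of them the clause defining $\pi \sat \gamma$ from \defref{finite-ltl-semantics} and the clause defining $\pi_0 \sat_p \gamma$ from \defref{def:prop-sem} have the same shape, so the induction should be essentially mechanical; the whole content of the lemma lies in lining up the atomic clause.

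For the base case $\gamma = a$ with $a \in \AP$: by \defref{finite-ltl-semantics}, $\pi \sat a$ holds iff $|\pi| \geq 1$ and $a \in \pi_0$. Because we have assumed $|\pi| > 0$, the first conjunct holds automatically, so $\pi \sat a$ iff $a \in \pi_0$, which by \defref{def:prop-sem} is exactly $\pi_0 \sat_p a$. For $\gamma = \lnot\gamma'$: $\pi \sat \lnot\gamma'$ iff $\pi \not\sat \gamma'$, which by the induction hypothesis (still applicable, as $|\pi| > 0$) holds iff $\pi_0 \not\sat_p \gamma'$, i.e.\ iff $\pi_0 \sat_p \lnot\gamma'$. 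For $\gamma = \gamma_1 \land \gamma_2$: unfold both satisfaction clauses, apply the induction hypothesis to each conjunct separately, and recombine; since the conjunction clauses for $\sat$ and $\sat_p$ are identical in form, this goes through immediately.

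The only point that requires any care is the atomic case, and even there the work is trivial once one observes that the hypothesis $|\pi| > 0$ is precisely what discharges the ``$|\pi| \geq 1$'' side condition built into the semantics of atomic propositions. This is exactly the place where the restriction to non-empty sequences is used, and it is why the statement fails for $\pi = \varepsilon$ (recall from the preceding discussion that $\varepsilon \sat \lnot a$ for every $a \in \AP$, whereas there is no state $\pi_0$ at which to evaluate $\sat_p$). No genuine obstacle is anticipated beyond this bookkeeping.
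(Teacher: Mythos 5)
Your proof is correct and follows the same route the paper takes: a straightforward structural induction on $\gamma$ (the paper simply states ``follows by induction on the structure of $\gamma$''), with the hypothesis $|\pi| > 0$ discharging the $|\pi| \geq 1$ side condition in the atomic case. Your write-up supplies the details the paper omits, and the observation about where non-emptiness is used is exactly right.
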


\begin{proof}
Follows by induction on the structure of $\gamma$.
\end{proof}

\noindent
The next lemma establishes a correspondence between $\varepsilon$ satisfying propositional Finite LTL formulas and the propositional semantics of such formulas.

\begin{lemmarep}[Empty Sequence Propositional Satisfaction]\label{lem:vareps}
Let $\gamma \in \Gamma^\AP$ be a propositional formula.  Then $\varepsilon \sat \gamma$ iff $\emptyset \sat_p \gamma$.
\end{lemmarep}
\begin{proofsketch}
Follows by induction on the structure of $\gamma$.  See appendix for details.
\end{proofsketch}
\begin{proof}
The result follows by structural induction on $\gamma$.  There are three cases to consider
\begin{enumerate}
\item
$\gamma = a$ for some $a \in \AP$.  In this case $\varepsilon \not\sat a$ and $\emptyset \not\sat_p a$, so the desired bi-implication follows.
\item
$\gamma = \lnot\gamma'$ for some $\gamma' \in \Gamma^\AP$.  In this case the induction hypothesis says that $\varepsilon \sat \gamma'$ iff $\emptyset \sat_p \gamma'$.  The reasoning proceeds as follows.
\begin{align*}
\varepsilon \sat \gamma
& \;\text{iff}\; \varepsilon \sat \lnot\gamma'
&& \gamma = \lnot\gamma'
\\
& \;\text{iff}\; \varepsilon \not\sat \gamma'
&& \text{Definition of $\sat$}
\\
& \;\text{iff}\; \emptyset \not\sat_p \gamma'
&& \text{Induction hypothesis}
\\
& \;\text{iff}\; \emptyset \sat_p \lnot\gamma'
&& \text{Definition of $\sat_p$}
\\
& \;\text{iff}\; \emptyset \sat_p \gamma
&& \gamma = \lnot\gamma'
\end{align*}

\item
$\gamma = \gamma_1 \land \gamma_2$ for some $\gamma_1, \gamma_2 \in \Gamma^\AP$.  In this case the induction hypothesis guarantees the result for $\gamma_1$ and $\gamma_2$.  We reason as follows.
\begin{align*}
\varepsilon \sat \gamma
& \;\text{iff}\; \varepsilon \sat \gamma_1 \land \gamma_2
&& \gamma = \gamma_1 \land \gamma_2
\\
& \;\text{iff}\; \varepsilon \sat \gamma_1 \textnormal{ and } \varepsilon \sat \gamma_2
&& \text{Definition of $\sat$}
\\
& \;\text{iff}\; \emptyset \sat_p \gamma_1 \textnormal{ and } \emptyset \sat_p \gamma_2
&& \text{Induction hypothesis (twice)}
\\
& \;\text{iff}\; \emptyset \sat_p \gamma_1 \land \gamma_2
&& \text{Definition of $\sat_p$}
\\
& \;\text{iff}\; \emptyset \sat_p \gamma
&& \gamma =  \gamma_1 \land \gamma_2
\end{align*}\qedhere
\end{enumerate}
\end{proof}

\noindent
We can now state the main result of this section.

\begin{theoremrep}[Propositional / Finite LTL Semantic Correspondence]\label{thm:prop-logical-equivalence}
Let $\gamma_1, \gamma_2 \in \Gamma^\AP$.  Then $\gamma_1 \equiv \gamma_2$ iff $\gamma_1 \equiv_p \gamma_2$.
\end{theoremrep}
\begin{proofsketch}
Follows from Lemmas~\ref{lem:non-empty-semantics} and~\ref{lem:vareps}.  See appendix for details.
\end{proofsketch}
\begin{proof}
We break the proof into two pieces.
\begin{enumerate}
\item
Assume that $\gamma_1 \equiv \gamma_2$; we must show that $\gamma_1 \equiv_p \gamma_2$, i.e.\/ that for any $A \subseteq \AP, A \sat_p \gamma_1$ iff $A \sat_p \gamma_2$.  We reason as follows.
\begin{align*}
A \sat_p \gamma_1
& \;\text{iff}\; \pi \sat \gamma_1 \text{ all $\pi$ such that $|\pi| > 0$ and $\pi_0 = A$}
&& \text{Lemma~\ref{lem:non-empty-semantics}}
\\
& \;\text{iff}\; \pi \sat \gamma_2 \text{ all $\pi$ such that $|\pi| > 0$ and $\pi_0 = A$}
&& \gamma_1 \equiv \gamma_2
\\
& \;\text{iff}\; A \sat_p \gamma_2
&& \text{Lemma~\ref{lem:non-empty-semantics}}
\end{align*}

\item
Assume that $\gamma_1 \equiv_p \gamma_2$; we must show that $\gamma_1 \equiv \gamma_2$, i.e.\/ that for any $\pi \in (2^\AP)^*, \pi \sat \gamma_1$ iff $\pi \sat \gamma_2$.  So fix $\pi \in (2^\AP)^*$; we first consider the case when $|\pi| > 0$.
\begin{align*}
\pi \sat \gamma_1
& \;\text{iff}\; \pi_0 \sat_p \gamma_1
&& \text{Lemma~\ref{lem:non-empty-semantics}}
\\
& \;\text{iff}\; \pi_0 \sat_p \gamma_2 
&& \gamma_1 \equiv_p \gamma_2
\\
& \;\text{iff}\; \pi \sat \gamma_2
&& \text{Lemma~\ref{lem:non-empty-semantics}}
\end{align*}

We now consider the case when $|\pi| = 0$, meaning $\pi = \varepsilon$.
\begin{align*}
\varepsilon \sat \gamma_1
& \;\text{iff}\; \emptyset \sat_p \gamma_1
&& \text{Lemma~\ref{lem:vareps}}
\\
& \;\text{iff}\; \emptyset \sat_p \gamma_2 
&& \gamma_1 \equiv_p \gamma_2
\\
& \;\text{iff}\; \varepsilon \sat \gamma_2
&& \text{Lemma~\ref{lem:vareps}}
\end{align*}
\end{enumerate}
\end{proof}

\noindent Because of this theorem, propositional formulas in Finite LTL enjoy the usual properties of propositional logic.  In particular, in a logic extended with $\lor$ formulas can be converted into positive normal form, and disjunctive normal form, while preserving their semantics, including with respect to $\varepsilon$. 

\subsection{Finite LTL and \ltlf}\label{subsec:ltlf}

We close this section by considering the relative expressiveness of Finite LTL and the logic \ltlf given in~\cite{degiacomo:ijcai2013}.  In particular, we show that for every formula $\phi$ in \ltlf, there is a logically equivalent Finite LTL formula $\phi'$, but that the converse is not true:  there exists a Finite LTL formula that is not expressible in \ltlf.  Finite LTL is therefore strictly more expressive than \ltlf.

Syntactically, \ltlf as given in~\cite{degiacomo:ijcai2013} and Finite LTL are identical, modulo stylistic differences in the representation of the modalities.   Semantically, \ltlf formulas are interpreted with respect to pairs consisting of finite sequences in $(2^\AP)^*$ and \emph{positions}, or \emph{instants}, within the given sequence.  The following definitions are adapted from~\cite{degiacomo:ijcai2013}; the modifications are intended to clarify the treatment of $\varepsilon$ in the semantic account.

\begin{definition}[Instants of a Sequence]\label{def:instants}
Let $\pi \in (2^\AP)^*$.
\begin{enumerate}
\item
The \emph{instants}, $I(\pi) \subseteq \nats$, of $\pi$ are defined as $I(w) = \{i \in \nats \mid i < |\pi| \}$.  Note that $I(\varepsilon) = \emptyset$ and that $i \in I(\pi)$ iff $\pi_i \in 2^\AP$ is defined.
\item
If $\pi \neq \varepsilon$ then $\mathit{last}(\pi) = |\pi| - 1$ is the \emph{last position} in $\pi$.
\item
The set of \emph{\ltlf interpretations}, $S_{f}$, used to interpret \ltlf formulas is given by $S_{f} = \{ (\pi, i) \in (2^{\AP})^* \times \nats \mid i \in I(\pi)\}$.
\end{enumerate}
\end{definition}

\noindent
Note that since $I(\varepsilon) = \emptyset$, there can be no \ltlf interpretation of form $(\varepsilon, i)$, and that if $(\pi,i) \in S_{f}$ then $\pi_i$ is defined.  We now give the formal semantics of \ltlf formulas.

\begin{definition}[Semantics of \ltlf]\label{def:semantics-of-ltlf}
The semantics of \ltlf is given as a relation $\sat_f \;\subseteq S_f \times \Phi$ defined inductively as follows.
\begin{enumerate}
\item
$(\pi, i) \sat_f a$ iff $a \in \pi_i$.
\item
$(\pi, i) \sat_f \lnot\phi$ iff $(\pi, i) \not\sat_f \phi$.
\item
$(\pi, i) \sat_f \phi_1 \land \phi_2$ iff $(\pi, i) \sat_f \phi_1$ and $(\pi, i) \sat_f \phi_2$.
\item
$(\pi, i) \sat_f \tlX \phi$ iff $i < \mathit{last}(\pi)$ and $(\pi, i+1) \sat_f \phi$.
\item
$(\pi, i) \sat_f \phi_1 \tlU \phi_2$ iff for some $j$ such that $i \leq j \leq \mathit{last}(\pi)$, $(\pi, j) \sat_f \phi_2$ and for all $k$ such that $i \leq k < j, (\pi, k) \sat_f \phi_1$.
\end{enumerate}
We overload notation and write $\pi \sat_f \phi$ iff $(\pi, 0) \sat_f \phi$.  Note that if $\pi \sat_f \phi$ then $\pi \in (2^\AP)^+$, where $(2^\AP)^+ \subsetneq (2^\AP)^*$ is the set of non-empty sequences of $2^\AP$.  We write $\den{\phi}_f$ for $\{\pi \in (2^\AP)^+ \mid \pi \sat_f \phi\}$ and $\phi_1 \equiv_f \phi_2$ when $\den{\phi_1}_f = \den{\phi_2}_f$.
\end{definition}

We now prove that Finite LTL is at least as expressive as \ltlf.  The proof relies on the definition of a formula transformation, $T(\phi)$, whose purpose is to transform a \ltlf formula into a semantically equivalent formula in Finite LTL.

\begin{definition}[\ltlf to Finite LTL Transformation]
Transformation $T \in \Phi \rightarrow \Phi$ is defined inductively as follows.
\[
T(\phi) =
\begin{cases}
\phi							& \text{if $\phi \in \AP$}
\\
(\lnot T(\phi')) \land (\tlX\logictrue)	& \text{if $\phi = \lnot \phi'$}
\\
(T(\phi_1)) \land (T(\phi_2))		& \text{if $\phi = \phi_1 \land \phi_2$}
\\
\tlX (T (\phi'))					& \text{if $\phi = \tlX \phi'$}
\\
(T(\phi_1)) \tlU\, (T(\phi_2))		& \text{if $\phi = \phi_1 \tlU \phi_2$}
\end{cases}
\]
\end{definition}

The next lemma states a property of the semantics of \ltlf that is used in the proof of the theorem to follow.

\begin{lemma}[\ltlf semantic correspondence]\label{lem:ltlf-semantic-correspondence}
Let $\phi$ be a \ltlf formula.  Then for any $\pi \in (2^\AP)^+$ and $i \in I(\pi)$, $\pi, i \sat_f \phi$ iff $\pi(i) \sat_f \phi$.
\end{lemma}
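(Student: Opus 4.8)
The plan is to proceed by structural induction on $\phi$, taking advantage of the fact that the statement is universally quantified over \emph{both} the non-empty sequence and the instant: the induction hypothesis for a proper subformula $\phi'$ may therefore be instantiated at an \emph{arbitrary} sequence in $(2^\AP)^+$ and an arbitrary instant of it, not merely at $\pi$ and $i$. This is what makes the modal cases go through without any separate strengthening. Throughout, the only bookkeeping facts needed are that, for $\pi \neq \varepsilon$ and $i \in I(\pi)$, one has $|\pi(i)| = |\pi| - i$ and hence $\mathit{last}(\pi(i)) = \mathit{last}(\pi) - i$, together with the suffix identities $(\pi(i))_m = \pi_{i+m}$ and $\pi(i)(m) = \pi(i+m)$. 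With these, the base case $\phi = a$ is immediate, since $(\pi,i) \sat_f a$ iff $a \in \pi_i$ while $\pi(i) \sat_f a$, i.e.\ $(\pi(i),0) \sat_f a$, iff $a \in (\pi(i))_0 = \pi_i$; and the cases $\phi = \lnot\phi'$ and $\phi = \phi_1 \land \phi_2$ follow directly from the induction hypothesis and the corresponding clauses of Definition~\ref{def:semantics-of-ltlf}.

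For $\phi = \tlX\phi'$, I would first observe that the side conditions of the two ``$\tlX$'' clauses agree: $(\pi,i) \sat_f \tlX\phi'$ requires $i < \mathit{last}(\pi)$, whereas $\pi(i) \sat_f \tlX\phi'$, i.e.\ $(\pi(i),0) \sat_f \tlX\phi'$, requires $0 < \mathit{last}(\pi(i)) = \mathit{last}(\pi) - i$, and these are the same condition. Assuming it holds, $\pi(i)$ has at least two elements, so $1 \in I(\pi(i))$, and also $i+1 \in I(\pi)$ since $i+1 \le \mathit{last}(\pi) < |\pi|$. Applying the induction hypothesis to $\pi(i)$ at instant $1$ and to $\pi$ at instant $i+1$, and using $\pi(i)(1) = \pi(i+1)$, gives $(\pi(i),1) \sat_f \phi'$ iff $\pi(i+1) \sat_f \phi'$ iff $(\pi,i+1) \sat_f \phi'$, which matches the remaining conjuncts of the two clauses.

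For $\phi = \phi_1 \tlU \phi_2$, I would set up the bijection $j \mapsto j - i$ between candidate witnessing positions: a position $j$ with $i \le j \le \mathit{last}(\pi)$ corresponds to the position $j - i$ with $0 \le j - i \le \mathit{last}(\pi(i))$, and similarly for the ``$\phi_1$-positions'' $k$ versus $k - i$. For each such $j$ we have $j \in I(\pi)$ and $j - i \in I(\pi(i))$, so the induction hypothesis applied to $\phi_2$ at $\pi$ and at $\pi(i)$, together with $\pi(i)(j-i) = \pi(j)$, yields $(\pi,j) \sat_f \phi_2$ iff $(\pi(i),j-i) \sat_f \phi_2$; the analogous equivalence holds for $\phi_1$ at the positions $k$ and $k - i$. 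Substituting these equivalences into the two ``$\tlU$'' clauses of Definition~\ref{def:semantics-of-ltlf} makes them equivalent, completing the induction. Setting $\rho = \pi$, $m = n = i$ (or simply reading off the case $i$ of the lemma itself) then gives the claim; in particular this instance will be used later to relate $(\pi,i) \sat_f \phi$ to $\pi(i) \sat_f \phi$.

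I expect the only real care to lie in the index bookkeeping — checking, in each modal case, that every sequence to which the induction hypothesis is applied is non-empty and that every position used is a genuine instant of that sequence — rather than in any conceptual difficulty. It is worth noting that one could alternatively first prove a two-parameter strengthening of the form ``$(\rho,n) \sat_f \phi$ iff $(\rho(m), n-m) \sat_f \phi$ whenever $m \le n < |\rho|$'' and then specialize; this works equally well, but is unnecessary, since the already universally quantified statement of the lemma supplies a strong enough induction hypothesis on its own.
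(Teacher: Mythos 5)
Your proof is correct, and it is simply the fully worked-out version of what the paper dismisses with ``Immediate from the definitions'': the standard structural induction with suffix/index bookkeeping, using the universal quantification over both the sequence and the instant to get a strong enough induction hypothesis. No gaps; the closing remark about ``setting $\rho=\pi$, $m=n=i$'' is redundant since your induction already establishes the lemma directly.
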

\begin{proof}
Immediate from the definitions.
\end{proof}

We now have the following.
\begin{theoremrep}[Finite LTL Encodes \ltlf]\label{thm:finite-ltl-as-expressive-as-ltlf}
For any $\phi \in \Phi$, $\den{\phi}_f = \den{T(\phi)}$.
\end{theoremrep}

\begin{proofsketch}
Since $\den{\phi}_f \subseteq (2^\AP)^+$, $\den{T(\phi)} \subseteq (2^\AP)^*$ and $(2^\AP)^+ \subsetneq (2^\AP)^*$, it suffices to prove the following for all $\phi$.
\begin{enumerate}
\item 
For all $\pi \in \den{T(\phi)},\pi \in (2^\AP)^+$
\item 
For all $\pi \in (2^\AP)^+$, $\pi \sat T(\phi)$ iff $\pi \sat_f \phi$.
\end{enumerate}
The former statement guarantees that $\den{T(\phi)} \subseteq (2^\AP)^+$; the latter statement then ensures that $\den{T(\phi)} = \den{\phi}_f$.  Details may be found in the appendix.
\end{proofsketch}

\begin{proof}
Since $\den{\phi}_f \subseteq (2^\AP)^+$, $\den{T(\phi)} \subseteq (2^\AP)^*$ and $(2^\AP)^+ \subsetneq (2^\AP)^*$, it suffices to prove the following for all $\phi$.
\begin{enumerate}
\item\label{stmt-1}
For all $\pi \in \den{T(\phi)},\pi \in (2^\AP)^+$
\item\label{stmt-2}
For all $\pi \in (2^\AP)^+$, $\pi \sat T(\phi)$ iff $\pi \sat_f \phi$.
\end{enumerate}
The former statement guarantees that $\den{T(\phi)} \subseteq (2^\AP)^+$; the latter statement then ensures that $\den{T(\phi)} = \den{\phi}_f$.

The proof of Statement~(\ref{stmt-1}) proceeds by induction on the structure of $\phi$.  The induction hypothesis guarantees that this statement holds for all strict subformulas of $\phi$.  There are five cases to consider.
\begin{description}

\item[$\phi \in \AP$.]
Fix $\pi \in \den{T(\phi)}$.  We reason as follows.
\begin{align*}
\pi \in \den{T(\phi)}
& \;\text{iff}\; \pi \in \den{\phi}
&& \text{Definition of $T$}
\\
& \;\text{iff}\; \pi \sat \phi
&& \text{Definition of $\den{\phi}$}
\\
& \;\text{iff}\; |\pi| \geq 1 \;\text{and}\; \phi \in \pi_0
&& \text{Definition of $\sat$}
\\
& \;\text{implies}\; \pi \in (2^\AP)^+
&& \text{Definition of $(2^\AP)^+$}
\end{align*}

\item[$\phi = \lnot\phi'$.]
Fix $\pi \in \den{T(\phi)}$.  We reason as follows.
\begin{align*}
\pi \in \den{T(\phi)}
& \;\text{iff}\; \pi \in \den{(\lnot T(\phi')) \land (\tlX \logictrue)}
&& \text{Definition of $T$}
\\
& \;\text{iff}\; \pi \sat (\lnot T(\phi')) \land (\tlX \logictrue)
&& \text{Definition of $\den{\phi}$}
\\
& \;\text{implies}\; \pi \sat \tlX \logictrue
&& \text{Definition of $\sat$}
\\
& \;\text{iff}\; |\pi| \geq 1
&& \text{Definition of $\sat$}
\\
& \;\text{implies}\; \pi \in (2^\AP)^+
&& \text{Definition of $(2^\AP)^+$}
\end{align*}

\item[$\phi = \phi_1 \land \phi_2$.]
Fix $\pi \in \den{T(\phi)}$.  We reason as follows.
\begin{align*}
\pi \in \den{T(\phi)}
& \;\text{iff}\; \pi \in \den{(T(\phi_1)) \land (T(\phi_2))}
&& \text{Definition of $T$}
\\
& \;\text{iff}\; \pi \sat (T(\phi_1)) \land (T(\phi_2))
&& \text{Definition of $\den{\phi}$}
\\
& \;\text{iff}\; \pi \sat T(\phi_1) \;\text{and}\; \pi \sat T(\phi_2)
&& \text{Definition of $\sat$}
\\
& \;\text{implies}\; \pi \in (2^\AP)^+
&& \text{Induction hypothesis}
\end{align*}

\item[$\phi = \tlX \phi'$.]
Fix $\pi \in \den{T(\phi)}$.  We reason as follows.
\begin{align*}
\pi \in \den{T(\phi)}
& \;\text{iff}\; \pi \in \den{\tlX(T(\phi))}
&& \text{Definition of $T$}
\\
& \;\text{iff}\; \pi \sat \tlX(T(\phi))
&& \text{Definition of $\den{\phi}$}
\\
& \;\text{implies}\; |\pi| \geq 1
&& \text{Definition of $\sat$}
\\
& \;\text{implies}\; \pi \in (2^\AP)^+
&& \text{Definition of $(2^\AP)^+$}
\end{align*}

\item[$\phi = \phi_1 \tlU \phi_2$.]
Fix $\pi \in \den{T(\phi)}$.  We reason as follows.
\begin{align*}
& \pi \in \den{T(\phi)}
\\
& \text{iff}\; \pi \in \den{(T(\phi_1)) \tlU\, (T(\phi_2))}
&& \text{Definition of $T$}
\\
& \text{iff}\; \pi \sat (T(\phi_1)) \tlU\, (T(\phi_2))
&& \text{Definition of $\den{\phi}$}
\\
& \text{implies}\; \text{there exists $j \geq 0$ such that $\pi(j) \sat T(\phi_2)$}
&& \text{Definition of $\sat$}
\\
& \text{implies}\; \pi(j) \in (2^\AP)^+
&& \text{Induction hypothesis}
\\
& \text{implies}\; \pi \in (2^\AP)^+
&& \text{$|\pi| = j + |\pi(j)| \geq 1$}
\end{align*}

\end{description}

The proof of Statement~(\ref{stmt-2}) also proceeds by induction on the structure of $\phi$.  The induction hypothesis guarantees that the statement holds for all strict subformulas of $\phi$.  There are five cases to consider.
\begin{description}

\item[$\phi \in \AP$.]
Fix $\pi \in (2^\AP)^+$. We reason as follows.
\begin{align*}
\pi \sat T(\phi)
& \;\text{iff}\; \pi \sat \phi
&& \text{Definition of $T$}
\\
& \;\text{iff}\; |\pi| \geq 1 \;\text{and}\; \phi \in \pi_0
&& \text{Definition of $\sat$}
\\
& \;\text{iff}\; (\pi, 0) \sat_f \phi
&& \text{Definition of $(\pi, 0) \sat_f \phi$}
\\\
& \;\text{iff}\; \pi \sat_f \phi
&& \text{Definition of $\pi \sat_f \phi$}
\end{align*}

\item[$\phi = \lnot \phi'$.]
Fix $\pi \in (2^\AP)^+$. We reason as follows.
\begin{align*}
\pi \sat T(\phi)
& \;\text{iff}\; \pi \sat (\lnot T(\phi')) \land (\tlX \logictrue)
&& \text{Definition of $T$}
\\
& \;\text{iff}\; \pi \sat (\lnot T(\phi')) \;\text{and}\; \pi \sat (\tlX \logictrue)
&& \text{Definition of $\sat$}
\\
& \;\text{iff}\; \pi \not\sat T(\phi')
&& \text{Definition of $\sat$, $|\pi| \geq 1$}
\\
& \;\text{iff}\; \pi \not\sat_f \phi'
&& \text{Induction hypothesis}
\\
& \;\text{iff}\; \pi \sat_f \lnot\phi'
&& \text{Definition of $\sat_f$}
\\
& \;\text{iff}\; \pi \sat_f \phi
&& \text{$\phi = \lnot \phi'$}
\end{align*}

\item[$\phi = \phi_1 \land \phi_2$.]
Fix $\pi \in (2^\AP)^+$. We reason as follows.
\begin{align*}
\pi \sat T(\phi)
& \;\text{iff}\; \pi \sat (T(\phi_1)) \land (T(\phi_2))
&& \text{Definition of $T$}
\\
& \;\text{iff}\; \pi \sat T(\phi_1) \;\text{and}\; \pi \sat T(\phi_2)
&& \text{Definition of $\sat$}
\\
& \;\text{iff}\; \pi \sat_f \phi_1 \;\text{and}\; \pi \sat_f \phi_2
&& \text{Induction hypothesis}
\\
& \;\text{iff}\; \pi \sat_f \phi_1 \land \phi_2
&& \text{Definition of $\sat_f$}
\\
& \;\text{iff}\; \pi \sat_f \phi
&& \text{$\phi = \phi_1 \land \phi_2$}
\end{align*}

\item[$\phi = \tlX \phi'$.]
Fix $\pi \in (2^\AP)^+$. We reason as follows.
\begin{align*}
\pi \sat T(\phi)
& \;\text{iff}\; \pi \sat \tlX(T(\phi'))
&& \text{Definition of $T$}
\\
& \;\text{iff}\; |\pi| \geq 1 \;\text{and}\; \pi(1) \sat T(\phi')
&& \text{Definition of $\sat$}
\\
& \;\text{iff}\; \pi(1) \sat_f \phi'
&& \text{$\pi \in (2^{\AP})^+$, induction hypothesis}
\\
& \;\text{iff}\; \pi, 1 \sat_f \phi'
&& \text{Lemma~\ref{lem:ltlf-semantic-correspondence}}
\\
& \;\text{iff}\; \pi, 0 \sat_f \tlX \phi'
&& \text{Definition of $\sat_f$}
\\
& \;\text{iff}\; \pi \sat_f \phi
&& \text{Definition of $\sat_f$, $\phi = \tlX \phi'$}
\end{align*}

\item[$\phi = \phi_1 \tlU \phi_2$.]
Fix $\pi \in (2^\AP)^+$. We reason as follows.
\begin{flalign*}
&\pi \sat T(\phi)
\\
& \text{iff}\; \pi \sat (T(\phi_1)) \tlU\, (T(\phi_2))
&& \text{Definition of $T$}
\\
& \text{iff}\;
\\
& \multispan2{\hspace{1ex} there exists $j \geq 0$ such that $\pi(j) \sat T(\phi_2)$\hfil}
\\
& \multispan2{\hspace{1ex} and for all $0 \leq k < j, \pi(k) \sat T(\phi_1)$\hfil}
& \text{Definition of $\sat$}
\\
& \text{iff}\;
\\
& \multispan2{\hspace{1ex} there exists $j \geq 0$ such that $\pi(j) \sat_f \phi_2$\hfil}
\\
& \multispan2{\hspace{1ex} and for all $0 \leq k < j, \pi(k) \sat_f \phi_1$\hfil}
& \text{Induction hypothesis}
\\
& \text{iff}\;
\\
& \multispan2{\hspace{1ex} there exists $j \geq 0$ such that $\pi,j \sat_f \phi_2$\hfil}
\\
& \multispan2{\hspace{1ex} and for all $0 \leq k < j, \pi,k \sat_f \phi_1$\hfil}
& \text{Lemma~\ref{lem:ltlf-semantic-correspondence}}
\\
& \text{iff}\; \pi, 0 \sat_f \phi_1 \tlU \phi_2
&& \text{Definition of $\sat_f$}
\\
& \text{iff}\; \pi \sat_f \phi
&& \text{Definition of $\sat_f$, $\phi = \phi_1 \tlU \phi_2$}
\end{flalign*}
\end{description}
\end{proof}

We close this section by establishing that the Finite LTL formula $\lnot\tlX \logictrue$ is not expressible in \ltlf.  This fact implies that Finite LTL is strictly more expressive than \ltlf.

\begin{theorem}
There exists no \ltlf formula $\phi$ such that $T(\phi) \equiv \lnot\tlX \logictrue$.
\end{theorem}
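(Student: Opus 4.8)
The plan is to exploit the fact that $\lnot\tlX\logictrue$ is satisfied by exactly one sequence, the empty sequence $\varepsilon$, whereas the image of the translation $T$ can never contain $\varepsilon$; hence the two can never be logically equivalent.

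First I would pin down $\den{\lnot\tlX\logictrue}$. By Lemma~\ref{lem:empty-sat}(1) we have $\pi \neq \varepsilon$ iff $\pi \sat \tlX\logictrue$; taking the contrapositive and using the semantics of $\lnot$ (Definition~\ref{finite-ltl-semantics}) gives $\pi \sat \lnot\tlX\logictrue$ iff $\pi = \varepsilon$. Thus $\den{\lnot\tlX\logictrue} = \{\varepsilon\}$, and in particular $\varepsilon \in \den{\lnot\tlX\logictrue}$.

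Next I would invoke Theorem~\ref{thm:finite-ltl-as-expressive-as-ltlf}: for every \ltlf formula $\phi$ we have $\den{T(\phi)} = \den{\phi}_f$. By Definition~\ref{def:semantics-of-ltlf}, $\den{\phi}_f \subseteq (2^\AP)^+$, the set of \emph{non-empty} sequences, so $\varepsilon \notin \den{T(\phi)}$. Combining the two observations, for an arbitrary $\phi \in \Phi$ we have $\varepsilon \in \den{\lnot\tlX\logictrue}$ but $\varepsilon \notin \den{T(\phi)}$, whence $\den{T(\phi)} \neq \den{\lnot\tlX\logictrue}$, i.e.\ $T(\phi) \not\equiv \lnot\tlX\logictrue$. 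Since $\phi$ was arbitrary, no \ltlf formula maps to a Finite LTL formula equivalent to $\lnot\tlX\logictrue$. Together with Theorem~\ref{thm:finite-ltl-as-expressive-as-ltlf} this yields that Finite LTL is strictly more expressive than \ltlf: every \ltlf-definable language is $\den{T(\phi)}$ for some Finite LTL formula, yet the Finite LTL formula $\lnot\tlX\logictrue$ defines a language of no such form.

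There is no real obstacle here; the argument is a direct consequence of the preceding theorem and lemma. The only point requiring care is the first step — correctly recognizing that $\den{\lnot\tlX\logictrue}$ is the singleton $\{\varepsilon\}$ — and the observation that the characteristic feature distinguishing the two logics is precisely the (non-)admissibility of $\varepsilon$ as a model.
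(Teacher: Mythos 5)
Your proof is correct and follows essentially the same route as the paper's: both establish $\den{\lnot\tlX\logictrue}=\{\varepsilon\}$ and then use Theorem~\ref{thm:finite-ltl-as-expressive-as-ltlf} to conclude $\varepsilon\not\in\den{T(\phi)}\subseteq(2^\AP)^+$ for every \ltlf formula $\phi$. Your explicit appeal to Lemma~\ref{lem:empty-sat} for the first step is a minor elaboration of what the paper treats as immediate.
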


\begin{proof}
Immediate from the fact $\den{\lnot\tlX \logictrue} = \{\varepsilon\}$ and Theorem~\ref{thm:finite-ltl-as-expressive-as-ltlf}, which implies that for any \ltlf formula $\phi$, $T(\phi) \subseteq (2^\AP)^+$ and thus $\varepsilon \not\in \den{T(\phi)}$.
\qedhere
\end{proof}

\section{Normal Forms for Finite LTL}

The purpose of this paper is to define a construction for converting formulas in Finite LTL into non-deterministic finite automata (NFAs) with the property that the language of the NFA for a formula consists exactly of the finite sequences that satisfy the formula.  Such automata have many uses:  they provide a basis for model checking against Finite LTL specifications and for checking satisfiability of Finite LTL formulas.  The approach is adapted from the well-known tableau construction~\cite{vardi:automata1986} for LTL.  Our presentation relies on showing how Finite LTL formulas may be converted into logically equivalent formulas in a specific \emph{normal form}; this normal form will then be used in the construction given in the next section.

\subsection{Extended Finite LTL and Positive Normal Form}

Our construction works with Finite LTL formulas in \emph{positive normal norm} (PNF), in which negation is constrained to be applied to atomic propositions.  The PNF formulas in Finite LTL as given in Definition~\ref{def:ltl-syntax} are not as expressive as full Finite LTL; there are formulas $\phi$ in Finite LTL such that $\phi \not\equiv \phi'$ for any PNF $\phi'$ in Finite LTL.  However, if we extend Finite LTL by including duals of all operators in Finite LTL, we can obtain a logic whose formulas are as expressive as those in Finite LTL.

\begin{definition}[Extended Finite LTL Syntax]\label{def:efltl-syntax}
The set of Extended Finite LTL formulas is given by the following grammar, where $a \in \AP$.
$$
\phi ::= a \mid \lnot \phi \mid \phi_1 \land \phi_2 \mid \tlX \phi \mid \phi_1 \tlU \phi_2
\mid \phi_1 \lor \phi_2
\mid \tlWeakX \phi
\mid \phi_1 \tlR \phi_2
$$
We use $\Phi^\AP_e$ to refer to the set of all Extended Finite LTL formulas, and $\Gamma^\AP_e$ for the set of propositional Extended Finite LTL formulas (i.e.\/ formulas that do not include any use of $\tlX, \tlU, \tlWeakX$ or $\tlR$).
\end{definition}

\noindent
Extended Finite LTL extends Finite LTL by including the duals of $\land$, $\tlX$ and $\tlU$, namely, $\lor$, $\tlWeakX$ and $\tlR$, respectively.  Note that $\Phi^\AP \subsetneq \Phi^\AP_e$:  every Finite LTL formula is syntactically an Extended Finite LTL formula, but not vice versa.

The semantics of Extended Finite LTL is given as follows.

\begin{definition}[Extended Finite LTL Semantics]\label{def:efltl-semantics}
Let $\phi$ be an Extended Finite LTL formula, and let $\pi \in (2^\AP)^*$.  Then the semantics of Extended Finite LTL is given as a relation $\pi \sat_e \phi$ defined as follows.
\begin{itemize}
\item
$\pi \sat_e a$ iff $|\pi| \geq 1$ and $a \in \pi_0$.
\item
$\pi \sat_e \lnot \phi$ iff $\pi \not\sat_e \phi$.
\item
$\pi \sat_e \phi_1 \land \phi_2$ iff $\pi \sat_e \phi_1$ and $\pi \sat_e \phi_2$.
\item
$\pi \sat_e \tlX \phi$ iff $|\pi| \geq 1$ and $\pi(1) \sat_e \phi$.
\item
$\pi \sat_e \phi_1 \!\!\tlU\! \phi_2$ iff $\exists j \colon 0 \leq j \leq |\pi| \colon \pi(j) \sat_e \phi_2 \land \forall k \colon 0 \leq k < j \colon \pi(k) \sat_e \phi_1$.
\item
$\pi \sat_e \phi_1 \lor \phi_2$ iff either $\pi \sat_e \phi_1$ or $\pi \sat_e \phi_2$.
\item
$\pi \sat_e \tlWeakX \phi$ iff either $|\pi| = 0$ or $\pi(1) \sat_e \phi$.
\item
$\pi \sat_e \phi_1 \!\!\tlR\! \phi_2$ iff $\forall j \colon 0 \leq j \leq |\pi| \colon \pi(j) \sat_e \phi_2$ or $\exists k \colon 0 \leq k < j \colon \pi(k) \sat_e \phi_1$.
\end{itemize}
We define $\den{\phi}_e = \{ \pi \in (2^\AP)^* \mid \pi \sat_e \phi \} \lor \phi_1 \equiv_e \phi_2$ iff $\den{\phi_1}_e = \den{\phi_2}_e$.
\end{definition}

\noindent
The next lemmas establish relationships between Finite LTL and Extended Finite LTL.  The first 
shows that the semantics of Extended Finite LTL, when restricted to Finite LTL formulas, matches the semantics of Finite LTL.

\begin{lemma}[(Extended) Finite LTL Semantic Correspondence]\label{lem:extension}
Let $\phi$ be a formula in Finite LTL, and let $\pi \in (2^\AP)^*$.  Then $\pi \sat \phi$ iff $\pi \sat_e \phi$.
\end{lemma}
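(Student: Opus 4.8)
The plan is to prove Lemma~\ref{lem:extension} by straightforward structural induction on the Finite LTL formula $\phi$. Since $\phi$ is a Finite LTL formula, it is built only from the operators $a$, $\lnot$, $\land$, $\tlX$ and $\tlU$; in particular, no case for $\lor$, $\tlWeakX$ or $\tlR$ arises, because those operators do not occur in Finite LTL formulas. The induction hypothesis will state that for every strict subformula $\psi$ of $\phi$ and every $\pi' \in (2^\AP)^*$, we have $\pi' \sat \psi$ iff $\pi' \sat_e \psi$.

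The key observation is that, for exactly the five operators that appear in Finite LTL, the defining clause of $\sat$ in Definition~\ref{finite-ltl-semantics} is literally identical (modulo replacing $\sat$ by $\sat_e$) to the corresponding clause of $\sat_e$ in Definition~\ref{def:efltl-semantics}. So in each of the five cases, I would simply unfold the relevant clause on both sides and apply the induction hypothesis to the subformulas. Concretely: for $\phi = a$, both $\pi \sat a$ and $\pi \sat_e a$ unfold to ``$|\pi| \geq 1$ and $a \in \pi_0$,'' with no appeal to the induction hypothesis needed. For $\phi = \lnot \phi'$, we get $\pi \sat \lnot\phi'$ iff $\pi \not\sat \phi'$ iff (IH) $\pi \not\sat_e \phi'$ iff $\pi \sat_e \lnot\phi'$. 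The cases $\phi = \phi_1 \land \phi_2$ and $\phi = \tlX \phi'$ are analogous, using the IH on $\phi_1, \phi_2$ and on $\phi'$ respectively. For $\phi = \phi_1 \tlU \phi_2$, the matching quantified clauses over $j$ and $k$ coincide once the IH is applied to $\phi_1$ and $\phi_2$ at every suffix $\pi(j)$ and $\pi(k)$.

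There is essentially no obstacle here: the lemma is a routine ``conservative extension'' result, and the only thing to be careful about is to state the induction hypothesis with a universal quantifier over $\pi$ (since the $\tlX$ and $\tlU$ cases evaluate subformulas at suffixes $\pi(1)$, $\pi(j)$, $\pi(k)$ rather than at $\pi$ itself). The structure of the argument closely mirrors the inductions already carried out for Lemmas~\ref{lem:vareps} and Theorem~\ref{thm:finite-ltl-as-expressive-as-ltlf} in the excerpt, so it would be presented in the same style: a one-line remark that the proof is by structural induction on $\phi$, followed by the five cases, each a short chain of ``iff'' steps justified by ``Definition of $\sat$,'' ``Definition of $\sat_e$,'' and ``Induction hypothesis.'' If brevity is desired, one could even collapse this to the single sentence ``Immediate by structural induction on $\phi$, since the clauses defining $\sat$ and $\sat_e$ agree on all operators of Finite LTL,'' but I would include the explicit cases for completeness.
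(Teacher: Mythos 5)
Your proposal is correct and matches the paper's reasoning: the paper dismisses this lemma as ``Immediate'' precisely because, as you observe, the defining clauses of $\sat$ and $\sat_e$ coincide on every operator of Finite LTL, so the structural induction you spell out is exactly the argument being left implicit. Your only addition is to make the routine induction (with the universally quantified induction hypothesis over $\pi$) explicit, which is fine.
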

\begin{proof}
Immediate.
\end{proof}

\noindent
The next result establishes duality properties between the new operators in Extended Finite LTL and the existing ones in Finite LTL.

\begin{lemma}[Dualities in Extended Finite LTL]\label{lem:duality}
Let $\phi, \phi_1$ and $\phi_2$ be formulas in Extended Finite LTL, and let $\pi \in (2^\AP)^*$.  Then the following hold.
\begin{enumerate}
\item\label{duality-i}
$\pi \sat_e \phi_1 \lor \phi_2$ iff $\pi \sat_e \lnot((\lnot \phi_1) \land (\lnot \phi_2))$.
\item\label{duality-ii}
$\pi \sat_e \tlWeakX \phi$ iff $\pi \sat_e \lnot \tlX \lnot \phi$.
\item\label{duality-iii}
$\pi \sat_e \phi_1 \tlR \phi_2$ iff $\pi \sat_e \lnot((\lnot \phi_1) \tlU (\lnot \phi_2))$.
\end{enumerate}
\end{lemma}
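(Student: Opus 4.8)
The plan is to prove each of the three dualities by unfolding the semantics of the left-hand and right-hand sides via \defref{def:efltl-semantics} and checking that the resulting conditions on $\pi$ coincide. Since these are all ``iff'' statements about a fixed but arbitrary $\pi \in (2^\AP)^*$, no induction is needed; each part is a direct semantic calculation, so I would present each as a short chain of ``iff'' steps.

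For part~(\ref{duality-i}), I would start from $\pi \sat_e \lnot((\lnot\phi_1) \land (\lnot\phi_2))$, apply the clause for $\lnot$ to get $\pi \not\sat_e (\lnot\phi_1)\land(\lnot\phi_2)$, then the clause for $\land$ to get ``not ($\pi \sat_e \lnot\phi_1$ and $\pi \sat_e \lnot\phi_2$)'', then the clause for $\lnot$ twice to rewrite this as ``not (($\pi \not\sat_e \phi_1$) and ($\pi \not\sat_e \phi_2$))'', and finally classical propositional reasoning at the meta-level (De Morgan) to obtain ``$\pi \sat_e \phi_1$ or $\pi \sat_e \phi_2$'', which is exactly the clause defining $\pi \sat_e \phi_1 \lor \phi_2$. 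Part~(\ref{duality-ii}) is analogous but shorter: $\pi \sat_e \lnot\tlX\lnot\phi$ iff $\pi \not\sat_e \tlX\lnot\phi$ iff not ($|\pi| \geq 1$ and $\pi(1) \sat_e \lnot\phi$) iff not ($|\pi|\geq 1$ and $\pi(1) \not\sat_e \phi$) iff ($|\pi| = 0$ or $\pi(1) \sat_e \phi$), which is the clause for $\tlWeakX\phi$; the one point to be careful about is that when $|\pi| = 0$ the suffix $\pi(1)$ is undefined, but the meta-level disjunction ``$|\pi|=0$ or $\pi(1)\sat_e\phi$'' is short-circuited so this causes no issue, exactly as in \defref{def:efltl-semantics}.

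Part~(\ref{duality-iii}) is the one requiring the most care, and I expect it to be the main (if still modest) obstacle, because it involves pushing a negation through a bounded existential-followed-by-universal quantifier block. Starting from $\pi \sat_e \lnot((\lnot\phi_1)\tlU(\lnot\phi_2))$, I unfold to $\pi \not\sat_e (\lnot\phi_1)\tlU(\lnot\phi_2)$, i.e.\ the negation of ``$\exists j: 0 \leq j \leq |\pi|: \pi(j)\sat_e \lnot\phi_2$ and $\forall k: 0\leq k < j: \pi(k)\sat_e\lnot\phi_1$''. Negating gives ``$\forall j: 0\leq j\leq|\pi|$: $\pi(j)\not\sat_e\lnot\phi_2$ or $\exists k: 0\leq k<j: \pi(k)\not\sat_e\lnot\phi_1$'', and then replacing $\pi(j)\not\sat_e\lnot\phi_2$ by $\pi(j)\sat_e\phi_2$ and $\pi(k)\not\sat_e\lnot\phi_1$ by $\pi(k)\sat_e\phi_1$ yields precisely the clause defining $\pi \sat_e \phi_1\tlR\phi_2$. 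The delicate bookkeeping is simply the standard duality $\lnot\exists\,\forall \equiv \forall\,\exists\lnot$ over the bounded ranges, together with the double-negation elimination $\pi(j)\not\sat_e\lnot\psi \iff \pi(j)\sat_e\psi$, which is immediate from the $\lnot$ clause.

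Overall the whole lemma is routine: it is three parallel semantic unfoldings, and the only thing to watch is being consistent about meta-level classical logic (De Morgan, double negation, duality of bounded quantifiers) and about the convention that disjunctions in the semantics short-circuit so that undefined suffixes in the $|\pi|=0$ case are harmless. I would write each part as a displayed chain of ``iff''s with justifications (``Definition of $\sat_e$'', ``propositional reasoning'') in the right margin, mirroring the style already used in the proof of \thmref{thm:finite-ltl-as-expressive-as-ltlf}.
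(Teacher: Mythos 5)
Your proposal is correct and matches the paper's approach: the paper's proof is just the one-liner ``Follows from the definition of $\sat_e$,'' and your three semantic unfoldings (meta-level De Morgan, double negation, and dualizing the bounded quantifiers, with the short-circuit caveat for $\pi(1)$ when $|\pi|=0$) are exactly the details that remark leaves implicit.
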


\begin{proof}
Follows from the definition of $\sat_e$.
\end{proof}

\noindent
The next lemma establishes that although Extended Finite LTL includes more operators than Finite LTL, any Extended Finite LTL formula can be translated into a logically equivalent Finite LTL formula.  Thus, the two logics have the same expressive power.

\begin{lemma}[Co-expressiveness for (Extended) Finite LTL]\label{lem:fltl-efltl}
Let $\phi$ be an Extended Finite LTL formula.  Then there is a Finite LTL formula $\phi'$ such that $\den{\phi}_e = \den{\phi'}$.
\end{lemma}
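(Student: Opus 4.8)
The plan is to proceed by structural induction on the Extended Finite LTL formula $\phi$, constructing the Finite LTL translation $\phi'$ compositionally and proving $\den{\phi}_e = \den{\phi'}$ along the way. The atomic case $\phi = a$ is trivial with $\phi' = a$, since $\sat_e$ and $\sat$ agree on atoms by definition. For the three operators shared between the two logics, $\lnot$, $\land$, and $\tlX$, I would simply apply the translation recursively to the immediate subformulas and reuse the same connective; the equivalence then follows directly from the inductive hypotheses together with the matching clauses of Definitions~\ref{finite-ltl-semantics} and~\ref{def:efltl-semantics}. The only real content lies in the three cases for the new operators $\lor$, $\tlWeakX$, and $\tlR$.

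For those three cases the key observation is Lemma~\ref{lem:duality}, which tells us exactly how to rewrite each new operator using the core Finite LTL connectives: $\phi_1 \lor \phi_2$ becomes $\lnot((\lnot \phi_1') \land (\lnot \phi_2'))$, $\tlWeakX \phi$ becomes $\lnot \tlX (\lnot \phi')$, and $\phi_1 \tlR \phi_2$ becomes $\lnot((\lnot \phi_1') \tlU (\lnot \phi_2'))$, where in each case the primed formulas are the inductively obtained Finite LTL translations of the subformulas. So I would define $\phi'$ by these clauses and argue, for instance in the $\tlR$ case, as follows: $\pi \sat_e \phi_1 \tlR \phi_2$ iff $\pi \sat_e \lnot((\lnot\phi_1)\tlU(\lnot\phi_2))$ by Lemma~\ref{lem:duality}(3); then unfolding $\sat_e$ and pushing the induction hypotheses through the subformulas $\phi_1$ and $\phi_2$ (noting that $\sat_e$ on $\lnot$, $\land$, $\tlU$ mirrors $\sat$), this is equivalent to $\pi \sat \lnot((\lnot\phi_1')\tlU(\lnot\phi_2'))$, which is by definition $\pi \sat \phi'$. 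The cases for $\lor$ and $\tlWeakX$ are analogous, using parts (1) and (2) of Lemma~\ref{lem:duality} respectively.

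One subtlety worth flagging is that when I apply the induction hypothesis I am replacing a subformula $\phi_i$ by a logically equivalent Finite LTL formula $\phi_i'$ \emph{inside a larger context} (under negations, conjunctions, and temporal operators). This requires knowing that logical equivalence is a congruence for all the operators of Extended Finite LTL, i.e.\ that substituting equivalents for equivalents preserves $\den{\cdot}_e$. This is immediate from the fact that each clause of Definition~\ref{def:efltl-semantics} refers to its subformulas only through the satisfaction relation (and the suffix operation), so it poses no real difficulty, but it should be stated explicitly. Combined with Lemma~\ref{lem:extension}, which lets me pass freely between $\sat$ and $\sat_e$ on formulas that happen to be in the Finite LTL fragment, the congruence property makes each inductive step a routine unfolding.

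I do not anticipate a genuine obstacle here: the lemma is essentially a bookkeeping argument, and all the semantic heavy lifting has been front-loaded into Lemma~\ref{lem:duality}. If anything, the point requiring the most care is making sure the recursion is set up so that the duality rewriting is applied to the \emph{already-translated} subformulas rather than the original ones, so that the final $\phi'$ genuinely lies in $\Phi^\AP$ (contains no $\lor$, $\tlWeakX$, or $\tlR$); this is guaranteed because the induction hypothesis delivers $\phi_i' \in \Phi^\AP$ and the duality identities only introduce $\lnot$, $\land$, $\tlX$, and $\tlU$.
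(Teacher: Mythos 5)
Your proposal is correct and follows essentially the same route as the paper: the paper's proof likewise eliminates $\lor$, $\tlWeakX$, and $\tlR$ via the duality identities of Lemma~\ref{lem:duality} and invokes Lemma~\ref{lem:extension} to identify $\sat$ with $\sat_e$ on the resulting Finite LTL formula. You merely make explicit the structural induction and the congruence (substitution-of-equivalents) point that the paper leaves implicit.
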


\begin{proof}
Follows from Lemmas~\ref{lem:extension} and~\ref{lem:duality}.  The latter lemma in particular establishes that each non-Finite LTL operator in $\phi$ ($\lor$, $\tlWeakX$, $\tlR$) can be replaced by appropriately negated versions of its dual.  Specifically, $\phi_1 \lor \phi_2$ can be replaced by $\lnot ((\lnot \phi_1) \land (\lnot \phi_2))$, $\tlWeakX \phi'$ by $\lnot \tlX \lnot \phi$, and $\phi_1 \tlR \phi_2$ by $\lnot ((\lnot \phi_1) \tlU (\lnot \phi_2))$.
\end{proof}

\noindent
Although Extended Finite LTL does not enhance the expressive power of Finite LTL, it does enjoy a property that Finite LTL does not:  its formulas may be converted in \emph{positive normal form}.  This fact will be useful in defining the tableau construction; the relevant mathematical results are presented here.

\begin{definition}[Positive Normal Form (PNF)]\label{def:pnf}
The set of \emph{positive normal form} (PNF) formulas of Extended Finite LTL is defined inductively as follows.
\begin{itemize}
\item
If $a \in \AP$ then $a$ and $\lnot a$ are in positive normal form.
\item
If $\phi$ is in positive form then $\tlX \phi$ and $\tlWeakX \phi$ are in positive normal form.
\item
If $\phi_1$ and $\phi_2$ are in positive normal normal then $\phi_1 \land \phi_2$, $\phi_1 \lor \phi_2$, $\phi_1 \tlU \phi_2$ and $\phi_1 \tlR \phi_2$ are in positive normal form.
\end{itemize}
\end{definition}

\noindent
We now have the following.

\begin{lemma}[PNF and Extended Finite LTL]\label{lem:pnf}
Let $\phi \in \Phi^\AP_e$ be an Extended Finite LTL formula.  Then there is a $\phi' \in \Phi^\AP_e$ in PNF such that $\phi \equiv_e \phi'$.
\end{lemma}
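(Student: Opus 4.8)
The plan is to prove, by structural induction on $\phi$, a strengthened statement: for every Extended Finite LTL formula $\phi$ there exist PNF formulas $\phi^{+}$ and $\phi^{-}$ with $\phi \equiv_e \phi^{+}$ and $\lnot\phi \equiv_e \phi^{-}$. The lemma then follows by taking $\phi' = \phi^{+}$. Carrying both a ``positive'' and a ``negative'' PNF witness through the induction is the crux of the argument: without the negative witness the case $\phi = \lnot\phi'$ cannot be closed, since pushing the outer negation inward needs a PNF form for $\lnot\phi'$ rather than for $\phi'$. Before starting I would record one trivial fact, that $\lnot\lnot\psi \equiv_e \psi$ for all $\psi$ (immediate from the clause for $\lnot$ in Definition~\ref{def:efltl-semantics}), as it is used to collapse the double negations produced by Lemma~\ref{lem:duality}.

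For the base case $\phi = a$, both $a$ and $\lnot a$ are already PNF by Definition~\ref{def:pnf}, so take $\phi^{+} = a$, $\phi^{-} = \lnot a$. For the inductive step, I would apply the induction hypothesis to the (strict) subformulas to obtain their $(\cdot)^{+}$ and $(\cdot)^{-}$ witnesses and then define, in each case:
\begin{itemize}
\item $\phi = \lnot\phi'$: \ $\phi^{+} = (\phi')^{-}$ and $\phi^{-} = (\phi')^{+}$, using $\lnot\phi' \equiv_e (\phi')^{-}$ and $\lnot\lnot\phi' \equiv_e \phi' \equiv_e (\phi')^{+}$.
\item $\phi = \phi_1 \land \phi_2$: \ $\phi^{+} = (\phi_1)^{+} \land (\phi_2)^{+}$ and $\phi^{-} = (\phi_1)^{-} \lor (\phi_2)^{-}$, using $\lnot(\phi_1 \land \phi_2) \equiv_e (\lnot\phi_1)\lor(\lnot\phi_2)$, which follows from Lemma~\ref{lem:duality}(\ref{duality-i}) together with $\lnot\lnot(\cdot) \equiv_e (\cdot)$ (or directly from the semantics of $\lnot$ and $\land$).
\item $\phi = \phi_1 \lor \phi_2$: dually, $\phi^{+} = (\phi_1)^{+} \lor (\phi_2)^{+}$ and $\phi^{-} = (\phi_1)^{-} \land (\phi_2)^{-}$.
\item $\phi = \tlX\phi'$: \ $\phi^{+} = \tlX (\phi')^{+}$ and $\phi^{-} = \tlWeakX (\phi')^{-}$, using $\lnot\tlX\phi' \equiv_e \tlWeakX\lnot\phi'$ from Lemma~\ref{lem:duality}(\ref{duality-ii}).
\item $\phi = \tlWeakX\phi'$: dually, $\phi^{+} = \tlWeakX (\phi')^{+}$ and $\phi^{-} = \tlX (\phi')^{-}$.
\item $\phi = \phi_1 \tlU \phi_2$: \ $\phi^{+} = (\phi_1)^{+} \tlU (\phi_2)^{+}$ and $\phi^{-} = (\phi_1)^{-} \tlR (\phi_2)^{-}$, using $\lnot(\phi_1 \tlU \phi_2) \equiv_e (\lnot\phi_1)\tlR(\lnot\phi_2)$, obtained from Lemma~\ref{lem:duality}(\ref{duality-iii}) after collapsing double negations.
\item $\phi = \phi_1 \tlR \phi_2$: dually, $\phi^{+} = (\phi_1)^{+} \tlR (\phi_2)^{+}$ and $\phi^{-} = (\phi_1)^{-} \tlU (\phi_2)^{-}$.
\end{itemize}

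In every case the proposed $\phi^{+}$ and $\phi^{-}$ are in PNF by the closure clauses of Definition~\ref{def:pnf} applied to the PNF subformulas furnished by the induction hypothesis, and the required equivalences hold by congruence of $\equiv_e$ under each operator (routine, directly from Definition~\ref{def:efltl-semantics}) together with the displayed dualities. I expect the only delicate point to be the bookkeeping around double negation in the $\tlU$ and $\tlR$ cases: Lemma~\ref{lem:duality}(\ref{duality-iii}) is stated as $\phi_1 \tlR \phi_2 \equiv_e \lnot((\lnot\phi_1)\tlU(\lnot\phi_2))$, so to get $\lnot(\phi_1 \tlU \phi_2) \equiv_e (\lnot\phi_1)\tlR(\lnot\phi_2)$ one must instantiate the lemma at $\lnot\phi_1,\lnot\phi_2$ and then apply $\lnot\lnot(\cdot)\equiv_e(\cdot)$; the analogous care is needed for $\lnot$ pushed through $\lor$, $\tlWeakX$, and $\tlR$. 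Everything else is mechanical.
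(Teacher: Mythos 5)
Your proposal is correct and follows essentially the same route as the paper: the paper's (one-sentence) proof argues that $\lnot\lnot\phi \equiv_e \phi$ together with the dualities of Lemma~\ref{lem:duality} lets one drive negations down to atomic propositions, and your simultaneous induction carrying both $\phi^{+}$ and $\phi^{-}$ is exactly the standard formalization of that argument, with the double-negation bookkeeping in the $\tlU$/$\tlR$ cases handled correctly.
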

\begin{proof}
Follows from the fact that $\lnot \lnot \phi \equiv_e \phi$ and the existence of dual operators in Extended Finite LTL, which enable identities such as $\lnot (\phi_1 \tlU \phi_2) \equiv_e (\lnot \phi_1) \tlR (\lnot \phi_2)$ to be used to ``drive negations'' down to atomic propositions.
\end{proof}

\subsection{Automaton Normal Form}

Propositional logic exhibits a number of logical equivalences that support the conversion of arbitrary formulas into various \emph{normal forms} that are then the basis for algorithmic analysis, including satisfiability checking.  Disjunctive Normal Form (DNF) is one such well-known normal form.  In this section we show how Extended LTL formulas in PNF can be converted into a normal form related to DNF, which we call \emph{Automaton Normal Form} (ANF); ANF will be a key vehicle for the automaton construction in the next section.  

We begin by reviewing the basics of DNF in the setting of the propositional fragment of Extended LTL.
We first lift the definitions of $\lor$ and $\land$ to finite sets of formulas in the usual manner.
\begin{definition}[Conjunction / Disjunction for Sets of Formulas]\label{def:indexed-prop-ops}
Let $P = \{ \phi_1, \ldots, \phi_n \}$, $n \geq 0$ be a finite set of Extended LTL formulas.  Then
$\bigwedge P$ and $\bigvee P$ are defined as follows.
\[
\begin{array}{c}
\bigwedge P =
	\begin{cases}
	\logictrue	& \text{if $n = 0$ (i.e.\/ $P = \emptyset$)}\\
	\phi_1	& \text{if $n = 1$ (i.e.\/ $P = \{\phi_1\}$)}\\
	(\bigwedge \{\phi_1, \ldots, \phi_{n-1}\}) \land \phi_n & \text{if $n \geq 2$}
	\end{cases}
\\[2em]
\bigvee P =
	\begin{cases}
	\logicfalse	&	\text{if $n = 0$ (i.e.\/ $P = \emptyset$)}\\
	\phi_1	& 	\text{if $n = 1$ (i.e.\/ $P = \{\phi_1\}$)}\\
	(\bigvee \{\phi_1, \ldots, \phi_{n-1}\}) \lor \phi_n		& \text{if $n \geq 2$}
	\end{cases}
\end{array}
\]
\end{definition}

\noindent
We now define disjunctive normal form as follows.

\begin{definition}[Disjunctive Normal Form (DNF)]\label{def:dnf}
\mbox{}
\begin{enumerate}
\item
A \emph{literal} is a formula of form $a$ or $\lnot a$ for some $a \in \AP$.
\item
A \emph{DNF clause} is a formula $C$ of form $\bigwedge \{\ell_1, \ldots, \ell_n\}$, $n \geq 0$, where each $\ell_i$ is a literal.
\item
A formula in $\Gamma^\AP_e$ is in \emph{disjunctive normal form (DNF)} if it has form $\bigvee \{C_1, \ldots C_k\}$, $k \geq 0$, where each $C_i$ is a DNF clause.
\end{enumerate}
\end{definition}

\noindent
The following is a well-known result in propositional logic that, due to Theorem~\ref{thm:prop-logical-equivalence}, is also applicable to the propositional fragment of Extended Finite LTL.

\begin{theorem}[DNF Conversion for Extended Finite LTL]\label{thm:dnf}
Let $\gamma \in \Gamma^\AP_e$.  Then there is a DNF formula $\gamma' \in \Gamma^\AP_e$ such that $\gamma \equiv_e \gamma'$.
\end{theorem}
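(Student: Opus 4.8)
The plan is to reduce this to the classical DNF theorem of propositional logic via the semantic correspondence already established. First I would observe that $\Gamma^\AP_e$ extends the propositional fragment $\Gamma^\AP$ of Definition~\ref{def:ltl-syntax} only by the connective $\lor$, which is a derived operator; so every $\gamma \in \Gamma^\AP_e$ is logically equivalent (under $\sat_e$) to a formula $\hat{\gamma} \in \Gamma^\AP$ obtained by replacing each $\phi_1 \lor \phi_2$ with $\lnot((\lnot\phi_1)\land(\lnot\phi_2))$, by Lemma~\ref{lem:duality}(\ref{duality-i}) applied inductively, together with Lemma~\ref{lem:extension} to move between $\sat$ and $\sat_e$ on the $\Gamma^\AP$ part.

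Next I would invoke Theorem~\ref{thm:prop-logical-equivalence}: for formulas in $\Gamma^\AP$, logical equivalence $\equiv$ coincides with classical propositional equivalence $\equiv_p$. Since classical propositional logic has a DNF theorem — every propositional formula over the atoms $\AP$ is $\equiv_p$ to one of the form $\bigvee\{C_1,\ldots,C_k\}$ with each $C_i$ a conjunction of literals — there is such a DNF formula $\gamma'$ with $\hat{\gamma} \equiv_p \gamma'$. Note $\gamma'$ is syntactically a formula of $\Gamma^\AP_e$ (it uses only literals, $\land$, $\lor$, and the constants $\logictrue,\logicfalse$, all of which are available). Then Theorem~\ref{thm:prop-logical-equivalence} upgrades $\hat{\gamma}\equiv_p\gamma'$ to $\hat{\gamma}\equiv\gamma'$, i.e.\ $\den{\hat\gamma}=\den{\gamma'}$ over all of $(2^\AP)^*$ including $\varepsilon$.

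Finally I would chain the equivalences: $\den{\gamma}_e = \den{\hat\gamma}_e = \den{\hat\gamma}$ (Lemma~\ref{lem:extension}, since $\hat\gamma\in\Phi^\AP$) $= \den{\gamma'}$ (from the previous paragraph) $= \den{\gamma'}_e$ (Lemma~\ref{lem:extension} again, noting $\gamma'$ is equivalent over $\Phi^\AP$ to its $\lor$-eliminated form, or directly checking $\den{\gamma'}=\den{\gamma'}_e$), so $\gamma \equiv_e \gamma'$ as required. The only genuinely delicate point — and the one worth stating carefully rather than glossing — is the handling of $\varepsilon$: one must be sure that the classical DNF equivalence, which is about subsets $A\subseteq\AP$ (equivalently, valuations), really does transfer to an equivalence that holds at $\varepsilon$ as well. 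This is exactly what Theorem~\ref{thm:prop-logical-equivalence} guarantees, so no separate argument about $\varepsilon$ is needed; the proof in the paper presumably just cites that theorem together with the standard propositional DNF result. I would keep the write-up to one short paragraph that names these two ingredients and the $\lor$-elimination step.
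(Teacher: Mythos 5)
Your proposal is correct and matches the paper's (implicit) argument: the paper offers no detailed proof, simply noting that the classical propositional DNF theorem transfers to $\Gamma^\AP_e$ via Theorem~\ref{thm:prop-logical-equivalence}, which is exactly the reduction you carry out. Your extra care in eliminating $\lor$ via Lemma~\ref{lem:duality} so that Theorem~\ref{thm:prop-logical-equivalence} (stated for $\Gamma^\AP$) literally applies, and in noting that the $\varepsilon$ case is covered by that theorem, just makes explicit what the paper glosses over.
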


\emph{Automaton normal form} (ANF) can be seen as an extension of DNF in which each clause is allowed to have a single subformula of form $\tlX \phi$ or $\tlWeakX \phi$, where $\phi$ is an formula in full Extended Finite LTL.  A clause in an ANF formula can be seen as defining whether or not a sequence $\pi$ satisfies the formula in terms of conditions that must hold on the first element of the sequence, if there is one, (the literals in the clause), and the rest of the sequence (the ``next-state'' formula in the clause).  This feature will be exploited in the automaton construction in the next section.  The formal definition of ANF is as follows.

\begin{definition}[Automaton Normal Form (ANF)]\label{def:anf}
\mbox{}
\begin{enumerate}
\item
An \emph{ANF clause} $C$ has form $(\bigwedge \{\ell_1, \ldots \ell_k\}) \land \mathbf{N} (\bigwedge\{\phi_1, \ldots, \phi_n\})$, where each $\ell_i$ is a literal, $\mathbf{N} \in \{\tlX, \tlWeakX\}$ and each $\phi_j \in \Phi^\AP_e$ is an arbitrary Extended Finite LTL formula.
\item
A formula in Extended Finite LTL is in \emph{automaton normal form (ANF)} iff it has form $\bigvee\{C_1, \ldots, C_k\}$, $k \geq 0$, where each $C_i$ is  an ANF clause.
\end{enumerate}
We often represent clauses as $(\bigwedge \mathcal{L}) \land \mathbf{N}(\bigwedge \mathcal{F})$, where $\mathcal{L}$ is a finite set of literals and $\mathcal{F}$ a finite set of Extended LTL formulas.  If $C = (\bigwedge \mathcal{L}) \land \mathbf{N}(\bigwedge \mathcal{F})$ we write
\begin{eqnarray*}
\textit{lits}(C)	&=& \mathcal{L} \\
\mathit{nf}(C) &=& \mathcal{F}
\end{eqnarray*}
for the set of literals and the set of ``next formulas'' following the next operator ($\tlX$ or $\tlWeakX$) in $C$.
\end{definition}

\noindent
The next lemma establishes a key feature of formulas in ANF \emph{vis \`a vis} the sequences  in $(2^\AP)^*$ that model it.

\begin{lemma}[Sequence Satisfaction and ANF]\label{lem:anf-satisfaction}
\mbox{}
\begin{enumerate}
\item\label{anf-satisfaction-i}
Let $C $ be an ANF clause. Then for any $\pi \in (2^\AP)^*$ such that $|\pi| > 0$, $\pi \sat_e C$ iff $\pi_0 \sat_p \bigwedge \mathit{lits}(C)$ and $\pi(1) \sat_e \bigwedge \mathit{nf}(C)$.
\item\label{anf-satisfaction-ii}
Let $\phi = \bigvee_i C_i$ be in ANF.  Then for every $\pi \in (2^\AP)^*$, $\pi \sat_e \phi$ iff $\pi \sat_e C_i$ for some $i$.
\end{enumerate}
\end{lemma}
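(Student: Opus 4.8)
The plan is to prove the two statements separately, with Statement~(\ref{anf-satisfaction-i}) carrying essentially all the work and Statement~(\ref{anf-satisfaction-ii}) following immediately from Lemma~\ref{lem:anf-satisfaction} applied nowhere — rather, from a routine induction unwinding $\bigvee$.

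For Statement~(\ref{anf-satisfaction-i}), fix an ANF clause $C = (\bigwedge \mathcal{L}) \land \mathbf{N}(\bigwedge \mathcal{F})$ with $\mathbf{N} \in \{\tlX, \tlWeakX\}$ and a sequence $\pi$ with $|\pi| > 0$. The conjunction $\bigwedge$-lifting from Definition~\ref{def:indexed-prop-ops} means $\pi \sat_e C$ iff $\pi \sat_e \bigwedge \mathcal{L}$ and $\pi \sat_e \mathbf{N}(\bigwedge \mathcal{F})$; a short induction on $|\mathcal{L}|$ (or on $|\mathcal{F}|$) using the semantic clause for $\land$ justifies decomposing these indexed conjunctions into their conjunct-wise meaning. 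For the literal part, since $|\pi| > 0$ I would invoke Lemma~\ref{lem:non-empty-semantics} to get $\pi \sat_e \bigwedge \mathcal{L}$ iff $\pi_0 \sat_p \bigwedge \mathcal{L}$ — but one must first lift Lemma~\ref{lem:non-empty-semantics} from Finite LTL (where it is stated) to Extended Finite LTL; this is harmless because $\bigwedge \mathcal{L}$ is a propositional formula in $\Gamma^\AP$, and Lemma~\ref{lem:extension} says $\sat$ and $\sat_e$ agree on Finite LTL formulas. For the next-state part, I case-split on $\mathbf{N}$: if $\mathbf{N} = \tlX$, the semantics of $\tlX$ gives $\pi \sat_e \tlX(\bigwedge \mathcal{F})$ iff $|\pi| \geq 1$ and $\pi(1) \sat_e \bigwedge \mathcal{F}$, and since $|\pi| > 0$ by assumption this reduces to $\pi(1) \sat_e \bigwedge \mathcal{F}$; if $\mathbf{N} = \tlWeakX$, the semantics of $\tlWeakX$ gives $\pi \sat_e \tlWeakX(\bigwedge \mathcal{F})$ iff $|\pi| = 0$ or $\pi(1) \sat_e \bigwedge \mathcal{F}$, and again $|\pi| > 0$ collapses the disjunction to $\pi(1) \sat_e \bigwedge \mathcal{F}$. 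So in both cases the next-state condition is exactly $\pi(1) \sat_e \bigwedge \mathit{nf}(C)$, which is what is wanted. Combining the literal part and the next-state part yields the claimed bi-implication.

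For Statement~(\ref{anf-satisfaction-ii}), let $\phi = \bigvee\{C_1, \ldots, C_k\}$ be in ANF. By an induction on $k$ using the semantic clause for $\lor$ (and noting $\bigvee \emptyset = \logicfalse$ is satisfied by no $\pi$, matching the empty existential), one gets that for every $\pi \in (2^\AP)^*$, $\pi \sat_e \bigvee\{C_1, \ldots, C_k\}$ iff $\pi \sat_e C_i$ for some $i \in \{1, \ldots, k\}$. This requires no hypothesis on $|\pi|$, so it holds for $\varepsilon$ as well.

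The only genuine subtlety — the place where care is needed rather than mere bookkeeping — is the hypothesis $|\pi| > 0$ in Statement~(\ref{anf-satisfaction-i}): it is exactly this assumption that lets the $\tlWeakX$ case behave like the $\tlX$ case on the next-state formula, and that lets Lemma~\ref{lem:non-empty-semantics} (rather than Lemma~\ref{lem:vareps}) govern the literals. One should note in passing that on $\varepsilon$ the statement would fail for a clause built with $\tlX$ (since $\varepsilon \not\sat_e \tlX \psi$ regardless of $\psi$) but could hold for one built with $\tlWeakX$ — which is precisely why the lemma restricts to $|\pi| > 0$, and why the weak-next operator is indispensable to the ANF machinery. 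Everything else is a routine unwinding of the indexed $\bigwedge$ and $\bigvee$ notation against the semantic clauses of Definition~\ref{def:efltl-semantics}.
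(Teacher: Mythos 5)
Your proof is correct and follows essentially the same route as the paper's: decompose the clause via the semantics of $\land$, apply Lemma~\ref{lem:non-empty-semantics} to the literal conjunction (the paper cites it directly for $\bigwedge\mathcal{L}$, while you make the harmless lift from $\sat$ to $\sat_e$ explicit via Lemma~\ref{lem:extension}), and use $|\pi|>0$ to collapse both the $\tlX$ and $\tlWeakX$ cases to $\pi(1) \sat_e \bigwedge \mathit{nf}(C)$, with Part~\ref{anf-satisfaction-ii} immediate from the semantics of $\bigvee$. No gaps; the extra care about the $\sat$ versus $\sat_e$ distinction is a fair, if minor, refinement of the paper's argument.
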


\begin{proof}
For Part~\ref{anf-satisfaction-i}, let $\pi \in (2^\AP)^*$ be such that $|\pi| > 0$.  Also let $\mathcal{L} = \textit{lits}(C)$ and $\mathcal{F} = \mathit{nf}(C)$.  We reason as follows.
\begin{align*}
\pi \sat_e C
& \;\text{iff}\; \pi \sat_e (\bigwedge \mathcal{L}) \land \mathbf{N} (\bigwedge \mathcal{F})
&& \text{Definition~\ref{def:anf}}
\\
& \;\text{iff}\; \pi \sat_e \bigwedge \mathcal{L} \;\text{and}\;
   \pi \sat_e \mathbf{N} (\bigwedge \mathcal{F})
&& \text{Definition of $\sat_e$}
\\
& \;\text{iff}\; \pi_0 \sat_p \bigwedge \mathcal{L} \;\textnormal{and}\;
   \pi \sat_e \mathbf{N} (\bigwedge \mathcal{F})
&& \text{Lemma~\ref{lem:non-empty-semantics}, $\bigwedge \mathcal{L} \in \Gamma^\AP_e$}
\\
&  \;\text{iff}\; \pi_0 \sat_p \bigwedge \mathcal{L} \;\textnormal{and}\;
   \pi(1) \sat_e \bigwedge \mathcal{F}
&& \text{Definition of $\sat_e$}
\end{align*}

\noindent
Part~\ref{anf-satisfaction-ii} follows immediately from the semantics of $\bigvee$.
\end{proof}

\noindent
The importance of this lemma derives especially from its first statement.  This asserts that determining if an ANF clause is satisfied by a non-empty sequence can be broken down into a propositional determination about its initial state ($\pi_0$) and the literals in the clause, and a determination about the rest of the sequence ($\pi(1)$) and the ``next formulas'' of the clause.  This observation is central to the construction of automata from formulas that we give later.

In the rest of this section we will show that for any Extended Finite LTL formula $\phi$ there is a logically equivalent one in ANF.  We start by stating some logical identities that will be used later.

\begin{lemma}[Distributivity of $\tlX$, $\tlWeakX$]\label{lem:distributivity}
Let $\phi_1,\phi_2 \in \Phi^\AP_e$.
\begin{enumerate}
\item
$(\tlX \phi_1) \land (\tlX \phi_2) \equiv_e \tlX(\phi_1 \land \phi_2)$.
\item\label{iv}
$(\tlWeakX \phi_1) \land (\tlWeakX \phi_2) \equiv_e \tlWeakX(\phi_1 \land \phi_2)$.
\item\label{i}
$(\tlX \phi_1) \lor (\tlX \phi_2) \equiv_e \tlX(\phi_1 \lor \phi_2)$.
\item
$(\tlWeakX \phi_1) \lor (\tlWeakX \phi_2) \equiv_e \tlWeakX(\phi_1 \lor \phi_2)$.
\end{enumerate}
\end{lemma}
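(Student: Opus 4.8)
The plan is to prove each of the four identities directly from the semantics $\sat_e$ of Definition~\ref{def:efltl-semantics}, treating $\varepsilon$ and nonempty sequences separately, since the only subtlety in Finite LTL is the behaviour of the empty sequence. For each identity I would fix an arbitrary $\pi \in (2^\AP)^*$ and show that $\pi$ satisfies the left-hand side iff it satisfies the right-hand side, using the semantic clauses for $\land$, $\lor$, $\tlX$ and $\tlWeakX$ together with the elementary fact that $(x\cdot\pi')(1) = \pi'$ and $\varepsilon(1)$ is, for the purposes of $\sat_e$, handled by the $|\pi| = 0$ disjuncts in the clauses for $\tlX$ and $\tlWeakX$.

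For part~1, $(\tlX\phi_1)\land(\tlX\phi_2)\equiv_e\tlX(\phi_1\land\phi_2)$: if $|\pi| = 0$ then neither side holds, since $\tlX$ requires $|\pi|\geq 1$; if $|\pi|\geq 1$ then $\pi\sat_e(\tlX\phi_1)\land(\tlX\phi_2)$ iff $\pi(1)\sat_e\phi_1$ and $\pi(1)\sat_e\phi_2$ iff $\pi(1)\sat_e\phi_1\land\phi_2$ iff $\pi\sat_e\tlX(\phi_1\land\phi_2)$. Part~3 is identical with ``and'' replaced by ``or''. For part~2, $(\tlWeakX\phi_1)\land(\tlWeakX\phi_2)\equiv_e\tlWeakX(\phi_1\land\phi_2)$: if $|\pi| = 0$ then both sides hold vacuously, since each $\tlWeakX\psi$ is satisfied by $\varepsilon$; if $|\pi|\geq 1$ then the reasoning is as in part~1. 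Part~4 is again the same with ``or'' for ``and''; note that $\varepsilon$ satisfies $(\tlWeakX\phi_1)\lor(\tlWeakX\phi_2)$ because it satisfies each disjunct, and it satisfies $\tlWeakX(\phi_1\lor\phi_2)$ because $|\varepsilon| = 0$, so both sides hold on $\varepsilon$.

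Since each case splits cleanly into the $|\pi| = 0$ and $|\pi| \geq 1$ subcases, and within each subcase the equivalence is an immediate consequence of the boolean connective commuting with the (trivial or one-step) semantic unfolding, I would most likely present this compactly as four short $\text{iff}$-chains. The only point requiring a moment's care --- and the closest thing to an obstacle --- is the $\lor$-cases over $\tlWeakX$ (part~4): one must check that $\varepsilon$ lies in the denotation of \emph{both} sides rather than just one, which is where the identity would fail for a ``strong'' disjunction of strong-next formulas. Once that observation is in place, the proof is routine, and indeed the statement could simply be discharged with ``Follows from the definition of $\sat_e$, by case analysis on whether $\pi = \varepsilon$.''
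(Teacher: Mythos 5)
Your proof is correct and takes essentially the same route as the paper, which simply discharges the lemma as immediate from the semantics of Extended Finite LTL; your case split on $|\pi| = 0$ versus $|\pi| \geq 1$ just makes explicit the routine semantic verification the paper leaves to the reader.
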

\begin{proof}
Immediate from the semantics of Extended Finite LTL.
\end{proof}

\noindent
The next lemma establishes that in a certain sense, $\tlX$ ``dominates'' $\tlWeakX$ in the context of conjunction.

\begin{lemma}[$\tlX$ Dominates $\tlWeakX$]\label{lem:domination}
The following holds for any Extended Finite LTL formulas $\phi_1, \phi_2$.
$$(\tlX \phi_1) \land (\tlWeakX \phi_2) \equiv_e \tlX(\phi_1 \land \phi_2)$$
\end{lemma}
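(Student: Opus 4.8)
The plan is to prove the semantic equivalence directly from Definition~\ref{def:efltl-semantics} by fixing an arbitrary $\pi \in (2^\AP)^*$ and establishing $\pi \sat_e (\tlX \phi_1) \land (\tlWeakX \phi_2)$ iff $\pi \sat_e \tlX(\phi_1 \land \phi_2)$. The only place where $\tlX$ and $\tlWeakX$ behave differently is on the empty sequence, so a case split on whether $|\pi| = 0$ is the natural way to organise the argument.

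First I would handle the case $|\pi| = 0$. Here $\pi \not\sat_e \tlX \phi_1$, since the semantics of $\tlX$ requires $|\pi| \geq 1$, so the left-hand conjunction fails; likewise $\pi \not\sat_e \tlX(\phi_1 \land \phi_2)$ for the same reason. Thus both sides are false and the bi-implication holds vacuously. This is exactly the case in which $\tlWeakX \phi_2$ would be satisfied but $\tlX$-style formulas are not, and it shows that conjoining with $\tlX \phi_1$ already rules $\varepsilon$ out, so the weak-next conjunct contributes nothing.

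Next I would handle $|\pi| \geq 1$. In this case $\pi \sat_e \tlX \phi_1$ iff $\pi(1) \sat_e \phi_1$, and $\pi \sat_e \tlWeakX \phi_2$ iff $\pi(1) \sat_e \phi_2$, the disjunct ``$|\pi| = 0$'' of the $\tlWeakX$ clause being false. Hence $\pi \sat_e (\tlX \phi_1) \land (\tlWeakX \phi_2)$ iff $\pi(1) \sat_e \phi_1$ and $\pi(1) \sat_e \phi_2$, iff $\pi(1) \sat_e \phi_1 \land \phi_2$ by the semantics of $\land$, iff $\pi \sat_e \tlX(\phi_1 \land \phi_2)$ by the semantics of $\tlX$ together with $|\pi| \geq 1$. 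Combining the two cases yields $\den{(\tlX \phi_1) \land (\tlWeakX \phi_2)}_e = \den{\tlX(\phi_1 \land \phi_2)}_e$, i.e.\ the claimed $\equiv_e$.

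There is no real obstacle here; as with the neighbouring lemmas, the statement is immediate from the definition of $\sat_e$. The one point worth making explicit is the empty-sequence case, since that is precisely where $\tlX$ and $\tlWeakX$ diverge, and the content of the lemma is simply that conjunction with $\tlX \phi_1$ collapses $\tlWeakX \phi_2$ to its strong form.
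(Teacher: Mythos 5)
Your proof is correct and follows essentially the same route as the paper, which simply notes that any $\pi$ satisfying $(\tlX \phi_1) \land (\tlWeakX \phi_2)$ must have $|\pi| > 0$; your case split on $|\pi| = 0$ versus $|\pi| \geq 1$ is just a fully worked-out version of that observation.
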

\begin{proof}
Follows from the fact that if $\pi \sat (\tlX \phi_1) \land (\tlWeakX \phi_2)$ then $|\pi| > 0$.
\end{proof}

\noindent
The final lemma is key to our ANF transformation result.  It states that operators $\tlU$ and $\tlR$ may be rewritten using operators $\land$, $\lor$, $\tlX$ and $\tlWeakX$.

\begin{lemmarep}[Unrolling $\tlU$ and $\tlR$]\label{lem:unrolling}
The following holds for any Extended Finite LTL formulas $\phi_1$ and $\phi_2$.
\begin{enumerate}
\item\label{unrolling-i}
$\phi_1 \tlU \phi_2 \equiv_e \phi_2 \lor (\phi_1 \land \tlX(\phi_1 \tlU \phi_2))$.
\item\label{unrolling-ii}
$\phi_1 \tlR \phi_2 \equiv_e \phi_2 \land (\phi_1 \lor \tlWeakX(\phi_1 \tlR \phi_2))$.
\end{enumerate}
\end{lemmarep}
\begin{proofsketch}
Follows from the semantics of Extended Finite LTL.  Details may be found in the appendix.
\end{proofsketch}
\begin{proof}
We prove Statement~(\ref{unrolling-i}) by showing that
$
\den{\phi_1 \tlU \phi_2}_e = \den{\phi_2 \land (\phi_1 \lor \tlWeakX(\phi_1 \tlR \phi_2))}_e.
$
\begin{flalign*}
& \den{\phi_1 \tlU \phi_2}_e
\\
&{=}\; \{\pi \mid \pi \sat_e \phi_1 \tlU \phi_2\}
&& \text{Definition of $\den{-}_e$}
\\
&
\multispan3{${=}\; 
	\{
		\pi \mid 	\exists j \colon 0 \leq j \leq |\pi| \colon \pi(j) \sat_e \phi_2
	    			\;\text{and}\;
   				\forall i \colon 0 \leq i < j \colon \pi(i) \sat_e \phi_1
	\}$
\hfil}
\\
&
&& \text{Definition of $\sat_e$}
\\
&{=}\; \{\pi \mid \pi(0) \sat_e \phi_2\}
\\
\multispan4{$\qquad \cup\;
	\{
		\pi \mid 	\exists j \colon 1 \leq j \leq |\pi| \colon \pi(j) \sat_e \phi_2
	    			\;\text{and}\;
   				\forall i \colon 0 \leq i < j \colon \pi(i) \sat_e \phi_1
	\}$
\hfil}
\\
&
&& \text{Set theory}
\\
& 
\multispan3{${=}\;
	\den{\phi_2}_e
	\cup
	\{
		\pi \mid 	\exists j \colon 1 \leq j \leq |\pi| \colon \pi(j) \sat_e \phi_2
	    			\;\text{and}\;
   				\forall i \colon 0 \leq i < j \colon \pi(i) \sat_e \phi_1
	\}$
\hfil}
\\
&
&& \text{$\pi(0) = \pi$, Definition of $\den{-}_e$}
\\
&
\multispan3{${=}\;
	\den{\phi_2}_e \cup
	(
		\{\pi \mid \pi(0) \sat \phi_1\}
		\;\cap$
\hfil}
\\
&
\multispan3{\hfil 
		$\{\pi \mid \exists j \colon 1 \leq j \leq |\pi| \colon \pi(j) \sat_e \phi_2 \textnormal{ and }  \forall i \colon 1 \leq i < j \colon \pi(i) \sat_e \phi_1\}
	) $
}
\\
&
&& \text{Set theory}
\\
&
\multispan3{${=}\;
	\den{\phi_2}_e \cup
	(
		\den{\phi_1}_e
		\;\cap$
\hfil}
\\
&
\multispan3{\hfil 
		$\{\pi \mid \exists j \colon 1 \leq j \leq |\pi| \colon \pi(j) \sat_e \phi_2 \textnormal{ and }  \forall i \colon 1 \leq i < j \colon \pi(i) \sat_e \phi_1\}
	) $
}
\\
&
&& \text{$\pi(0) = \pi$, Definition of $\den{-}_e$}
\\
&
\multispan3{${=}\;
	\den{\phi_2}_e \cup
	(
		\den{\phi_1}_e
		\;\cap$
\hfil}
\\
&
\multispan3{\hfil 
		$\{\pi \mid \exists j' \colon 0 \leq j' \leq |\pi|-1 \colon \pi(j'+1) \sat_e \phi_2 \textnormal{ and }  \forall i' \colon 0 \leq i < j' \colon \pi(i'+1) \sat_e \phi_1\}
	) $
}
\\
&
&& j = j'+1, i=i'+1
\\
&
\multispan3{${=}\;
	\den{\phi_2}_e \cup
	(
		\den{\phi_1}_e
		\;\cap$
\hfil}
\\
&
\multispan3{\hfil 
		$\{\pi \mid \exists j' \colon 0 \leq j' \leq |\pi(1)| \colon \pi(1)(j') \sat_e \phi_2 \textnormal{ and }  \forall i' \colon 0 \leq i < j' \colon \pi(1)(i') \sat_e \phi_1\}
	) $
}
\\
&
&& |\pi(1)| = |\pi|-1, \pi(j'+1) = \pi(1)(j'), \pi(i'+1) = \pi(1)(i')
\\
&
\multispan3{${=}\;
	\den{\phi_2}_e \cup
	(
		\den{\phi_1}_e
		\cap
		\{\pi \mid \pi(1) \sat_e \phi_1 \tlU \phi_2\}
	) $
\hfil}
\\
&
&& \text{Definition of $\sat_e$}
\\
&
\multispan3{${=}\;
	\den{\phi_2}_e \cup
	(
		\den{\phi_1}_e
		\cap
		\{\pi \mid \pi \sat_e \tlX(\phi_1 \tlU \phi_2)\}
	) $
\hfil}
\\
&
&& \text{Definition of $\sat_e$}
\\
&
\multispan3{${=}\;
	\den{\phi_2}_e \cup
	(
		\den{\phi_1}_e
		\cap
		\den{\tlX(\phi_1 \tlU \phi_2)}_e\}
	) $
\hfil}
\\
&
&& \text{Definition of $\den{-}_e$}
\\
&
\multispan3{${=}\;
	\den{\phi_2 \lor (\phi_1 \land \tlX (\phi_1 \tlU \phi_2)}_e$
\hfil}
\\
&
&& \text{Definition of $\den{-}_e$}
\end{flalign*}

To prove Statement~(\ref{unrolling-ii}), we can rely the duality of $\tlR$ and $\tlU$ and Statement~(\ref{unrolling-i}).  It suffices to show that $\den{\phi_1 \tlR \phi_2} = \den{\phi_2 \land (\phi_1 \lor \tlWeakX (\phi_1 \tlR \phi_2))}_e$.  We reason as follows.
\begin{flalign*}
& \den{\phi_1 \tlR \phi_2}_e
\\
& {=}\; \den{\lnot((\lnot \phi_1) \tlU (\lnot \phi_2))}
&& \text{Lemma~\ref{lem:duality}(\ref{duality-iii})}
\\
& {=}\; (2^{\AP})^* -  \den{(\lnot \phi_1) \tlU (\lnot \phi_2)}_e
&& \text{Definition of $\den{-}_e$}
\\
& {=}\; (2^\AP)^* - \den{(\lnot \phi_2) \lor ((\lnot \phi_1) \land \tlX  ((\lnot \phi_1) \tlU (\lnot \phi_2)))}_e
&& \text{Statement~(\ref{unrolling-i})}
\\
& {=}\; \den{\lnot\left((\lnot \phi_2) \lor ((\lnot \phi_1) \land \tlX ((\lnot \phi_1) \tlU (\lnot \phi_2)))\right)}_e
&& \text{Definition of $\den{-}_e$}
\\
& {=}\; \den{\phi_2 \land (\phi_1 \lor \lnot\tlX ((\lnot \phi_1) \tlU (\lnot \phi_2)))}_e
&& \text{Lemma~\ref{lem:duality}(\ref{duality-i})}
\\
& {=}\; \den{\phi_2 \land (\phi_1 \lor \tlWeakX \lnot((\lnot \phi_1) \tlU (\lnot \phi_2)))}_e
&& \text{Lemma~\ref{lem:duality}(\ref{duality-ii})}
\\
& {=}\;  \den{\phi_2 \land (\phi_1 \lor \tlWeakX (\phi_1 \tlR \phi_2))}_e
&& \text{Lemma~\ref{lem:duality}(\ref{duality-iii})}
\end{flalign*}
\end{proof}

\noindent
The remainder of this section will be devoted to proving the following theorem.

\begin{theorem}[Conversion to ANF]\label{thm:anf}
Let $\phi$ be an Extended Finite LTL formula in PNF.  Then there exists a transformation $\mathit{anf}$ such that $\mathit{anf}(\phi)$ is in ANF and the following hold.
\begin{enumerate}
\item
$\phi \equiv_e \mathit{anf}(\phi)$.
\item
Suppose $\mathit{anf}(\phi) = \bigvee C_i$. Then for each $C_i$ and each $\phi' \in \mathit{nf}(C_i)$, $\phi'$ is a subformula of $\phi$.
\end{enumerate}
\end{theorem}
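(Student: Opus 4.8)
The plan is to define $\mathit{anf}$ by structural induction on the PNF formula $\phi$, maintaining the invariant that the output is always in ANF \emph{and} that every formula appearing in some $\mathit{nf}(C_i)$ is a subformula of $\phi$. Since $\phi$ is in PNF, the cases to handle are: a literal ($a$ or $\lnot a$), $\tlX \psi$, $\tlWeakX \psi$, $\phi_1 \land \phi_2$, $\phi_1 \lor \phi_2$, $\phi_1 \tlU \phi_2$, and $\phi_1 \tlR \phi_2$. The base cases are the literals: a literal $\ell$ is rendered as the single-clause ANF formula $\bigvee\{(\bigwedge\{\ell\}) \land \tlWeakX(\bigwedge \emptyset)\}$, i.e. $\ell \land \tlWeakX\logictrue$; one checks via Lemma~\ref{lem:anf-satisfaction}(\ref{anf-satisfaction-i}) and the fact that $\pi(1) \sat_e \logictrue$ always holds that this is equivalent to $\ell$ on nonempty sequences, and since $\varepsilon \sat_e \tlWeakX\logictrue$ and $\varepsilon \sat_e \ell$ (for $\ell = \lnot a$; for $\ell = a$ both fail), equivalence holds on $\varepsilon$ too — actually one must be a little careful here and I expect a clause with $\tlWeakX$ for $\lnot a$ and $\tlX$ for $a$, or simply reuse Corollary~\ref{cor:strengthen-weaken}; I will pick whichever makes the $\varepsilon$ case come out right. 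For $\tlX \psi$ the clause is $(\bigwedge\emptyset) \land \tlX(\bigwedge\{\psi\})$, and for $\tlWeakX\psi$ it is $(\bigwedge\emptyset)\land\tlWeakX(\bigwedge\{\psi\})$; in both cases $\psi$ is a subformula of $\phi$, so the subformula invariant is immediate.

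The inductive cases use the logical identities already proved. For $\phi_1 \lor \phi_2$, set $\mathit{anf}(\phi_1 \lor \phi_2)$ to be the disjunction (union of clause-sets) of $\mathit{anf}(\phi_1)$ and $\mathit{anf}(\phi_2)$; this is visibly in ANF, equivalence follows from Lemma~\ref{lem:anf-satisfaction}(\ref{anf-satisfaction-ii}) and the induction hypothesis, and every next-formula is a subformula of $\phi_1$ or $\phi_2$, hence of $\phi$. For $\phi_1 \tlU \phi_2$, first apply Lemma~\ref{lem:unrolling}(\ref{unrolling-i}) to rewrite it as $\phi_2 \lor (\phi_1 \land \tlX(\phi_1 \tlU \phi_2))$; then recursively apply $\mathit{anf}$ to $\phi_1$ and $\phi_2$, form $\mathit{anf}(\phi_1) \land \mathbf{N}(\cdots)$-style products, and crucially note that the residual next-formula $\phi_1 \tlU \phi_2$ is $\phi$ itself, hence trivially a subformula of $\phi$ — so the subformula invariant is preserved even though $\mathit{anf}$ is \emph{not} applied recursively to the unrolled $\phi_1 \tlU \phi_2$. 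The case $\phi_1 \tlR \phi_2$ is dual, using Lemma~\ref{lem:unrolling}(\ref{unrolling-ii}) and $\tlWeakX$ in place of $\tlX$. The genuinely laborious case is $\phi_1 \land \phi_2$: here I would take $\mathit{anf}(\phi_1) = \bigvee_i C_i$ and $\mathit{anf}(\phi_2) = \bigvee_j D_j$ and define $\mathit{anf}(\phi_1 \land \phi_2) = \bigvee_{i,j} (C_i \land D_j)$, where the product clause $C_i \land D_j$ is formed by unioning the literal sets and combining the next-parts. Combining the next-parts is exactly where Lemmas~\ref{lem:distributivity} and~\ref{lem:domination} are needed: if both clauses carry $\tlX$ (or both $\tlWeakX$), use distributivity to merge into a single $\mathbf{N}(\bigwedge(\mathit{nf}(C_i)\cup\mathit{nf}(D_j)))$; if one carries $\tlX$ and the other $\tlWeakX$, use Lemma~\ref{lem:domination} so that $\tlX$ wins and again the next-formula set is the union. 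In every sub-case $\mathit{nf}$ of the product is a subset of $\mathit{nf}(C_i)\cup\mathit{nf}(D_j)$, so the subformula invariant passes through; correctness follows from Lemma~\ref{lem:anf-satisfaction}(\ref{anf-satisfaction-i}) (distributing $\sat_p$ over the union of literal sets) together with the induction hypothesis, plus a separate check on $\varepsilon$.

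I would structure the write-up as: (i) the recursive definition of $\mathit{anf}$, case by case, with the clause-product operation spelled out once as an auxiliary definition; (ii) a lemma that the product of two ANF clauses is semantically the conjunction and has $\mathit{nf}$ contained in the union of the two $\mathit{nf}$'s (this is the load-bearing lemma and where Lemmas~\ref{lem:distributivity} and~\ref{lem:domination} get invoked); (iii) the induction establishing both conclusions of the theorem simultaneously. The main obstacle I anticipate is the $\phi_1 \land \phi_2$ case, specifically the bookkeeping needed to show equivalence holds on \emph{all} sequences including $\varepsilon$: Lemma~\ref{lem:anf-satisfaction}(\ref{anf-satisfaction-i}) only speaks to nonempty $\pi$, so the $\varepsilon$ case must be argued directly from the semantics of $\sat_e$ (noting that $\varepsilon \sat_e \tlX \psi$ is always false while $\varepsilon \sat_e \tlWeakX\psi$ is always true), and one has to confirm that the $\tlX$-dominates-$\tlWeakX$ merging does not break the $\varepsilon$ behaviour. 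A secondary subtlety is ensuring the base-case encoding of a bare literal interacts correctly with $\varepsilon$ (since $\varepsilon \sat_e \lnot a$ but $\varepsilon \not\sat_e a$), which dictates whether the literal clause should use $\tlX\logictrue$ or $\tlWeakX\logictrue$ as its trivial next-part.
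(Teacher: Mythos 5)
Your construction is correct, but it takes a genuinely different route from the paper. The paper defines $\mathit{anf} = an \circ pa \circ gt$: a global guarding pass $gt$ (Definition~\ref{def:gt}, Lemma~\ref{lem:guardedness}) that unrolls every $\tlU$ and $\tlR$ via Lemma~\ref{lem:unrolling}, then a classical DNF conversion $pa$ in which subformulas $\mathbf{N}\phi'$ are treated as opaque pseudo-literals (Lemma~\ref{lem:pseudo-anf}), and finally a per-clause repair $ct$/$an$ that merges the several next-state pseudo-literals into one using Lemmas~\ref{lem:distributivity} and~\ref{lem:domination} (Lemma~\ref{lem:anf-clause-conversion}, Corollary~\ref{cor:anf-formula-conversion}); the subformula condition falls out of Lemma~\ref{lem:guardedness}(\ref{guardedness-iv}) being preserved by the later phases. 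You instead define $\mathit{anf}$ by a single structural recursion with an explicit clause-product operation for $\land$, unrolling $\tlU$/$\tlR$ on the fly and leaving the residual $\phi_1\tlU\phi_2$ (resp.\ $\phi_1\tlR\phi_2$) unexpanded in the next part --- the same observation the paper exploits, since that residual is a (not necessarily proper) subformula of $\phi$. Both arguments hinge on the identical trio of identities (unrolling, distributivity of $\tlX$/$\tlWeakX$ over $\land$, and $\tlX$ dominating $\tlWeakX$); the paper's pipeline buys modularity (each phase has a one-line correctness lemma, and the DNF step can later be relaxed as discussed in Section~\ref{subsec:discussion}), while your one-pass recursion avoids the pseudo-ANF intermediate form at the cost of proving the clause-product lemma and threading two invariants through the induction. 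Your two hedges resolve cleanly: the trivial next part for a bare literal should be $\tlWeakX\logictrue$ (exactly what the paper's $ct$ appends, e.g.\ $ct(c \land b) = c \land b \land \tlWeakX\logictrue$), which is correct for both $a$ and $\lnot a$ since $a$ itself already forces non-emptiness while $\lnot a$ must remain satisfiable by $\varepsilon$; and your worry about Lemma~\ref{lem:anf-satisfaction}(\ref{anf-satisfaction-i}) covering only non-empty $\pi$ is moot, because the product-clause equivalence can be derived entirely from $\equiv_e$-level facts (Lemmas~\ref{lem:distributivity} and~\ref{lem:domination} plus associativity, commutativity, and $\land$/$\lor$ distribution, all of which hold for every sequence including $\varepsilon$), so no separate empty-sequence analysis is needed.
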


\noindent
This theorem states that any PNF Extended LTL formula $\phi$ can be converted into ANF formula $\mathit{anf}(\phi)$, and in such away that each clause's ``next-state subformula'' consists of a conjunction of subformulas of $\phi$.  As any Extended LTL formula can be converted into PNF, this ensures that any Extended LTL formula can be converted into ANF.

To prove this theorem, we define several formula transformations that, when applied in sequence, yield a formula in ANF with the desired properties.  The first transformation ensures that all occurrences of $\tlU$ and $\tlR$ are \emph{guarded} in the resulting formula, in the following sense.

\begin{definition}[Guardedness]\label{def:guardedness}
Let $\phi$ be an Extended Finite LTL formula.
\begin{enumerate}
\item
Let $\phi'$ be a subformula of $\phi$.  Then $\phi'$ is \emph{guarded} in $\phi$ iff for every occurrence of $\phi'$ in $\phi$ is within an occurrence of a subformula of $\phi$ of form $\mathbf{N} \phi''$, where $\mathbf{N} \in \{\tlX, \tlWeakX\}$. 
\item
Formula $\phi$ is \emph{guarded} iff every subformula of $\phi$ of form $\phi_1 \tlU \phi_2$ or $\phi_1 \tlR \phi_2$ appears guarded in $\phi$.
\end{enumerate}
\end{definition}

\noindent
As an example of the above definition, consider formula $\phi = (a \tlU b) \land \tlX(a \tlU b)$.   This formula is not guarded, because the left-most occurrence of $(a \tlU b)$ does not appear within an occurrence of a subformula of form $\tlX \phi''$.  However, $\phi' = (b \lor (a \land \tlX (a \tlU b))) \land \tlX(a \tlU b)$ is guarded, and indeed $\phi' \equiv_e \phi$ due to Lemma~\ref{lem:unrolling}(\ref{unrolling-i}).

We now define a transformation $gt$ on formulas; the intent of this transformation is that $gt(\phi)$ is guarded, and $gt(\phi) \equiv_e \phi$.

\begin{definition}[Guardedness Transformation]\label{def:gt}
Extended Finite LTL formula transformation $gt$ is defined inductively as follows.
$$
gt(\phi) =
\left\{
\begin{array}{l@{\;\;\;}p{1.5in}}
a
& if $\phi = a$
\\
\lnot (gt(\phi'))
& if $\phi = \lnot \phi'$
\\
gt(\phi_1) \land gt(\phi_2)
& if $\phi = \phi_1 \land \phi_2$
\\
gt(\phi_1) \lor gt(\phi_2)
& if $\phi = \phi_1 \lor \phi_2$
\\
\phi
& if $\phi = \tlX \phi'$ or $\phi = \tlWeakX \phi'$
\\
gt(\phi_2) \lor (gt(\phi_1) \land \tlX \phi)
& if $\phi = \phi_1 \tlU \phi_2$
\\
gt(\phi_2) \land (gt(\phi_1) \lor \tlWeakX \phi)
& if $\phi = \phi_1 \tlR \phi_2$
\end{array}
\right.
$$
\end{definition}

\noindent
We have the following.

\begin{lemma}[Properties of $\mathit{gt}$]\label{lem:guardedness}
Let $\phi$ be an Extended Finite LTL formula.  Then:
\begin{enumerate}
\item\label{guardedness-i}
$gt(\phi)$ is guarded.
\item\label{guardedness-ii}
$gt(\phi) \equiv_e \phi$.
\item\label{guardedness-iii}
If $\phi$ is in PNF, then so is $gt(\phi)$.
\item\label{guardedness-iv}
Let $\mathbf{N} \phi'$ be a subformula of $gt(\phi)$, where $\mathbf{N} \in \{\tlX, \tlWeakX\}$.  Then $\phi'$ is a subformula of $\phi$.
\end{enumerate}
\end{lemma}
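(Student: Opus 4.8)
The plan is to establish all four statements by a single structural induction on $\phi$, proceeding through the defining clauses of $\mathit{gt}$ (Definition~\ref{def:gt}). Two structural features of $\mathit{gt}$ are what make everything work: on a formula whose outermost operator is $\tlX$ or $\tlWeakX$, $\mathit{gt}$ returns the formula unchanged, \emph{without} recursing into its body; and on $\phi_1 \tlU \phi_2$ (resp.\ $\phi_1 \tlR \phi_2$) it unrolls the modality one step, producing a Boolean combination of $\mathit{gt}(\phi_1)$, $\mathit{gt}(\phi_2)$ and a \emph{newly introduced} subformula $\tlX(\phi_1 \tlU \phi_2)$ (resp.\ $\tlWeakX(\phi_1 \tlR \phi_2)$) that now sits behind a next-operator. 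All four proofs hang off these two observations.

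Statement~(\ref{guardedness-ii}) is the most routine: the atomic and next cases are trivial since $\mathit{gt}$ is the identity there; the $\lnot$, $\land$, $\lor$ cases follow from the induction hypothesis using that $\equiv_e$ is a congruence (immediate from Definition~\ref{def:efltl-semantics}); and the $\tlU$ and $\tlR$ cases follow by replacing $\mathit{gt}(\phi_1), \mathit{gt}(\phi_2)$ by $\phi_1, \phi_2$ (induction hypothesis plus congruence) and then invoking the unrolling identities of Lemma~\ref{lem:unrolling}. For Statement~(\ref{guardedness-iii}), I would check that the right-hand side of each $\mathit{gt}$ clause stays in PNF: in the $\tlX$/$\tlWeakX$ cases $\mathit{gt}$ returns its (PNF) input; in the $\tlU$/$\tlR$ cases, $\mathit{gt}(\phi_i)$ is PNF by induction and $\tlX(\phi_1 \tlU \phi_2)$, $\tlWeakX(\phi_1 \tlR \phi_2)$ are PNF because $\phi_1 \tlU \phi_2$, $\phi_1 \tlR \phi_2$ are (being subformulas of the PNF formula $\phi$), so the Boolean combination is PNF; and in the $\lnot$ case — the only place where the PNF hypothesis is genuinely used — a PNF formula $\lnot\psi$ must have $\psi$ atomic, whence $\mathit{gt}(\lnot\psi) = \lnot\psi$.

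Statement~(\ref{guardedness-iv}) is a straightforward induction: a subformula $\mathbf{N}\chi$ of $\mathit{gt}(\phi)$ is either an occurrence already present inside $\mathit{gt}$ of a proper subformula of $\phi$ (apply the induction hypothesis), or one of the next-formulas $\mathit{gt}$ itself introduces — $\tlX\phi'$, $\tlWeakX\phi'$, $\tlX(\phi_1 \tlU \phi_2)$, $\tlWeakX(\phi_1 \tlR \phi_2)$ — whose body is literally a subformula of $\phi$, or lies strictly inside such an introduced next-formula, in which case $\chi$ is a subformula of the corresponding original subformula of $\phi$, hence of $\phi$. Statement~(\ref{guardedness-i}) is handled in the same inductive pass: one shows every $\tlU$- or $\tlR$-shaped subformula of $\mathit{gt}(\phi)$ has \emph{all} its occurrences inside some subformula $\mathbf{N}\psi''$. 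In the Boolean cases this is immediate, since such an occurrence lies inside one of $\mathit{gt}(\phi_1)$, $\mathit{gt}(\phi_2)$, where it is guarded by the induction hypothesis, and the guarding next-operator persists in the enclosing formula. In the $\tlX\phi'$/$\tlWeakX\phi'$ cases, every $\tlU$/$\tlR$ subformula is a subformula of $\phi'$ and hence occurs inside the outermost $\mathbf{N}$; and in the $\tlU$/$\tlR$ cases the introduced $\tlX(\phi_1 \tlU \phi_2)$ (resp.\ $\tlWeakX(\phi_1 \tlR \phi_2)$) guards $\phi_1 \tlU \phi_2$ and everything occurring within it, while any $\tlU$/$\tlR$ occurrence inside $\mathit{gt}(\phi_1)$ or $\mathit{gt}(\phi_2)$ is guarded there by induction.

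The main obstacle I expect is not conceptual but bookkeeping, and it lives in Statement~(\ref{guardedness-i}): because guardedness is a statement about \emph{all} occurrences of a subformula, one has to rule out a $\tlU$/$\tlR$ subformula occurring both guarded (inside an introduced next-operator) and unguarded (somewhere else) in $\mathit{gt}(\phi)$. This is dispatched by the observation that in every $\mathit{gt}$ clause any occurrence lying outside the introduced next-operators already lies inside some $\mathit{gt}(\phi_i)$, for which Statement~(\ref{guardedness-i}) holds by induction, so no unguarded occurrence can survive. With that point pinned down, the remainder is a mechanical walk through the clauses of Definition~\ref{def:gt}.
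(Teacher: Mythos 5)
Your proof is correct and follows the same route the paper intends: the paper's proof is just the one-line appeal to the definition of $gt$ and Lemma~\ref{lem:unrolling}, and your structural induction over the clauses of Definition~\ref{def:gt} (identity on $\tlX/\tlWeakX$ formulas, unrolling for $\tlU/\tlR$, congruence plus the induction hypothesis elsewhere) is exactly that argument spelled out, including the correct handling of the all-occurrences subtlety in guardedness and of the $\lnot$ case for PNF preservation.
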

\begin{proof}
Immediate from the definition of $gt$ and Lemma~\ref{lem:unrolling}.
\end{proof}

\noindent
The next transformation we describe converts guarded Extended Finite LTL formulas into \emph{pseudo-ANF}.

\begin{definition}[Pseudo ANF]
\mbox{}
\begin{enumerate}
\item
An \emph{ANF pseudo-literal} has form $a, \lnot a$ or $\mathbf{N} \phi$, where $\mathbf{N} \in \{\tlX,\tlWeakX\}$ and $\phi \in \Phi^\AP_e$.  
\item
  An \emph{ANF pseudo-clause} $C$ has form $\bigwedge \{\alpha_1, \ldots, \alpha_n\}$, $n \geq 0$, where each $\alpha_i$ is an ANF pseudo-literal.
\item
A formula is in \emph{Pseudo-ANF} if it has form $\bigvee \{C_1, \ldots, C_n\}$, $n \geq 0$, where each $C_i$ is an ANF pseudo-clause.
\end{enumerate}
\end{definition}

\noindent
Note that every literal is also an ANF pseudo-literal.  An ANF pseudo-clause differs from an ANF clause in that the former may have multiple (or no) instances of pseudo-literals of form $\textbf{N}\phi$, while the latter is required to have exactly one, of form $\textbf{N} \bigwedge \mathcal{F}$.  We have the following.

\begin{lemma}[Conversion to Pseudo ANF]\label{lem:pseudo-anf}
Let $\phi$ be a guarded Extended Finite LTL formula in PNF.  Then there exists a formula $pa(\phi)$ such that:
\begin{enumerate}
\item\label{pseudo-anf-i}
$pa(\phi)$ is in Pseudo ANF.
\item\label{pseud-anf-ii}
$pa(\phi) \equiv_e \phi$.
\end{enumerate}
\end{lemma}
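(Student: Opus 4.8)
The plan is to define $pa$ by structural induction on $\phi$, exploiting the fact that a guarded PNF formula has a purely ``propositional'' top layer: reading from the root downward and stopping at the first $\tlX$ or $\tlWeakX$ encountered on each branch, one sees only literals, subformulas of the form $\mathbf{N}\psi$ with $\mathbf{N} \in \{\tlX,\tlWeakX\}$, and the connectives $\land$ and $\lor$. There is no top-level negation (by PNF) and no top-level $\tlU$ or $\tlR$ (by guardedness). Treating literals and $\mathbf{N}\psi$-subformulas as opaque ``pseudo-literals,'' the task reduces to putting this propositional layer into a DNF-like shape, for which only distributivity of $\land$ over $\lor$, together with associativity, commutativity and idempotence of $\land$ and $\lor$, is needed; all of these identities hold in Extended Finite LTL because, by \defref{def:efltl-semantics}, $\sat_e$ interprets $\land$ as intersection and $\lor$ as union of sets of sequences, irrespective of the shape of the conjuncts or disjuncts. (This is essentially \emph{Theorem~\ref{thm:dnf}} specialized to the negation-free case, applied to the pseudo-literals rather than to atoms.)

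Concretely, I would set $pa(\ell) = \ell$ for a literal $\ell$; $pa(\mathbf{N}\psi) = \mathbf{N}\psi$, a one-clause, one-pseudo-literal formula, noting that $\psi \in \Phi^\AP_e$ is allowed verbatim inside a pseudo-literal so no recursion into $\psi$ is required; $pa(\phi_1 \lor \phi_2) = pa(\phi_1) \lor pa(\phi_2)$, which simply concatenates the two clause sets; and, writing $pa(\phi_1) = \bigvee_i C_i$ with $C_i = \bigwedge \mathcal{A}_i$ and $pa(\phi_2) = \bigvee_j D_j$ with $D_j = \bigwedge \mathcal{B}_j$, $pa(\phi_1 \land \phi_2) = \bigvee_{i,j} \bigwedge(\mathcal{A}_i \cup \mathcal{B}_j)$. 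Two small preliminary observations are needed to justify this recursion. First, if $\phi$ is guarded then $\phi$ itself cannot have the form $\phi_1 \tlU \phi_2$ or $\phi_1 \tlR \phi_2$, since such a $\phi$ would be a $\tlU$/$\tlR$-subformula of itself occurring under no $\mathbf{N}$; hence these grammar cases never arise at the root. Second, if $\phi = \phi_1 \land \phi_2$ or $\phi = \phi_1 \lor \phi_2$ is guarded then so are $\phi_1$ and $\phi_2$: any $\tlU$/$\tlR$-subformula of $\phi_i$ is guarded in $\phi$ by some subformula $\mathbf{N}\chi$, and that $\mathbf{N}\chi$, being a contiguous $\mathbf{N}$-formula containing an occurrence lying inside $\phi_i$, must itself lie inside $\phi_i$.

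Verification of Part~(\ref{pseudo-anf-i}) is then immediate from the shape of each clause produced — a conjunction of pseudo-literals — and of the overall disjunction, using \defref{def:indexed-prop-ops} to identify singleton conjunctions and disjunctions with their unique element. Part~(\ref{pseud-anf-ii}) follows by the same induction: the literal and $\mathbf{N}\psi$ base cases are trivial; the $\lor$ case is immediate from the semantics of $\lor$ and the induction hypothesis; and the $\land$ case combines the induction hypothesis with the distributive law $(\bigwedge P) \land \bigvee\{C_1,\ldots,C_m\} \equiv_e \bigvee\{(\bigwedge P)\land C_1,\ldots,(\bigwedge P)\land C_m\}$ applied repeatedly, together with $\bigwedge(\mathcal{A} \cup \mathcal{B}) \equiv_e (\bigwedge \mathcal{A}) \land (\bigwedge \mathcal{B})$ to fuse two pseudo-clauses into one. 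I expect the only real friction to be the bookkeeping around these $\bigwedge$/$\bigvee$-over-sets identities and the guardedness-inheritance observation above; once those are in place the argument is a routine induction with no fixpoint or unfolding subtleties, since guardedness has already quarantined every $\tlU$ and $\tlR$ behind a $\tlX$ or $\tlWeakX$.
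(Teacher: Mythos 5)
Your proposal is correct and is essentially the paper's own argument: the paper defines $pa$ as the classical DNF transformation applied with the ANF pseudo-literals (literals and $\mathbf{N}\psi$-formulas) treated as opaque literals, which is exactly the induction you spell out. Your additional observations (guardedness rules out $\tlU$/$\tlR$ at the top propositional layer and is inherited through $\land$/$\lor$, and distributivity of $\land$ over $\lor$ holds under $\sat_e$) are just the details the paper leaves implicit.
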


\begin{proof}
Transformation $pa$ is a version of the classical DNF transformation for propositional formulas in which the ANF pseudo-literals are treated as literals.
\end{proof}

\noindent
The final transformation, $an$, converts formulas in pseudo-ANF into semantically equivalent formulas in ANF.

\begin{definition}[Pseudo-ANF to ANF Conversion]\label{def:pseudo-clause-conversion}
\mbox{}
\begin{enumerate}
\item
Let $C = \bigwedge P$, where $P = \{\alpha_1, \ldots, \alpha_n\}$ is a set of ANF pseudo-literals and $n \geq 0$, be an ANF pseudo-clause.  Also let $L(P)$ be the literals in $P$ and $N(P) = P - L(P) = \{\textbf{N}_1\phi_1, \ldots, \textbf{N}_i\phi_i\}$ for some $0 \leq i \leq n$, each $\textbf{N}_i \in \{\tlX, \tlWeakX\}$,  be the non-literals in $P$.  Then $ct(C)$ is defined as follows.
$$
ct(C) =
\left\{
\begin{array}{lp{2in}}
(\bigwedge L(P)) \land \tlX(\bigwedge \{\phi_1, \ldots, \phi_i\})
& if $\mathbf{N}_j = \tlX$ for some $1 \leq j \leq i$
\\
(\bigwedge L(P)) \land \tlWeakX(\bigwedge \{\phi_1, \ldots, \phi_i\})
& otherwise
\end{array}
\right.
$$
\item
Let $\phi = \bigvee \{C_1, \ldots, C_n\}$, $n\geq 0$, be an Extended Finite LTL formula in Pseudo ANF.  Then transformation $an(\phi) = \bigvee \{ct(C_1), \ldots, ct(C_n)\}$.
\end{enumerate}
\end{definition}

\noindent
The next lemma and its corollary establish that $ct$ and $an$ convert pseudo-ANF clauses and formulas, respectively, into ANF clauses and formulas.

\begin{lemma}[Conversion from Pseudo ANF to ANF Clauses]\label{lem:anf-clause-conversion}
Let $C$ be a pseudo-ANF clause.  Then $ct(C)$ is an ANF clause, and $C \equiv_e ct(C)$.
\end{lemma}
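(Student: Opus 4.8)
The plan is to split the claim into its two assertions --- that $ct(C)$ is syntactically an ANF clause, and that $C \equiv_e ct(C)$ --- and to handle them separately, with a case split on whether some $\mathbf{N}_j$ equals $\tlX$. Write $C = \bigwedge P$ with $P = L(P) \cup N(P)$, where $L(P) = \{\ell_1,\ldots,\ell_m\}$ are literals and $N(P) = \{\mathbf{N}_1\phi_1,\ldots,\mathbf{N}_i\phi_i\}$ the non-literal pseudo-literals. The syntactic claim is essentially immediate from the definition of $ct$: in either branch $ct(C)$ has the shape $(\bigwedge L(P)) \land \mathbf{N}(\bigwedge\{\phi_1,\ldots,\phi_i\})$ with $\mathbf{N} \in \{\tlX,\tlWeakX\}$, which is exactly the form required of an ANF clause in Definition~\ref{def:anf}; here one also notes $\bigwedge L(P) \in \Gamma^\AP_e$ and each $\phi_j \in \Phi^\AP_e$.

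For the semantic equivalence, first observe that since $C$ is a conjunction, $\pi \sat_e C$ iff $\pi \sat_e \bigwedge L(P)$ and $\pi \sat_e \mathbf{N}_j\phi_j$ for every $j$; likewise $\pi \sat_e ct(C)$ iff $\pi \sat_e \bigwedge L(P)$ and $\pi \sat_e \mathbf{N}(\bigwedge\{\phi_1,\ldots,\phi_i\})$. So it suffices to show
$\bigwedge_{j=1}^{i} \mathbf{N}_j\phi_j \ \equiv_e\ \mathbf{N}(\bigwedge\{\phi_1,\ldots,\phi_i\})$,
where $\mathbf{N} = \tlX$ if some $\mathbf{N}_j = \tlX$ and $\mathbf{N} = \tlWeakX$ otherwise. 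I would prove this by induction on $i$, combining three ingredients already available: Lemma~\ref{lem:distributivity}(1)--(2) to merge two $\tlX$'s or two $\tlWeakX$'s, Lemma~\ref{lem:domination} ($\tlX$ dominates $\tlWeakX$) to merge a $\tlX$ with a $\tlWeakX$ into a $\tlX$, and associativity/commutativity of $\land$ (Theorem~\ref{thm:prop-logical-equivalence}, extended to $\sat_e$ via Lemma~\ref{lem:extension}) to reassociate the conjunction under the single next-operator. The base case $i = 0$ is the degenerate clause $\mathbf{N}(\bigwedge\emptyset) = \tlWeakX\logictrue$, which is satisfied by every $\pi$, matching the empty conjunction; the $i = 1$ case is trivial.

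The only mildly delicate point --- and the step I expect to require the most care --- is the bookkeeping in the inductive step: when folding in the $(i{+}1)$-st next-formula one must track which operator ``wins.'' If the accumulated operator is $\tlX$, then whatever $\mathbf{N}_{i+1}$ is, Lemma~\ref{lem:distributivity}(1) or Lemma~\ref{lem:domination} yields a $\tlX$; if the accumulated operator is $\tlWeakX$ and $\mathbf{N}_{i+1} = \tlX$, Lemma~\ref{lem:domination} flips it to $\tlX$; and if both are $\tlWeakX$, Lemma~\ref{lem:distributivity}(2) keeps it $\tlWeakX$. In every case the resulting operator is $\tlX$ exactly when at least one of the $i{+}1$ operators was $\tlX$, which is precisely the condition in the definition of $ct$, so the invariant is maintained and the induction closes. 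The corollary for $an$ on whole formulas then follows immediately by applying this lemma clausewise together with Lemma~\ref{lem:anf-satisfaction}(\ref{anf-satisfaction-ii}).
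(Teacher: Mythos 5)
Your proof is correct and follows essentially the same route as the paper, whose own argument is simply a citation of Lemma~\ref{lem:distributivity} and Lemma~\ref{lem:domination}; your induction on the number of next-pseudo-literals, with the case bookkeeping on which operator ``wins,'' is just the explicit elaboration of that citation. No gaps.
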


\begin{proof}
Follows from Lemmas~\ref{lem:distributivity} and~\ref{lem:domination}.
\end{proof}

\begin{corollary}[Conversion from Pseudo ANF to ANF Formulas]\label{cor:anf-formula-conversion}
Let $\phi$ be a pseudo-ANF formula.  Then $an(\phi)$ is in ANF, and $\phi \equiv_e an(\phi)$.
\end{corollary}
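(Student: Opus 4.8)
The plan is to lift Lemma~\ref{lem:anf-clause-conversion} from individual clauses to full disjunctions. Write $\phi = \bigvee\{C_1, \ldots, C_n\}$ with $n \geq 0$, where each $C_i$ is a pseudo-ANF clause; by Definition~\ref{def:pseudo-clause-conversion} we have $an(\phi) = \bigvee\{ct(C_1), \ldots, ct(C_n)\}$. First I would dispatch the structural claim: by Lemma~\ref{lem:anf-clause-conversion} each $ct(C_i)$ is an ANF clause, so $an(\phi)$ is a disjunction of finitely many ANF clauses, which is exactly the shape demanded by Definition~\ref{def:anf}. The degenerate case $n = 0$ gives $an(\phi) = \bigvee \emptyset = \logicfalse$, which is vacuously in ANF.

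Next I would establish $\phi \equiv_e an(\phi)$, i.e.\ $\den{\phi}_e = \den{an(\phi)}_e$. The key auxiliary observation is that, from the semantics of $\lor$ in Definition~\ref{def:efltl-semantics} together with Definition~\ref{def:indexed-prop-ops}, $\den{\bigvee\{\psi_1, \ldots, \psi_n\}}_e = \bigcup_{i=1}^{n} \den{\psi_i}_e$; this is a routine induction on $n$ with base case $\den{\logicfalse}_e = \emptyset$. Lemma~\ref{lem:anf-clause-conversion} supplies $\den{C_i}_e = \den{ct(C_i)}_e$ for every $i$, so
\[
\den{\phi}_e = \bigcup_{i=1}^{n} \den{C_i}_e = \bigcup_{i=1}^{n} \den{ct(C_i)}_e = \den{an(\phi)}_e,
\]
which is the claimed equivalence.

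I do not expect a genuine obstacle here: the corollary is a direct componentwise lift of the clause-level lemma, and the only point requiring a word of justification is that $\bigvee$ distributes over denotations and hence respects $\equiv_e$ argument by argument, which is immediate from the semantics. (Note that the subformula condition, Part~2 of Theorem~\ref{thm:anf}, is not asserted by this corollary and would be threaded through separately from the corresponding clauses of Lemmas~\ref{lem:guardedness} and~\ref{lem:pseudo-anf}.)
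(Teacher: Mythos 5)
Your proposal is correct and matches the paper's approach: the paper's proof of this corollary is simply an appeal to Lemma~\ref{lem:anf-clause-conversion}, and your argument is the same clause-by-clause lift, just with the routine distribution of $\bigvee$ over denotations spelled out explicitly.
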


\begin{proof}
Follows from Lemma~\ref{lem:anf-clause-conversion}.
\end{proof}

\noindent
We now have the machinery necessary to prove Theorem~\ref{thm:anf}.

\begin{proof}[Proof of Theorem~\ref{thm:anf}]
Let $\phi$ be an Extended Finite LTL formula in PNF.  We must show how to convert it into an ANF formula $\mathit{anf}(\phi) = \bigvee C_i$ such that $\phi \equiv_e \mathit{anf}(\phi)$ for each $C_i$ and each $\phi' \in \mathit{nf}(C_i)$, $\phi'$ is a subformula of $\phi$.

Define $\mathit{anf}(\phi) = an(pa(gt(\phi)))$; obviously $\mathit{anf}(\phi)$ is in ANF.  We now reason as follows.
$$
\begin{array}{rcl@{\;\;\;}p{2in}}
\phi
& \equiv_e
& gt(\phi)
& Lemma~\ref{lem:guardedness}; note $gt(\phi)$ is PNF
\\
& \equiv_e
& pa(gt(\phi))
& Lemma~\ref{lem:pseudo-anf}
\\
& \equiv_e
& an(pa(gt(\phi)))
& Corollary~\ref{cor:anf-formula-conversion}
\\
& \equiv_e
& \mathit{anf}(\phi)
& Definition of $\mathit{anf}$
\end{array}
$$
Thus $\mathit{anf}(\phi)$ is in ANF, and $\mathit{anf}(\phi) \equiv_e \phi$.

For the second part, we note that in the construction of $\mathit{anf}(\phi)$ we first compute $gt(\phi)$, which has the property that every subformula of form $\mathbf{N}\phi''$ is such that $\phi''$ is a subformula of $\phi$.  The definition of $pa$ guarantees that this property is preserved in $pa(gt(\phi))$.  Finally, the definition of $an$ ensures the desired result.
\end{proof}

\begin{example}[Conversion to ANF]
We close this section with an example showing how our conversion to ANF works.  Consider $\phi = a \tlU (b \tlR c)$; we show how to compute $an(pa(gt(\phi)))$.  Here is the result of $gt(\phi)$.
\begin{eqnarray*}
gt(\phi)	&=&	gt(a \tlU (b \tlR c))\\
			&=&	gt(b \tlR c) \lor (gt(a) \land \tlX \phi)\\
			&=&	(gt(c) \land (gt(b) \lor \tlWeakX (b \tlR c))) \lor (a \land \tlX \phi)\\
			&=&	(c \land (b \lor \tlWeakX (b \tlR c))) \lor (a \land \tlX \phi)
\end{eqnarray*}

\noindent
Note that this formula is guarded.  We now consider $pa(gt(\phi))$.
\begin{eqnarray*}
pa(gt(\phi)) 	&=& 	pa((c \land (b \lor \tlWeakX (b \tlR c))) \lor (a \land \tlX \phi))\\
				&=&	pa(((c \land b) \lor (c \land \tlWeakX (b \tlR c))) \lor (a \land \tlX \phi))\\
				&=&	\bigvee\{c \land b, c \land \tlWeakX (b \tlR c), a \land \tlX \phi\}
\end{eqnarray*}

\noindent
Note that two of the three clauses in $pa(gt(\phi))$ are already ANF clauses; the only that is not is $c \land b$.  This leads to the following.
\begin{eqnarray*}
\mathit{anf}(\phi)&=&	an(pa(gt(\phi)))\\
					&=&	an(\bigvee\{c \land b, c \land \tlWeakX (b \tlR c), a \land \tlX \phi\})\\
					&=&	\bigvee\{ct(c \land b), ct(c \land \tlWeakX (b \tlR c)), ct(a \land \tlX \phi)\}\\
					&=&	\bigvee\{c \land b \land \tlWeakX \logictrue, c \land \tlWeakX (b \tlR c), a \land \tlX \phi\}
\end{eqnarray*}

\noindent
Note that this formula is in ANF.  Also note that $ct(c \land b) = c \land b \land \tlWeakX \logictrue$ due to the fact that in pseudo-ANF clause $c \land b$ has no next-state pseudo-literals.  The definition of $ct$ ensures that $\tlWeakX \bigwedge \emptyset = \tlWeakX \logictrue$ is added to ensure that the result satisfies the syntactic requirements of being an ANF clause.

\end{example}

\section{A Tableau Construction for Finite LTL}\label{sec:tableau}

In this section we show how Finite LTL formulas may be converted into finite-state automata whose languages consist of exactly the sequences making the associated formula true.  Based on Lemma~\ref{lem:pnf} we know that any Finite LTL formula can be converted into an Extended Finite LTL formula in PNF, so in the sequel we show how to build finite automata from Extended Finite LTL formulas in PNF.  We begin by recalling the definitions of non-deterministic finite automata.

\begin{definition}[Non-deterministic Finite Automata (NFA)]\label{def:nfa}
\mbox{}
\begin{enumerate}
\item
A \emph{non-deterministic finite automaton} (NFA) is a tuple $(Q, \Sigma, q_I, \delta, F)$, where:
\begin{itemize}
\item
$Q$ is a finite set of \emph{states};
\item\label{nfa-i}
$\Sigma$ is a finite non-empty set of \emph{alphabet} symbols;
\item
$q_I \in Q$ is the \emph{start state};
\item
$\delta \subseteq Q \times \Sigma \times Q$ is the \emph{transition relation}; and
\item
$F \subseteq Q$ is the set of \emph{accepting states}.
\end{itemize}
\item\label{nfa-ii}
Let $M = (Q, \Sigma, q_I, \delta, F)$ be an NFA, let $q \in Q$, and let $w \in \Sigma^*$.  Then $q$ \emph{accepts $w$ in $M$} iff one of the following hold.
\begin{itemize}
\item
$w = \varepsilon$ and $q \in F$
\item
$w = \sigma w'$ for some $\sigma \in \Sigma$ and $w' \in \Sigma^*$, and there exists $(q,\sigma,q') \in \delta$ such that $q'$ accepts $w'$ in $M$.
\end{itemize}
\item\label{nfa-iii}
Let $M = (Q, \Sigma, q_I, \delta, F)$ be an NFA.  Then $L(M)$, the \emph{language} of $M$, is
$$
L(M) = \{ w \in \Sigma^* \mid q_I \textnormal{ accepts } w \textnormal{ in } M\}.
$$
\end{enumerate}
\end{definition}

\noindent
The following theorem formally states that for every Extended Finite LTL formula in PNF, there is an NFA whose language consists exactly of the sequences of states satisfying the formula.  The proof occurs later in this section.

\begin{theorem}[Existence of NFAs for Extended LTL]\label{thm:tableau}
Let $\phi \in \Phi^\AP_e$ be in PNF.  Then there is an NFA $M_\phi$ such that $L(M_\phi) = \den{\phi}_e$.
\end{theorem}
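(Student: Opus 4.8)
The plan is to carry out the classical tableau idea in the Finite-LTL setting, using the $\mathit{anf}$ transformation of Theorem~\ref{thm:anf} to compute one-step ``derivatives'' of a formula. Fix $\phi \in \Phi^\AP_e$ in PNF and let $\mathrm{sub}(\phi)$ be its (finite) set of subformulas; fix an enumeration of $\mathrm{sub}(\phi)$ so that for each $S \subseteq \mathrm{sub}(\phi)$ the conjunction $\bigwedge S$ is a well-defined formula. I would take the state set to be $Q = \{\,\bigwedge S \mid S \subseteq \mathrm{sub}(\phi)\,\}$, which is finite (its cardinality is at most $2^{|\mathrm{sub}(\phi)|}$); note each such $\bigwedge S$ is again in PNF, since subformulas of a PNF formula are PNF and conjunctions of PNF formulas are PNF, so Theorem~\ref{thm:anf} applies to every state. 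The alphabet is $\Sigma = 2^\AP$, which is finite and non-empty as Definition~\ref{def:nfa} requires; the start state is $\phi = \bigwedge\{\phi\}$. The transition relation and accepting set are read off from $\mathit{anf}$: writing $\mathit{anf}(\psi) = \bigvee_i C_i$ with $C_i = (\bigwedge \mathit{lits}(C_i)) \land \mathbf{N}_i (\bigwedge \mathit{nf}(C_i))$, put $(\psi, A, \bigwedge \mathit{nf}(C_i)) \in \delta$ whenever $A \sat_p \bigwedge \mathit{lits}(C_i)$, and declare $\psi$ accepting iff for some $i$, $\mathbf{N}_i = \tlWeakX$ and $\emptyset \sat_p \bigwedge \mathit{lits}(C_i)$; this is purely syntactic and hence effective. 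Take $M_\phi = (Q, 2^\AP, \phi, \delta, F)$.

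The first thing to verify is well-definedness, i.e.\ that $\delta$ never leaves $Q$: for $\psi = \bigwedge S$ with $S \subseteq \mathrm{sub}(\phi)$, each $\mathit{nf}(C_i)$ must again be a subset of $\mathrm{sub}(\phi)$. Theorem~\ref{thm:anf}(2) only gives $\mathit{nf}(C_i) \subseteq \mathrm{sub}(\psi)$, which is not literally contained in $\mathrm{sub}(\phi)$ because of the auxiliary conjunctions $\bigwedge S$ introduces; so I would instead trace $\mathit{anf} = an \circ pa \circ gt$ directly. Since $gt$ recurses transparently through $\land$, the formula $gt(\psi)$ is (the associated form of) $gt(s_1) \land \cdots \land gt(s_n)$ for $S = \{s_1,\ldots,s_n\}$, so every subformula of $gt(\psi)$ of the form $\mathbf N\phi'$ lies inside some $gt(s_k)$; by Lemma~\ref{lem:guardedness}(4) this forces $\phi' \in \mathrm{sub}(s_k) \subseteq \mathrm{sub}(\phi)$. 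The transformation $pa$ (classical DNF, treating pseudo-literals as literals) introduces no new $\mathbf N\phi'$ pseudo-literals, and $ct$ produces, for each pseudo-clause, exactly the ANF clause whose next-formulas are those very $\phi'$; hence $\mathit{nf}(C_i) \subseteq \mathrm{sub}(\phi)$ for every clause of $\mathit{anf}(\psi)$, so $\bigwedge \mathit{nf}(C_i) \in Q$. I expect this subformula-closure argument to be the main obstacle: everything else is bookkeeping, but here one must look inside the definitions of $gt$, $pa$ and $an$ rather than use Theorem~\ref{thm:anf} as a black box.

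Correctness then follows by induction on $|w|$ of the statement: \emph{for every $\psi \in Q$ and $w \in (2^\AP)^*$, $\psi$ accepts $w$ in $M_\phi$ iff $w \sat_e \psi$}; taking $\psi = \phi$ gives $L(M_\phi) = \den{\phi}_e$. In the base case $w = \varepsilon$, by Definition~\ref{def:nfa}(2) $\psi$ accepts $\varepsilon$ iff $\psi \in F$; using $\psi \equiv_e \mathit{anf}(\psi)$ (Theorem~\ref{thm:anf}(1)) and Lemma~\ref{lem:anf-satisfaction}(2), $\varepsilon \sat_e \psi$ iff $\varepsilon \sat_e C_i$ for some $i$, and since $\varepsilon \sat_e \tlWeakX \chi$ always holds while $\varepsilon \not\sat_e \tlX \chi$ never does, and $\varepsilon \sat_e \bigwedge \mathit{lits}(C_i)$ iff $\emptyset \sat_p \bigwedge \mathit{lits}(C_i)$ (Lemma~\ref{lem:vareps}, applicable since $\bigwedge \mathit{lits}(C_i) \in \Gamma^\AP$, together with Lemma~\ref{lem:extension}), this is exactly membership in $F$. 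In the inductive step $w = A w'$ with $A \in 2^\AP$, hence $|w| \geq 1$: by Definition~\ref{def:nfa}(2) and the definition of $\delta$, $\psi$ accepts $A w'$ iff there is a clause $C_i$ of $\mathit{anf}(\psi)$ with $A \sat_p \bigwedge \mathit{lits}(C_i)$ and $\bigwedge \mathit{nf}(C_i)$ accepting $w'$; by the induction hypothesis (all states, word $w'$ of length $|w|-1$) this is iff $A \sat_p \bigwedge \mathit{lits}(C_i)$ and $w' \sat_e \bigwedge \mathit{nf}(C_i)$, which by Lemma~\ref{lem:anf-satisfaction}(1) applied to $\pi = A w'$ (so $\pi_0 = A$, $\pi(1) = w'$, $|\pi| > 0$) is iff $A w' \sat_e C_i$. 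Hence $\psi$ accepts $A w'$ iff $A w' \sat_e C_i$ for some $i$, iff (Lemma~\ref{lem:anf-satisfaction}(2) and Theorem~\ref{thm:anf}(1)) $A w' \sat_e \psi$. This closes the induction, proving the theorem; combined with Lemma~\ref{lem:pnf} it yields an NFA for every Finite LTL formula as well.
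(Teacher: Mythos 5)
Your proposal is correct and is essentially the paper's own construction and proof: states are (conjunctions of) subsets of $S(\phi)$, transitions and acceptance are read off the clauses of $\mathit{anf}(\bigwedge q)$, well-formedness is secured through Lemma~\ref{lem:guardedness}(\ref{guardedness-iv}) (your tracing of $an \circ pa \circ gt$ is just a more explicit version of the paper's Lemma~\ref{lem:well-formed}), and correctness is the same induction on word length via Theorem~\ref{thm:anf} and Lemma~\ref{lem:anf-satisfaction}. The only cosmetic deviation is that you define the accepting set syntactically from the $\tlWeakX$-clauses of the ANF, whereas the paper takes $F_\phi = \{q \mid \varepsilon \sat_e \bigwedge q\}$ and notes its syntactic computability separately (Lemma~\ref{lem:empty-check}); the two sets coincide by exactly the base-case argument you give.
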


\subsection{The Construction}

In this section we describe our approach for building the NFA $M_\phi$ mentioned in Theorem~\ref{thm:tableau} from Extended Finite LTL formula $\phi$ in PNF.
We have been referring to this process as a tableau construction, and indeed it makes essential use of identities, such as those in Lemmas~\ref{lem:duality}--\ref{lem:unrolling}, that also underpin classical tableau constructions.
However, because of our use of ANF we are able to avoid some complexities of other tableau constructions, such as the need for maximally consistent subsets of formulas as automaton states.

In what follows we use $S(\phi)$ to refer to the set of (not necessarily proper) subformulas of $\phi$.  States in $M_\phi$ will be associated with subsets of $S(\phi)$, and defining accepting states will require checking if $\varepsilon \sat_e \phi'$ for arbitrary $\phi' \in S(\phi)$.  The next lemma establishes that this latter check can be computed on the basis of the syntactic structure of $\phi'$.

\begin{lemma}[Empty-sequence Check]\label{lem:empty-check}
Let $\phi \in \Phi^\AP_e$ be in PNF.  Then $\varepsilon \sat_e \phi$ iff one of the following hold.
\begin{enumerate}
\item
$\phi = \lnot a$ for some $a \in \AP$
\item
$\phi = \phi_1 \land \phi_2$, $\varepsilon \sat_e \phi_1$, and $\varepsilon \sat_e \phi_2$
\item
$\phi = \phi_1 \tlU \phi_2$ and $\varepsilon \sat_e \phi_2$
\item
$\phi = \phi_1 \lor \phi_2$ and either $\varepsilon \sat_e \phi_1$ or $\varepsilon \sat_e \phi_2$
\item
$\phi = \tlWeakX \phi'$
\item
$\phi = \phi_1 \tlR \phi_2$ and $\varepsilon \sat_e \phi_2$
\end{enumerate}
\end{lemma}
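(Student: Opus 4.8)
The plan is to argue directly from the semantics of Extended Finite LTL (Definition~\ref{def:efltl-semantics}), proceeding by a case analysis on the top-level shape of $\phi$. Because $\phi$ is in PNF, Definition~\ref{def:pnf} restricts its outermost connective to one of eight forms: $a$, $\lnot a$, $\tlX \phi'$, $\tlWeakX \phi'$, $\phi_1 \land \phi_2$, $\phi_1 \lor \phi_2$, $\phi_1 \tlU \phi_2$, or $\phi_1 \tlR \phi_2$. In each case I will unfold the clause defining $\sat_e$ and specialize it to $\pi = \varepsilon$, using $|\varepsilon| = 0$, $\varepsilon(0) = \varepsilon$, and the fact that $\varepsilon(j)$ is undefined for $j \geq 1$.

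For the two atomic shapes: $\varepsilon \sat_e a$ fails, since the clause for atoms requires $|\varepsilon| \geq 1$, so this shape can never make $\varepsilon$ a model and correspondingly does not appear on the right-hand side; dually $\varepsilon \sat_e \lnot a$ always holds, giving item~1. For the modal shapes with a single subformula: $\varepsilon \sat_e \tlX \phi'$ fails (the clause requires $|\pi| \geq 1$), matching the absence of $\tlX$ from the list, whereas $\varepsilon \sat_e \tlWeakX \phi'$ always holds, since the clause's first disjunct $|\pi| = 0$ is satisfied, giving item~5. The Boolean shapes $\land$ and $\lor$ reduce verbatim to items~2 and~4 by their defining clauses. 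Finally, for $\phi_1 \tlU \phi_2$ and $\phi_1 \tlR \phi_2$, specializing the quantifier ranges to $0 \leq j \leq |\varepsilon| = 0$ forces $j = 0$, at which point the inner condition over $0 \leq k < 0$ is vacuous; both clauses then collapse to the single requirement $\varepsilon(0) \sat_e \phi_2$, i.e.\ $\varepsilon \sat_e \phi_2$, yielding items~3 and~6.

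No genuine induction is needed: the right-hand conditions are phrased directly in terms of $\varepsilon \sat_e \phi_i$ for the immediate subformulas $\phi_i$, which is exactly the data that the semantic clauses provide. The only point demanding care is the appeal to PNF: it guarantees that no formula of the form $\lnot \phi$ with $\phi$ non-atomic occurs, so the only negations to consider are literals $\lnot a$, and it confirms that the $\tlX$ and atom cases genuinely cannot contribute $\varepsilon$ as a model, which is why those shapes are legitimately absent from the enumeration. Assembling the eight cases — discarding the two impossible ones and reading off the condition from each of the remaining six — establishes both directions of the biconditional simultaneously.
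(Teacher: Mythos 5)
Your proof is correct and follows the same route as the paper, which simply observes that the result is immediate from the definition of $\sat_e$; your write-up just makes explicit the case analysis over the eight PNF shapes, the specialization of the quantifier bounds to $|\varepsilon| = 0$, and the role of PNF in excluding negations of non-atomic formulas. Nothing further is needed.
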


\begin{proof}
Immediate from the definition of $\sat_e$.
\end{proof}

\noindent
We now formally define our tableau construction for $M_\phi$, an NFA that accepts exactly the finite sequences that satisfy $\phi$.
This is the key result of the paper.

\begin{definition}[The Tableau Construction for NFA $M_\phi$]\label{def:mphi}
Let $\phi \in \Phi^\AP_e$ be in PNF.  Then we define NFA $M_\phi = (Q_\phi, \Sigma_\AP, q_{I, \phi}, \delta_\phi, F_\phi)$ as follows.
\begin{itemize}
\item
$Q_\phi = 2^{S(\phi)}$
\item
$\Sigma_\AP = 2^\AP$
\item
$q_{I,\phi} = \{ \phi \}$  (Note that $ q_{I,\phi} = \{\phi\} \subseteq S(\phi)$, and thus $q_{I,\phi} \in Q_\phi$.)
\item
Let $q, q' \in Q_\phi$ (so $q, q' \subseteq S(\phi)$) and $A \in \Sigma_\AP$(so $A \subseteq \AP$).  Also let $\mathit{anf}(\bigwedge q) = \bigvee\{C_1, \ldots C_n\}$ be the ANF conversion of $\bigwedge q$.  Then $(q, A, q') \in \delta$ iff there exists $C_i$ such that:
\begin{itemize}
\item
$A \sat_p \bigwedge\textit{lits}(C_i)$; and
\item
$q' = \mathit{nf}(C_i)$.
\end{itemize}
\item
$F_\phi = \{q \in Q_\phi \mid \varepsilon \sat_e \bigwedge q \}$
\end{itemize}
\end{definition}

\noindent
Theorem~\ref{thm:tableau} states that the above construction is correct.  In the rest of this section, we will prove this claim.  We first establish the following useful lemma.

\begin{lemma}[Well-Formedness of $M_\phi$]\label{lem:well-formed}
Let $\phi \in \Phi^\AP_e$ be in PNF, and let $M_\phi = (Q_\phi, \Sigma_\AP, q_{I,\phi}, \delta_\phi, F_\phi)$. Fix arbitrary $q \in Q$, and let
$$
\mathit{anf}(\bigwedge q) = \bigvee C_i,
$$
Then for each $C_i$, $\mathit{nf}(C_i) \in Q_\phi$.
\end{lemma}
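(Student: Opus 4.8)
\noindent
Since $Q_\phi = 2^{S(\phi)}$, the claim amounts to showing $\mathit{nf}(C_i) \subseteq S(\phi)$ for each clause $C_i$ of $\mathit{anf}(\bigwedge q)$. Fix $q = \{\psi_1,\ldots,\psi_k\} \in Q_\phi$; by definition each $\psi_j$ is a subformula of $\phi$, and (as $\phi$ is in PNF) so is $\bigwedge q$. Recall from the proof of Theorem~\ref{thm:anf} that $\mathit{anf}(\bigwedge q) = an(pa(gt(\bigwedge q)))$. The plan is simply to trace where the ``next-state'' subformulas produced by this composite transformation come from, and to argue they are subformulas of some $\psi_j$, hence of $\phi$.

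First I would record a syntactic fact about $gt$: since $\bigwedge\{\psi_1,\ldots,\psi_m\} = (\bigwedge\{\psi_1,\ldots,\psi_{m-1}\}) \land \psi_m$ for $m \ge 2$ and $gt$ distributes over $\land$ (Definition~\ref{def:gt}), a routine induction on $|q|$ shows that $gt(\bigwedge q)$ is, up to association, the conjunction $gt(\psi_1) \land \cdots \land gt(\psi_k)$ (with the cases $|q| \le 1$ and $\bigwedge\emptyset = \logictrue$ immediate). Consequently every subformula of $gt(\bigwedge q)$ is either one of the conjunctions $gt(\psi_1)\land\cdots\land gt(\psi_m)$ or a subformula of some $gt(\psi_j)$; in particular every subformula of the form $\mathbf{N}\chi$ with $\mathbf{N}\in\{\tlX,\tlWeakX\}$ is a subformula of some $gt(\psi_j)$. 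Applying Lemma~\ref{lem:guardedness}(\ref{guardedness-iv}) to $\psi_j$ then gives that $\chi$ is a subformula of $\psi_j$, hence of $\phi$; that is, every such $\chi$ lies in $S(\phi)$.

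Next I would propagate this through $pa$ and $an$. By Lemma~\ref{lem:guardedness}(\ref{guardedness-i},\ref{guardedness-iii}), $gt(\bigwedge q)$ is guarded and in PNF, so it is a purely propositional combination (using only $\land$ and $\lor$) of ANF pseudo-literals; $pa$ is the classical DNF transformation over these pseudo-literals, which only rearranges and duplicates them, so every pseudo-literal of the form $\mathbf{N}\chi$ occurring in $pa(gt(\bigwedge q))$ already occurs as a subformula of $gt(\bigwedge q)$, whence $\chi \in S(\phi)$ by the previous step. Finally, for each pseudo-clause $C = \bigwedge P$ of $pa(gt(\bigwedge q))$, Definition~\ref{def:pseudo-clause-conversion} gives $ct(C) = (\bigwedge L(P)) \land \mathbf{N}(\bigwedge\{\chi_1,\ldots,\chi_i\})$, where $\{\mathbf{N}_1\chi_1,\ldots,\mathbf{N}_i\chi_i\} = N(P)$ are exactly the non-literal pseudo-literals of $C$; hence by Definition~\ref{def:anf}, $\mathit{nf}(ct(C)) = \{\chi_1,\ldots,\chi_i\} \subseteq S(\phi)$ (and $\mathit{nf}(ct(C)) = \emptyset$ when $C$ has no next-state pseudo-literals). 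Since the $C_i$ are precisely the clauses $ct(C)$, this yields $\mathit{nf}(C_i) \subseteq S(\phi)$, i.e.\ $\mathit{nf}(C_i) \in Q_\phi$.

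The main obstacle is exactly the mismatch exposed above: one cannot simply quote the second part of Theorem~\ref{thm:anf} with $\phi := \bigwedge q$, since that only guarantees the next-state formulas are subformulas of $\bigwedge q$, and $\bigwedge q$ has subformulas --- the partial conjunctions $\bigwedge\{\psi_1,\ldots,\psi_m\}$ --- that need not be subformulas of $\phi$. The decomposition $gt(\bigwedge q) = gt(\psi_1)\land\cdots\land gt(\psi_k)$ together with Lemma~\ref{lem:guardedness}(\ref{guardedness-iv}) is precisely what closes this gap, by showing that those spurious conjunctions never appear under a $\tlX$ or $\tlWeakX$.
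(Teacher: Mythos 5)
Your proof is correct and follows essentially the same route as the paper's: the paper's (very terse) argument likewise rests on Lemma~\ref{lem:guardedness}(\ref{guardedness-iv}) plus the observation that every next-state formula produced by the $an \circ pa \circ gt$ pipeline applied to $\bigwedge q$ is a subformula of $\phi$ itself. You have simply supplied the details the paper leaves implicit, including the worthwhile clarification that one cannot invoke Theorem~\ref{thm:anf}(2) with $\bigwedge q$ in place of $\phi$ because the partial conjunctions of $\bigwedge q$ are not in $S(\phi)$ --- and that these never occur under $\tlX$ or $\tlWeakX$.
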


\begin{proof}
Follows from Lemma~\ref{lem:guardedness}(\ref{guardedness-iv}) and the fact that every subformula of every $\phi' \in \mathit{nf}(C_i)$ of form $\textbf{N}\phi''$, $\phi_1' \tlU \phi_2'$ or $\phi_1' \tlR \phi_2'$ is also a subformula of $\phi$.
\end{proof}

\noindent
This lemma in effect says that every clause $C_{i}$ occurring in $\mathit{anf}(\bigwedge q)$ (recall $q$ is a set of subformulas of $\phi)$ gives rise to transitions between states of $M_\phi$, because the ``next-state'' formulas in such a clause involve only subformulas of $\phi$.

\begin{proof}[Proof of Theorem~\ref{thm:tableau}]
We now prove Theorem~\ref{thm:tableau} as follows.  Let $\phi \in \Phi^\AP_e$ and $M_\phi = (Q_\phi, \Sigma_\AP, q_{I,\phi}, \delta_\phi, F_\phi)$.  We recall that $\Sigma_\AP = 2^\AP$, and thus $(\Sigma_\AP)^* = (2^\AP)^*$.  Consequently, the words accepted by $M_\phi$ come from the same set as the sequences to interpret Extended Finite LTL formulas.  To emphasize this connection, we use $A \in \Sigma_\AP$ and $\pi \in (\Sigma_\AP)^*$ in the following.  We will in fact prove a stronger result:  for every $\pi \in (\Sigma_\AP)^*$ and $q \in Q$, $q$ accepts $\pi$ in $M_\phi$ iff $\pi \sat_e \bigwedge q$.  The desired result then follows from the fact that this statement holds in particular for the start state, $q_{I,\phi}$, that $\bigwedge q_{I,\phi} = \phi$, and that as a result, $L(M_\phi) = \den{\phi}_e$.

The proof proceeds by induction on $\pi$.  For the base case, assume that $\pi = \varepsilon$ and fix $q \in Q$.  We reason as follows.
$$
\begin{array}{l@{\;\textnormal{iff}\;}l@{\;\;\;}p{2in}}
\textnormal{$q$ accepts $\varepsilon$ in $M$}
& q \in F_\phi
& Definition of acceptance
\\
& \varepsilon \sat_e \bigwedge q
& Definition of $F_\phi$
\end{array}
$$

In the induction case, assume $\pi = A \pi'$ for some $A \in \Sigma_\AP$ (so $A \subseteq \AP$) and $\pi' \in (\Sigma_\AP)^*$.  The induction hypothesis states for any $q' \in Q$, $q'$ accepts $\pi'$ in $M_\phi$ iff $\pi' \sat_e \bigwedge q'$ (recall $q' \subseteq S(\phi)$).  Now fix $q \in Q$; we must prove that $q$ accepts $\pi$ in $M_\phi$ iff $\pi \sat_e \bigwedge q$.  We reason as follows.
\begin{flalign*}
& \pi \sat_e \bigwedge q
\\
& \text{iff}\; A \pi' \sat_e \bigwedge q
&& \text{$\pi = A \pi'$}
\\
& \text{iff}\; A \pi' \sat_e \mathit{anf}(\bigwedge q)
&& \text{Theorem~\ref{thm:anf}}
\\
& \text{iff}\; A \pi' \sat_e \bigvee C_i
&& \text{$\mathit{anf}(\bigwedge q) = \bigvee C_i$ in ANF}
\\
& \text{iff}\; A \pi' \sat_e C_i \;\text{some $i$}
&& \text{Lemma~\ref{lem:anf-satisfaction}(\ref{anf-satisfaction-ii})}
\\
& \text{iff}\; A  \sat_p \bigwedge \mathcal{L} \textnormal{ and } \pi' \sat_e \bigwedge\mathcal{F}
&& \text{Lemma~\ref{lem:anf-satisfaction}(\ref{anf-satisfaction-i}),}
\\
&
&& \text{$\mathcal{L} = \textit{lits}(C_i)$, $\mathcal{F} = \mathit{nf}(C)$}
\\
& \text{iff}\; A  \sat_p \bigwedge \mathcal{L} \text{ and } \pi' \sat_e \bigwedge q' \text{ some } q' \in Q_\phi
&& \text{Lemma~\ref{lem:well-formed}}
\\
& \text{iff}\; (q, A, q') \in \delta_\phi \text{ and } \pi' \sat_e \bigwedge q' \text{ some } q' \in Q_\phi
&& \text{Definition of $\delta_\phi$}
\\
& \text{iff}\; (q, A, q') \in \delta_\phi \textnormal{ and } q' \textnormal{ accepts } w' \textnormal{ in } M
&& \text{Induction hypothesis}
\\
& \text{iff}\; q \textnormal{ accepts } w \text{ in } M
&& \text{Definition~\ref{def:nfa}(\ref{nfa-ii})}
\end{flalign*}
\end{proof}

\subsection{Discussion of the Construction of $M_\phi$}\label{subsec:discussion}

We now discuss the $M_\phi$ construction, both from the standpoint of its complexity but also in terms of heuristics for improving the runtime of the construction as well as the size of the resulting NFAs.
We also give a brief comparison with the NFA construction for \ltlf outlined in~\cite{degiacomo:ijcai2013}.

\paragraph{Size of $|M_\phi|$}

The key drivers for the size of $M_\phi$ are the sizes of its state space $Q_\phi$ and of its transition relation $\delta_\phi$.  The next theorem characterizes these.

\begin{theorem}[Bounds on Size of $M_\phi$]\label{thm:size}
Let $\phi \in \Phi^\AP_e$ be in PNF, and let $M_\phi = (Q_\phi, \Sigma_\AP, q_{I,\phi}, \delta_phi, F_\phi$).  Then we have the following.
\begin{enumerate}
\item\label{size-i}
$|Q_\phi| \leq 2^{|\phi|}$
\item\label{size-ii}
$|\delta_\phi| \leq 4^{|\phi|} \cdot 2^{|\AP|}$
\end{enumerate}
\end{theorem}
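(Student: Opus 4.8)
The plan is to reduce both bounds to a single combinatorial fact: the set $S(\phi)$ of subformulas of $\phi$ satisfies $|S(\phi)| \le |\phi|$, where $|\phi|$ is the usual size measure (the number of symbol occurrences in $\phi$, equivalently the number of nodes in its parse tree). First I would establish this by structural induction on $\phi$. In the base case $\phi = a$ we have $S(\phi) = \{a\}$ and $|\phi| = 1$, so the bound holds. For a unary case such as $\phi = \tlX \phi'$ (and likewise $\lnot \phi'$ and $\tlWeakX \phi'$) we have $S(\phi) \subseteq \{\phi\} \cup S(\phi')$, so the induction hypothesis gives $|S(\phi)| \le 1 + |\phi'| = |\phi|$. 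For a binary case such as $\phi = \phi_1 \tlU \phi_2$ (and likewise $\land$, $\lor$, $\tlR$) we have $S(\phi) \subseteq \{\phi\} \cup S(\phi_1) \cup S(\phi_2)$, so $|S(\phi)| \le 1 + |\phi_1| + |\phi_2| \le |\phi|$.

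Given this, Part~\ref{size-i} is immediate: by Definition~\ref{def:mphi}, $Q_\phi = 2^{S(\phi)}$, so $|Q_\phi| = 2^{|S(\phi)|} \le 2^{|\phi|}$. For Part~\ref{size-ii}, recall from Definition~\ref{def:mphi} that $\delta_\phi \subseteq Q_\phi \times \Sigma_\AP \times Q_\phi$ and that $\Sigma_\AP = 2^\AP$, whence $|\Sigma_\AP| = 2^{|\AP|}$. Therefore $|\delta_\phi| \le |Q_\phi| \cdot |\Sigma_\AP| \cdot |Q_\phi| \le 2^{|\phi|} \cdot 2^{|\AP|} \cdot 2^{|\phi|} = 4^{|\phi|} \cdot 2^{|\AP|}$, using Part~\ref{size-i}.

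There is no real obstacle here; the only care needed is bookkeeping in the subformula-counting induction — in particular verifying that each connective contributes exactly one new subformula beyond those of its immediate subformulas, so that no overcounting inflates $|S(\phi)|$ past $|\phi|$. Everything else is a direct cardinality calculation from Definition~\ref{def:mphi}. If a sharper bound were wanted, one could replace the crude estimate $|\{q' \mid (q,A,q') \in \delta_\phi\}| \le |Q_\phi|$ by the number of ANF clauses appearing in $\mathit{anf}(\bigwedge q)$, but this refinement is unnecessary for the stated result.
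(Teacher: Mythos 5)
Your proposal is correct and follows essentially the same route as the paper: both parts are direct cardinality counts from Definition~\ref{def:mphi}, using $|S(\phi)| \le |\phi|$ for the state bound and the product $|Q_\phi|^2 \cdot 2^{|\AP|}$ for the transition bound. You merely spell out the subformula-counting induction that the paper leaves implicit, which is fine but not a different argument.
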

\begin{proof}
For the first statement, we note that there is a state in $M_\phi$ for each subset of $S(\phi)$, and that there are at most $2^{|\phi|}$ such subsets.  The second follows from the fact that each pair of states can have at most $2^{|\AP|}$ transitions between them.
\end{proof}

\noindent
It is worth noting that in the above result, the bound on the number of states is tight:  it is $2^{|\phi|}$, not e.g.\/ $2^{O(|\phi|)}$, which some tableau constructions for LTL yield.  Also note that if $\phi$ contains multiple instances of the same subformula, then $|S(\phi)| < |\phi|$; this explains the inequality in Statement (\ref{size-i}). 

\paragraph{Optimizing $M_\phi$}
The size results in Theorem~\ref{thm:size} are consistent with other tableau constructions; an automaton's size is in the worst case exponential in the size of the formula for which it is being constructed.
This worst-case behavior cannot be avoided in general, but it can often be mitigated heuristically for many formulas.
In what follows we consider several such heuristics.

\underline{On-the-fly Construction of $Q_\phi$.}
The construction in Definition~\ref{def:mphi} may be seen as pre-computing all possible states of $M_\phi$.
In practice many of these states are unreachable from the initial state; thus, adding them to $Q_\phi$ and then computing their outgoing transitions is unnecessary work.
One method for avoiding this work is to construct $Q_\phi$ in a demand-driven, or \emph{on-the-fly} manner.
Specifically, one starts with the state $q_{I,\phi}$ and adds this to $Q_\phi$.
Then one repeatedly does the following:  select a state $q$ in the current $Q_\phi$ whose transitions have not been computed, compute $q$'s transitions, adding states into $Q_\phi$ as needed so that each transition has a target in $Q_\phi$.
This process stops when transitions have been computed for all states in  $Q_\phi$.
The result of this strategy is that only states reachable from $q_{I,\phi}$ will be added into $Q_\phi$.

\underline{Symbolic Representation of Transitions.}
In Definition~\ref{def:mphi} transition  labels are represented concretely, as sets of atomic propositions.
One can instead allow transition labels that are \emph{symbolic}:  these labels have form $\gamma \in \Gamma^\AP_e$ for some propositional Extended LTL formula $\gamma$.
A transition labeled by such a $\gamma$ can be seen as summarizing all transitions in $M_\phi$ labeled by $A \subseteq \AP$ such that $A \sat_p \gamma$.
The construction given in Definition~\ref{def:mphi} suggests an immediate method for doing this:  rather than labeling transitions by $A \subseteq \AP$ such that $A \sat_p \bigwedge \textit{lits}(C_i)$, instead label a single transition by $\bigwedge \textit{lits}(C_i)$.
Representing transition labels symbolically in this manner also naturally allows multiple transitions to be grouped; the set of transitions from state $q$ to $q'$ can be combined into a singular ``edge'' in the automaton whose label is the boolean disjunction of the individual transition labels.  In some cases these disjunctions may in turn be reducible to more compact boolean formulas.

\underline{Relaxation of ANF.}
Our definition of ANF says that a formula is in ANF iff it has form $\bigvee C_i$, where each clause $C_i$ has form $(\bigwedge \mathcal{L}) \land \textbf{N}(\bigwedge \mathcal{F})$.  The method we give for converting formulas into ANF involves the use of a routine for converting formulas into DNF, which can itself be exponential.  We adopted this mechanism for ease of exposition, and also because in the worst case this exponential overhead is unavoidable.  However, requiring the propositional parts of clauses to be of form $\bigwedge \mathcal{L}$, where $\mathcal{L}$ consists only of literals, is unnecessarily restrictive:  all that is needed for the construction of $M_\phi$ is to require clauses to be of form $\gamma \land \mathbf{N} (\bigwedge \mathcal{F})$, where $\gamma \in \Gamma^\AP_e$ is a proposition formula in Extended Finite LTL.  Relaxing ANF in this manner eliminates the need for full DNF calculations in general, and can lead to time and space savings when  transitions are being represented symbolically.

\paragraph{Comparison with \ltlf Construction in~\cite{degiacomo:ijcai2013}}
We close this section with a brief comparison between the presented construction of $M_\phi$ and one for the logic \ltlf sketched in~\cite{degiacomo:ijcai2013}.
A primary focus in that paper was to consider the complexity of different decision problems related to \ltlf and another, more expressive logic, LDL$_f$, or Linear Dynamic Logic over finite traces.
Their results depend on automaton constructions, which they sketch in sufficient detail for the purposes of their complexity analyses.
In particular, the method they give for computing NFAs from \ltlf formulas relies first on a translation from \ltlf to LDL$_f$, then on a transformation from LDL$_f$ to alternating finite automata over words, then on known results in the literature for translating these alternating automata into traditional NFAs.
The number of states the NFA resulting from \ltlf formula $\phi$ is $2^{O(|\phi|)}$, with the invocation of the translations incurring constants that appear in the exponent.
In contrast, our construction yields a tight bound of $2^{|\phi|}$, with no constant in the exponent.
The syntactic link between \ltlf and NFA states also becomes obscured in their construction; in contrast we maintain an explicit link between NFA states and sets of Finite LTL formulas.  This link is especially important in our work on Finite LTL query checking~\cite{huang2020arxiv-qc}.

\section{Implementation and Empirical Results}
We have implemented our tableau construction as a C++ package.  The user specifies a formula $\phi \in \Phi^{\AP}_{e}$, and the corresponding NFA $M_{\phi}$ is constructed as defined in Section~\ref{sec:tableau}.
Our implementation uses the symbolic transition-label representation and on-the-fly construction of the set of states $Q_{\phi}$ discussed in~\ref{subsec:discussion}; transition labels are propositional formulas, and only states reachable from the initial state are added into $Q_\phi$.
Specifically, we maintain a set of states whose transitions have not  yet been computed; initially, this set contains only the (initial) state, which corresponds to exactly the set $\{\phi\}$.  We then repeatedly select a state from this set of states and compute the ANF representation of the conjunction of the formulas associated with that state.
Each of the disjuncts in this ANF representation defines a new (symbolic) transition for the automaton under construction.
These transitions and any new destination states are created, with all new destination states added to the set of states whose transitions have not yet been computed.
The process terminates when there are no states remaining that require the computation of their transitions.
Transitions between the same ordered pair of states are then collapsed into a single edge whose label is a disjunction of the transition labels of the individual transitions, which is then simplified.
Finally, the set of accepting states $F_{\phi}$ is determined syntactically through inspection of each state's corresponding formulas, using the method implied by Lemma~\ref{lem:empty-check}.

The Spot \cite{duret.16.atva2} platform (v.2.8.1) is used to handle the parsing of input formulas, and the Python package SymPy \cite{meurer:pcs2017} is employed to perform boolean simplification of the propositional edge labels.
To support proper parsing and construction of a Finite LTL formula, we added support for the $\tlWeakX$ (Weak Next) operator.
Additionally, several Spot-provided automatic formula rewrites are based on standard LTL identities  (such as $\tlX \logictrue \equiv \logictrue$) that do not hold under finite semantics; these were disabled.

\begin{figure}[!htp]
  \includegraphics[width=\textwidth]{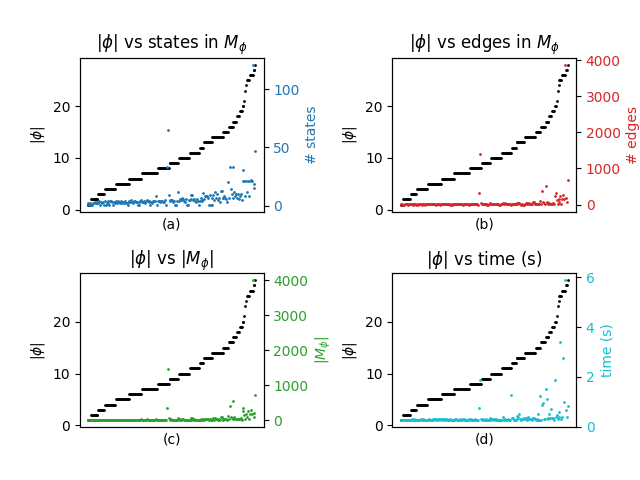}
  \caption{Summarized results of experiments.  Each plot contains a black dot for each of the formulas in the benchmark, which have been sorted in order of increasing formula complexity ($|\phi|$).  The colored dots, one per formula, in each of the four plots correspond to the following:
(a) states in automaton $M_{\phi}$;
  (b) edges in automaton $M_{\phi}$;
  (c) model size (sum of states and edges); and
  (d) computation time.
  }
  \label{fig:results-charts}
\end{figure}

To evaluate the implementation's performance we applied it on the benchmark set of 184 traditional LTL formulas (92 formulas and their negations) used by Duret-Lutz \cite{duret.14.ijccbs} to assess tools intended to construct automaton from traditional LTL.
As our semantics for Finite LTL differs from traditional LTL (and so too does the type of automaton that is produced), a direct comparison of our method to results of earlier uses of the benchmark for traditional LTL is not appropriate, so none is given.
For each formula of the benchmark, we calculated the complexity $|\phi|$ (the number of subformulas in $\phi$), the time required to perform the construction, and the number of states and edges in the resulting NFA.
Experiments were carried out on a single machine with an Intel Core i5-6600K (4 cores), with 32 GB RAM and a 64-bit version of GNU/Linux. 
A summarized set of the results are shown in Figure~\ref{fig:results-charts}, with formulas ordered by complexity.
Tables~\ref{tab:experimental-1} and~\ref{tab:experimental-2} in the appendix contain raw performance data for each formula of the benchmark.

Performing the construction generally completed in under one second for most formulas in the benchmark; the remainder completed in less than five seconds apiece.
As expected, formula complexity was found to be positively correlated with the size (states and transitions) of the constructed NFA.

\begin{toappendix}
\section{Experimental Results}

The following tables contain full performance data for the experiments described in the paper.
\begin{table}
  \small
\caption{Experimental results:  Formulas 0--91.}\label{tab:experimental-1}
\begin{minipage}{0.45\textwidth}
\begin{tabular}{|*{5}{r|}}
  \hline
  ID & $|\phi|$ & states & edges & time(s) \\
  \hline 
  0 & 2 & 1 & 1 & 0.273 \\
  1 & 3 & 2 & 3 & 0.265 \\
  2 & 5 & 4 & 7 & 0.274 \\
  3 & 6 & 4 & 7 & 0.311 \\
  4 & 5 & 2 & 3 & 0.302 \\
  5 & 6 & 3 & 6 & 0.302 \\
  6 & 7 & 3 & 6 & 0.275 \\
  7 & 8 & 4 & 8 & 0.271 \\
  8 & 8 & 4 & 11 & 0.323 \\
  9 & 9 & 5 & 16 & 0.277 \\
  10 & 1 & 2 & 3 & 0.273 \\
  11 & 2 & 1 & 1 & 0.267 \\
  12 & 7 & 3 & 5 & 0.301 \\
  13 & 8 & 4 & 11 & 0.278 \\
  14 & 6 & 5 & 11 & 0.303 \\
  15 & 7 & 4 & 9 & 0.306 \\
  16 & 10 & 4 & 11 & 0.286 \\
  17 & 11 & 5 & 13 & 0.281 \\
  18 & 7 & 2 & 4 & 0.273 \\
  19 & 8 & 3 & 5 & 0.283 \\
  20 & 23 & 8 & 29 & 0.294 \\
  21 & 24 & 21 & 149 & 0.384 \\
  22 & 25 & 12 & 76 & 0.572 \\
  23 & 26 & 21 & 150 & 0.392 \\
  24 & 27 & 19 & 187 & 0.655 \\
  25 & 28 & 47 & 672 & 0.820 \\
  26 & 1 & 1 & 0 & 0.262 \\
  27 & 2 & 2 & 3 & 0.265 \\
  28 & 4 & 4 & 7 & 0.270 \\
  29 & 5 & 4 & 7 & 0.274 \\
  30 & 4 & 1 & 1 & 0.267 \\
  31 & 5 & 3 & 6 & 0.270 \\
  32 & 6 & 3 & 6 & 0.280 \\
  33 & 7 & 4 & 8 & 0.273 \\
  34 & 6 & 2 & 4 & 0.317 \\
  35 & 7 & 5 & 16 & 0.278 \\
  36 & 5 & 3 & 5 & 0.270 \\
  37 & 6 & 4 & 11 & 0.272 \\
  38 & 6 & 4 & 7 & 0.306 \\
  39 & 7 & 4 & 7 & 0.304 \\
  40 & 10 & 6 & 15 & 0.305 \\
  41 & 11 & 9 & 56 & 0.388 \\
  42 & 8 & 4 & 11 & 0.290 \\
  43 & 9 & 4 & 8 & 0.311 \\
  44 & 8 & 4 & 11 & 0.308 \\
  45 & 9 & 5 & 16 & 0.287 \\
  \hline
\end{tabular}
\end{minipage}
  \hspace{1em}
\begin{minipage}{0.45\textwidth}
\begin{tabular}{|*{5}{r|}}
  \hline
  ID & $|\phi|$ & states & edges & time(s) \\
  \hline 
  46 & 4 & 2 & 4 & 0.272 \\
  47 & 5 & 2 & 3 & 0.304 \\
  48 & 10 & 5 & 11 & 0.311 \\
  49 & 11 & 6 & 13 & 0.288 \\
  50 & 7 & 3 & 7 & 0.288 \\
  51 & 8 & 3 & 6 & 0.273 \\
  52 & 12 & 4 & 10 & 0.335 \\
  53 & 13 & 6 & 14 & 0.316 \\
  54 & 18 & 7 & 34 & 0.514 \\
  55 & 19 & 8 & 27 & 0.341 \\
  56 & 10 & 5 & 9 & 0.273 \\
  57 & 11 & 9 & 41 & 0.309 \\
  58 & 11 & 5 & 11 & 0.269 \\
  59 & 12 & 7 & 21 & 0.342 \\
  60 & 16 & 7 & 18 & 0.279 \\
  61 & 17 & 33 & 515 & 1.500 \\
  62 & 13 & 8 & 36 & 0.428 \\
  63 & 14 & 7 & 22 & 0.304 \\
  64 & 13 & 8 & 40 & 0.398 \\
  65 & 14 & 9 & 42 & 0.353 \\
  66 & 9 & 5 & 10 & 0.277 \\
  67 & 10 & 7 & 19 & 0.321 \\
  68 & 11 & 6 & 15 & 0.288 \\
  69 & 12 & 5 & 9 & 0.279 \\
  70 & 15 & 7 & 18 & 0.280 \\
  71 & 16 & 33 & 365 & 0.893 \\
  72 & 13 & 8 & 31 & 0.494 \\
  73 & 14 & 5 & 10 & 0.275 \\
  74 & 17 & 12 & 76 & 1.116 \\
  75 & 18 & 10 & 30 & 0.334 \\
  76 & 10 & 6 & 24 & 0.329 \\
  77 & 11 & 4 & 8 & 0.270 \\
  78 & 14 & 12 & 58 & 0.339 \\
  79 & 15 & 8 & 24 & 0.295 \\
  80 & 14 & 7 & 28 & 0.394 \\
  81 & 15 & 7 & 19 & 0.283 \\
  82 & 16 & 14 & 82 & 0.962 \\
  83 & 17 & 8 & 25 & 0.305 \\
  84 & 26 & 22 & 268 & 2.777 \\
  85 & 27 & 15 & 66 & 0.402 \\
  86 & 7 & 4 & 13 & 0.288 \\
  87 & 8 & 3 & 6 & 0.281 \\
  88 & 13 & 7 & 22 & 0.297 \\
  89 & 14 & 10 & 39 & 0.307 \\
  90 & 9 & 3 & 7 & 0.284 \\
  91 & 10 & 4 & 9 & 0.276 \\
  \hline
\end{tabular}
\end{minipage}
\end{table}

\begin{table}
  \small
\caption{Experimental results:  Formulas 92--183.}\label{tab:experimental-2}
\begin{minipage}{0.45\textwidth}
\begin{tabular}{|*{5}{r|}}
  \hline
  ID & $|\phi|$ & states & edges & time(s) \\
  \hline 
  92 & 15 & 6 & 22 & 0.515 \\
  93 & 16 & 10 & 40 & 0.314 \\
  94 & 20 & 21 & 236 & 1.884 \\
  95 & 21 & 21 & 117 & 0.463 \\
  96 & 9 & 4 & 13 & 0.292 \\
  97 & 10 & 3 & 7 & 0.284 \\
  98 & 16 & 7 & 22 & 0.322 \\
  99 & 17 & 10 & 39 & 0.357 \\
  100 & 11 & 3 & 7 & 0.298 \\
  101 & 12 & 4 & 10 & 0.278 \\
  102 & 18 & 6 & 22 & 0.725 \\
  103 & 19 & 10 & 40 & 0.365 \\
  104 & 25 & 21 & 236 & 3.390 \\
  105 & 26 & 21 & 162 & 0.981 \\
  106 & 19 & 5 & 15 & 0.274 \\
  107 & 20 & 31 & 315 & 0.395 \\
  108 & 25 & 8 & 29 & 0.294 \\
  109 & 26 & 121 & 3875 & 5.891 \\
  110 & 1 & 2 & 3 & 0.268 \\
  111 & 2 & 2 & 3 & 0.282 \\
  112 & 2 & 3 & 6 & 0.263 \\
  113 & 3 & 4 & 11 & 0.312 \\
  114 & 5 & 4 & 11 & 0.275 \\
  115 & 6 & 3 & 6 & 0.272 \\
  116 & 6 & 3 & 5 & 0.270 \\
  117 & 7 & 1 & 0 & 0.310 \\
  118 & 3 & 1 & 0 & 0.275 \\
  119 & 4 & 4 & 11 & 0.267 \\
  120 & 2 & 2 & 2 & 0.272 \\
  121 & 3 & 3 & 6 & 0.277 \\
  122 & 9 & 1 & 0 & 0.269 \\
  123 & 10 & 6 & 13 & 0.272 \\
  124 & 6 & 1 & 0 & 0.310 \\
  125 & 7 & 3 & 5 & 0.307 \\
  126 & 13 & 1 & 0 & 0.283 \\
  127 & 14 & 5 & 13 & 0.294 \\
  128 & 2 & 2 & 3 & 0.277 \\
  129 & 3 & 2 & 3 & 0.302 \\
  130 & 8 & 5 & 9 & 0.272 \\
  131 & 9 & 1 & 0 & 0.273 \\
  132 & 9 & 4 & 8 & 0.279 \\
  133 & 10 & 1 & 0 & 0.268 \\
  134 & 14 & 7 & 26 & 0.293 \\
  135 & 15 & 3 & 5 & 0.273 \\
  136 & 13 & 9 & 28 & 0.321 \\
  137 & 14 & 3 & 5 & 0.275 \\
  \hline
\end{tabular}
\end{minipage}
  \hspace{1em}
\begin{minipage}{0.45\textwidth}
\begin{tabular}{|*{5}{r|}}
  \hline
  ID & $|\phi|$ & states & edges & time(s) \\
  \hline 
  138 & 4 & 2 & 4 & 0.307 \\
  139 & 5 & 2 & 3 & 0.299 \\
  140 & 6 & 3 & 5 & 0.271 \\
  141 & 7 & 2 & 4 & 0.277 \\
  142 & 7 & 4 & 12 & 0.273 \\
  143 & 8 & 1 & 0 & 0.277 \\
  144 & 4 & 1 & 0 & 0.288 \\
  145 & 5 & 3 & 5 & 0.305 \\
  146 & 4 & 4 & 8 & 0.302 \\
  147 & 5 & 2 & 2 & 0.267 \\
  148 & 8 & 33 & 312 & 0.753 \\
  149 & 9 & 12 & 52 & 0.347 \\
  150 & 13 & 1 & 0 & 0.323 \\
  151 & 14 & 13 & 66 & 0.357 \\
  152 & 13 & 1 & 0 & 0.349 \\
  153 & 14 & 13 & 75 & 0.366 \\
  154 & 10 & 1 & 0 & 0.274 \\
  155 & 11 & 6 & 13 & 0.272 \\
  156 & 6 & 1 & 0 & 0.266 \\
  157 & 7 & 4 & 6 & 0.270 \\
  158 & 2 & 3 & 5 & 0.274 \\
  159 & 3 & 3 & 5 & 0.276 \\
  160 & 3 & 1 & 0 & 0.268 \\
  161 & 4 & 4 & 11 & 0.277 \\
  162 & 4 & 3 & 5 & 0.269 \\
  163 & 5 & 4 & 11 & 0.318 \\
  164 & 14 & 7 & 13 & 0.305 \\
  165 & 15 & 20 & 90 & 1.237 \\
  166 & 4 & 1 & 0 & 0.267 \\
  167 & 5 & 2 & 4 & 0.271 \\
  168 & 6 & 4 & 6 & 0.273 \\
  169 & 7 & 4 & 10 & 0.278 \\
  170 & 4 & 3 & 5 & 0.266 \\
  171 & 5 & 2 & 4 & 0.310 \\
  172 & 5 & 1 & 0 & 0.315 \\
  173 & 6 & 5 & 20 & 0.276 \\
  174 & 4 & 2 & 4 & 0.273 \\
  175 & 5 & 3 & 6 & 0.275 \\
  176 & 10 & 5 & 9 & 0.272 \\
  177 & 11 & 4 & 7 & 0.287 \\
  178 & 11 & 1 & 0 & 1.261 \\
  179 & 12 & 11 & 25 & 0.275 \\
  180 & 7 & 8 & 23 & 0.275 \\
  181 & 8 & 65 & 1395 & 1.866 \\
  182 & 7 & 1 & 1 & 0.305 \\
  183 & 8 & 9 & 50 & 0.311 \\
  \hline
\end{tabular}
\end{minipage}
\end{table}
\end{toappendix}

\section{Conclusion}

This paper has given a tableau-based method for constructing nondeterministic finite automata (NFAs) from formulas in a version of Linear Temporal Logic (LTL) whose formulas are intepreted with respect to finite, rather than infinite, sequences of states.  It first introduces the logic, Finite LTL, under consideration, defines its syntax and semantics, and establishes that it is strictly more expressive than the logic \ltlf~\cite{degiacomo:ijcai2013}.  This latter result is due to the fact, that in contrast to \ltlf, the empty sequence is allowed in the semantics of Finite LTL.  The paper then gives a series of syntactic transformations that are used to rewrite Finite LTL formulas in \emph{automaton normal form} (ANF).  The states and transitions of the NFA of a given Finite LTL formula are extracted from the ANF equivalent of the formula, with each state being associated with a set of subformulas of the original formula.  We show that the resulting NFA has at most $2^{|\phi|}$ states, where $\phi$ is the formula from which the automaton is constructed.  The description of an implementation is then given, and experimental results reported on a benchmark of 184 formulas given in the literature.  Although our implementation is lightly optimized, each formula's automaton is computed in fewer than five seconds, with most taking less than a second.

As ongoing work, we have used this construction as a basis for \emph{Finite LTL query checking}~\cite{huang2020arxiv-qc}.  In Finite LTL query checking, one is given a finite set of finite sequences, and a Finite LTL query, or formula with a missing subformula; the goal of query checking is to compute all of the solutions for the missing subformula that make the resulting completed formula true for all the given finite sequences.  The motivation for such query-checking is to mine temporal properties based on the queries from the finite-length sequences; these can then be used for system understanding and other analysis tasks.  Our approach relies heavily on the construction in this paper to compute NFAs, as it needs access to the formulas / queries associated with a given automaton state.

For future work, we would like to extend our construction to other linear-time logics.  In particular, the linear-time mu-calculus~\cite{vardi1988temporal} over finite sequences is a natural candidate to consider due to its expressiveness and associated ability to uniformly encode a wide range of other linear-time logics, including the logic LDL$_f$ of~\cite{degiacomo:ijcai2013}.


\bibliography{local}

\begin{thebibliography}{10}

\bibitem{alur1993model}
Rajeev Alur, Costas Courcoubetis, and David Dill.
\newblock Model-checking in dense real-time.
\newblock {\em Information and Computation}, 104(1):2--34, 1993.

\bibitem{aziz1995usually}
Adnan Aziz, Vigyan Singhal, Felice Balarin, Robert~K Brayton, and Alberto~L
  Sangiovanni-Vincentelli.
\newblock It usually works: The temporal logic of stochastic systems.
\newblock In {\em International Conference on Computer-Aided Verification},
  pages 155--165. Springer, 1995.

\bibitem{bauer2010comparing}
Andreas Bauer, Martin Leucker, and Christian Schallhart.
\newblock Comparing {LTL} semantics for runtime verification.
\newblock {\em Journal of Logic and Computation}, 20(3):651--674, 2010.

\bibitem{chan:cav2000}
William Chan.
\newblock Temporal-logic queries.
\newblock In E.~Allen Emerson and A.~Prasad Sistla, editors, {\em International
  Conference on Computer-Aided Verification}, volume 1855 of {\em Lecture Notes
  in Computer Science}, pages 450--463. Springer, 2000.

\bibitem{ashok1981alternation}
Ashok~K. Chandra, Dexter~C. Kozen, and Larry~J. Stockmeyer.
\newblock Alternation.
\newblock {\em Journal of the ACM}, 28(1):114--133, January 1981.

\bibitem{chockler:hsvt2011}
Hana Chockler, Arie Gurfinkel, and Ofer Strichman.
\newblock Variants of {LTL} query checking.
\newblock In Sharon Barner, Ian Harris, Daniel Kroening, and Orna Raz, editors,
  {\em Hardware and Software: Verification and Testing}, pages 76--92, Berlin,
  Heidelberg, 2011. Springer Berlin Heidelberg.

\bibitem{courveur:fm1999}
Jean-Michel Couvreur.
\newblock On-the-fly verification of linear temporal logic.
\newblock In Jeannette~M. Wing, Jim Woodcock, and Jim Davies, editors, {\em
  FM'99 --- Formal Methods}, pages 253--271. Springer Verlag, 1999.

\bibitem{degiacomo:aaai2014}
Giuseppe De~Giacomo, Riccardo De~Masellis, and Marco Montali.
\newblock Reasoning on {LTL} on finite traces: Insensitivity to infiniteness.
\newblock In {\em AAAI Conference on Artificial Intelligence}, AAAI'14, pages
  1027--1033. AAAI Press, 2014.

\bibitem{degiacomo:ijcai2013}
Giuseppe De~Giacomo and Moshe~Y. Vardi.
\newblock Linear temporal logic and linear dynamic logic on finite traces.
\newblock In {\em International Joint Conference on Artificial Intelligence},
  IJCAI '13, pages 854--860. AAAI Press, 2013.

\bibitem{duret.14.ijccbs}
Alexandre Duret-Lutz.
\newblock {LTL} translation improvements in {S}pot 1.0.
\newblock {\em International Journal on Critical Computer-Based Systems},
  5(1/2):31--54, March 2014.

\bibitem{duret.16.atva2}
Alexandre Duret-Lutz, Alexandre Lewkowicz, Amaury Fauchille, Thibaud Michaud,
  Etienne Renault, and Laurent Xu.
\newblock Spot 2.0 --- a framework for {LTL} and $\omega$-automata
  manipulation.
\newblock In {\em International Symposium on Automated Technology for
  Verification and Analysis}, volume 9938 of {\em Lecture Notes in Computer
  Science}, pages 122--129. Springer, October 2016.

\bibitem{emerson1986sometimes}
E.~Allen Emerson and Joseph~Y. Halpern.
\newblock ``{S}ometimes'' and ``not never'' revisited: On branching versus
  linear time temporal logic.
\newblock {\em Journal of the ACM}, 33(1):151--178, 1986.

\bibitem{fionda:aaai2016}
Valeria Fionda and Gianluigi Greco.
\newblock The complexity of {LTL} on finite traces: Hard and easy fragments.
\newblock In {\em AAAI Conference on Artificial Intelligence}, AAAI'16, pages
  971--977. AAAI Press, 2016.

\bibitem{gerevini2009deterministic}
Alfonso~E Gerevini, Patrik Haslum, Derek Long, Alessandro Saetti, and Yannis
  Dimopoulos.
\newblock Deterministic planning in the {Fifth} {International} {Planning}
  {Competition}: {PDDL3} and experimental evaluation of the planners.
\newblock {\em Artificial Intelligence}, 173(5-6):619--668, 2009.

\bibitem{huang:avocs2017}
Samuel Huang and Rance Cleaveland.
\newblock Query checking for linear temporal logic.
\newblock In Laure Petrucci, Cristina Seceleanu, and Ana Cavalcanti, editors,
  {\em Critical Systems: Formal Methods and Automated Verification}, pages
  34--48, Cham, 2017. Springer International Publishing.

\bibitem{huang2020arxiv-qc}
Samuel Huang and Rance Cleaveland.
\newblock Temporal-logic query checking over finite data streams, 2020.
\newblock arXiv:2006.03751, submitted for publication.

\bibitem{kozen1983results}
Dexter Kozen.
\newblock Results on the propositional $\mu$-calculus.
\newblock {\em Theoretical Computer Science}, 27(3):333--354, 1983.

\bibitem{kress2009temporal}
H.~{Kress-Gazit}, G.~E. {Fainekos}, and G.~J. {Pappas}.
\newblock Temporal-logic-based reactive mission and motion planning.
\newblock {\em IEEE Transactions on Robotics}, 25(6):1370--1381, 2009.

\bibitem{li:cav2019}
Jianwen Li, Moshe~Y. Vardi, and Kristin~Y. Rozier.
\newblock Satisfiability checking for mission-time {LTL}.
\newblock In Isil Dillig and Serdar Tasiran, editors, {\em International
  Conference on Computer-Aided Verification}, pages 3--22. Springer
  International Publishing, 2019.

\bibitem{li:arxiv2014}
Jianwen Li, Lijun Zhang, Geguang Pu, Moshe~Y. Vardi, and Jifeng He.
\newblock {LTL}$_f$ satisfiability checking.
\newblock {\em CoRR}, abs/1403.1666, 2014.

\bibitem{maggi2011runtime}
Fabrizio~Maria Maggi, Michael Westergaard, Marco Montali, and Wil~M.P. van~der
  Aalst.
\newblock Runtime verification of {LTL}-based declarative process models.
\newblock In {\em International Conference on Runtime Verification}, pages
  131--146. Springer, 2011.

\bibitem{meurer:pcs2017}
Aaron Meurer, Christopher~P. Smith, Mateusz Paprocki, et~al.
\newblock {SymPy}: Symbolic computing in {Python}.
\newblock {\em PeerJ Computer Science}, 3:e103, January 2017.

\bibitem{pnueli:sfcs77}
Amir {Pnueli}.
\newblock The temporal logic of programs.
\newblock In {\em Symposium on Foundations of Computer Science}, pages 46--57,
  October 1977.

\bibitem{rosu:rv2016}
Grigore Ro{\c{s}}u.
\newblock Finite-trace linear temporal logic: Coinductive completeness.
\newblock In Yli{\`e}s Falcone and C{\'e}sar S{\'a}nchez, editors, {\em Runtime
  Verification}, pages 333--350. Springer International Publishing, 2016.

\bibitem{vardi1988temporal}
Moshe~Y. Vardi.
\newblock A temporal fixpoint calculus.
\newblock In {\em Symposium on Principles of Programming Languages}, pages
  250--259, 1988.

\bibitem{vardi:automata1986}
Moshe~Y. Vardi and Pierre Wolper.
\newblock An automata-theoretic approach to automatic program verification.
\newblock In {\em Symposium on Logic in Computer Science}, pages 322--331. IEEE
  Computer Society, 1986.

\bibitem{wolper1985tableau}
Pierre Wolper.
\newblock The tableau method for temporal logic: An overview.
\newblock {\em Logique et Analyse}, pages 119--136, 1985.

\end{thebibliography}
\end{document}